\let\oldnl\nl
\newcommand{\nonl}{\renewcommand{\nl}{\let\nl\oldnl}}
\definecolor{mygreen}{RGB}{10,170,100}
\definecolor{myred}{RGB}{10,170,100}
\renewcommand{\epsilon}{\varepsilon}
\DeclareMathOperator{\E}{\ensuremath{\normalfont \textbf{E}}}
\DeclareMathOperator{\opt}{\textsc{Opt}}
\DeclareMathOperator{\al}{\mathcal{A}}
\newcommand{\Xt}{\widetilde{X}}
\newcommand{\smparagraph}[1]{\vspace{0.1cm}\noindent\textbf{#1}}
\newcommand{\hiddencomment}[1]{}
\newcommand{\mc}[1]{\ensuremath{\mathcal{#1}}}
\newcommand{\G}{\mathcal{G}}
\newcommand{\tDelta}{\widetilde{\Delta}}
\DeclarePairedDelimiter\card{\lvert}{\rvert}
\newcommand{\mG}{\mathcal{G}}
\newcommand{\mA}{\mathcal{A}}
\newcommand{\mB}{\mathcal{B}}
\newcommand{\qq}{\bm{q}^-}
\newcommand{\bone}{\mathbf{1}}
\newcommand{\floor}[1]{\left\lfloor #1 \right\rfloor}
\newcommand{\RGMIS}{\ensuremath{\mathsf{RGMIS}}}
\newcommand{\GMIS}{\ensuremath{\mathsf{GMIS}}}
\newcommand{\TMIS}{\ensuremath{\mathsf{TMIS}}}
\newcommand{\isValid}{\ensuremath{\mathsf{isValid}}}
\newcommand{\isInMIS}{\ensuremath{\mathsf{isInMIS}}}
\newcommand{\isInMatching}{\ensuremath{\mathsf{isInMatching}}}
\DeclareMathOperator{\MM}{MM}
\newcommand{\poly}{\mathrm{poly}}
\newcommand{\quasipoly}{\mathrm{quasipoly}}
\crefname{lemma}{Lemma}{Lemmas}
\crefname{theorem}{Theorem}{Theorems}
\crefname{result}{Result}{Results}
\crefname{property}{Property}{Properties}
\crefname{claim}{Claim}{Claims}
\crefname{definition}{Definition}{Definitions}
\crefname{observation}{Observation}{Observations}
\crefname{proposition}{Proposition}{Propositions}
\crefname{assumption}{Assumption}{Assumptions}
\crefname{line}{Line}{Lines}
\crefname{figure}{Figure}{Figures}
\crefname{equation}{}{}
\crefname{section}{Section}{Sections}
\crefname{appendix}{Appendix}{Appendices}
\crefname{algCounter}{Algorithm}{Algorithms}
\Crefname{algCounter}{Algorithm}{Algorithms}
\newtheorem{theorem}{Theorem}
\newtheorem{result}{Result}
\newtheorem{lemma}{Lemma}[section]
\newtheorem{definition}[lemma]{Definition}
\newtheorem{claim}[lemma]{Claim}
\newtheorem{remark}[lemma]{Remark}
\newtheorem{assumption}[lemma]{Assumption}
\definecolor{mylightgray}{RGB}{240,240,240}
\algnewcommand{\IIf}[2]{\textbf{if} #1 \textbf{then} #2}
\algnewcommand{\EndIIf}{\unskip\ \algorithmicend\ \algorithmicif}
\algnewcommand{\IElse}[1]{\textbf{else} #1}
\newenvironment{graytbox}{
\par\addvspace{0.1cm}
\begin{tcolorbox}[width=\textwidth,
                  boxsep=5pt,
                  left=1pt,
                  right=1pt,
                  top=2pt,
                  bottom=2pt,
                  boxrule=1pt,
                  arc=0pt,
                  colback=mylightgray,
                  colframe=black,
                  ]
}{
\end{tcolorbox}
}
\newenvironment{whitetbox}{
\par\addvspace{0.1cm}
\begin{tcolorbox}[width=\textwidth,
                  boxsep=5pt,
                  left=1pt,
                  right=1pt,
                  top=2pt,
                  bottom=2pt,
                  boxrule=1pt,
                  arc=0pt,
                  colframe=black,
                  colback=white
                  ]
}{
\end{tcolorbox}
}
\newcounter{algCounter}
\renewcommand{\paragraph}{%
  \@startsection{paragraph}{4}%
  {\z@}{10pt}{-1em}%
  {\normalfont\normalsize\bfseries}%
}
\title{Stochastic Matching via In-n-Out Local Computation Algorithms}
\author{
Amir Azarmehr \and 
Soheil Behnezhad \and
Alma Ghafari \and
Ronitt Rubinfeld
}
\date{}
\begin{document}

\maketitle

\thispagestyle{empty}

\begin{abstract}
{\setlength{\parskip}{0.1cm}

\newcommand{\abscite}[1]{{\color{gray}#1}}

Consider the following {\em stochastic matching} problem. We are given a known graph $G=(V, E)$. An unknown subgraph $G_p = (V, E_p)$ is realized where $E_p$ includes every edge of $E$ independently with some probability $p \in (0, 1]$. The goal is to query a sparse subgraph $H$ of $G$, such that the realized edges in $H$ include an approximate maximum matching of $G_p$.

\smallskip\smallskip
This problem has been studied extensively over the last decade due to its numerous applications in kidney exchange, online dating, and online labor markets. For any fixed $\epsilon > 0$, \abscite{[BDH STOC'20]} showed that any graph $G$ has a subgraph $H$ with $\quasipoly(1/p) = (1/p)^{\poly(\log(1/p))}$ maximum degree, achieving a $(1-\epsilon)$-approximation. A major open question is the best approximation achievable with $\poly(1/p)$-degree subgraphs. A long line of work has progressively improved the approximation in the $\poly(1/p)$-degree regime from .5 \abscite{[BDH+ EC'15]} to .501 \abscite{[AKL EC'17]}, .656 \abscite{[BHFR SODA'19]}, .666 \abscite{[AB SOSA'19]}, .731 \abscite{[BBD SODA'22]} (bipartite graphs), and most recently to .68 \abscite{[DS '24]}. 

\smallskip\smallskip
In this work, we show that a $\poly(1/p)$-degree subgraph can obtain a $(1-\epsilon)$-approximation for any desirably small fixed $\epsilon > 0$, achieving the best of both worlds. 

}

\smallskip\smallskip
Beyond its quantitative improvement, a key conceptual contribution of our work is to connect {\em local computation algorithms} (LCAs) to the stochastic matching problem for the first time. 
While prior work on LCAs mainly focuses on their {\em out-queries} (the number of vertices probed to produce the output of a given vertex), our analysis also bounds the {\em in-queries} (the number of vertices that probe a given vertex). We prove that the outputs of LCAs with bounded in- and out-queries ({\em in-n-out LCAs} for short) have limited correlation, a property that our analysis crucially relies on and might find applications beyond stochastic matchings.
\end{abstract}

\clearpage

{
\clearpage
\hypersetup{hidelinks}
\vspace{1cm}
\renewcommand{\baselinestretch}{0.1}
\setcounter{tocdepth}{2}
\tableofcontents{}
\thispagestyle{empty}
\clearpage
}

\setcounter{page}{1}
\section{Introduction}

We study the {\em stochastic matching} problem. This is a natural {\em graph sparsification} problem that has been studied extensively over the last decade \cite{blumetal-EC15,AKL16,AKL17,YM18,BR18,soda19,sosa19,sagt19,BlumDHPSS-OR20,BehnezhadDH20,BehnezhadD-FOCS20,BehnezhadBD-SODA22,DerakhshanSaneian-ArXiv23} due to its various applications from kidney exchange \cite{BlumDHPSS-OR20} to online dating and labor markets \cite{BR18}.

\smparagraph{Problem Definition:} Given a graph $G=(V, E)$ and a parameter $p \in (0, 1]$, let $G_p=(V, E_p)$ be a random subgraph of $G$ that includes each edge $e \in E$ independently with probability $p$ (we say $e \in E$ is {\em realized} iff $e \in E_p$).\footnote{In this work, we solve a generalization of this problem where edges may have different realization probabilities.} The goal is to select a subgraph $H$ of $G$ without knowing edge realizations such that $(i)$ $H$ is sparse, and $(ii)$ the \underline{realized} edges of $H$ have as large of a matching as the \underline{realized} edges of the whole graph $G$ in expectation. Formally, using $\mu(\cdot)$ to denote maximum matching size, we desire a sparse choice of $H$ that achieves a large {\em approximation ratio} defined as:
    $$
        \E[\mu(H \cap G_p)]/\E[\mu(G_p)].
    $$

Without the sparsity property $(i)$, one can choose $H = G$ and achieve an exact solution (i.e., approximation ratio 1) trivially. Another extreme solution is to take $H$ to be a maximum matching of $G$ which is extremely sparse, but this only achieves a $p$ approximation as only $p$  fraction of the matching is realized in expectation. A simple argument shows that to obtain any constant approximation (independent of $p$), subgraph $H$ needs to have $\Omega(1/p)$ average degree as otherwise most vertices in $H \cap G_p$ will not have any edges and thus remain unmatched. But can we keep the degrees in $H$ independent of the size of $G$, and obtain a good approximation? This is precisely the question studied in the literature of the stochastic matching problem.

\smparagraph{Motivation:} This setting is motivated by applications where every edge of $G$ has a chance of failure and detecting such failures---referred to as {\em querying} the edge---is time-consuming or costly. In such cases, instead of querying every edge in $G$, one can (non-adaptively) query the edges of a much sparser subgraph $H$ and still identify a large matching. In kidney exchange, edges of $G$ correspond to donor/patient pairs that are {\em potentially} eligible for a kidney transplant (some pairs can be ruled out based on available information such as blood types). But to fully verify the eligibility of an edge, one has to perform the costly operation of mixing the bloods of the donor/patient pairs. The stochastic matching problem can be used to minimize such operations while maximizing the number of kidney transplants. We refer interested readers to the paper of \citet{BlumDHPSS-OR20} for an extensive overview of this application of stochastic matching in kidney exchange.

\smparagraph{Prior Work:} The stochastic matching problem is part of a broader class of problems that aim to sparsify graphs while preserving their various properties under random edge failures. Minimum spanning trees and matroids were studied by \citet*{GoemansV-SODA04} in this exact setting more than two decades ago. Many other graph properties have been studied since then including minimum vertex cover \cite{BehnezhadBD-SODA22}, various packing problems \cite{YM18,YM19}, and shortest paths metric problems such as Steiner tree and the traveling salesperson problem (TSP) \cite{Vondrak07}. 

The first work to study matchings in this setting was that of \citet{blumetal-EC15} who proved every graph $G$ admits a subgraph $H$ of maximum degree $\poly(1/p)$, achieving a $(1/2-\epsilon)$ approximation for any fixed $\epsilon > 0$. The approximation was improved in a long and beautiful line of work which led to the discovery of many unexpected connections between stochastic matchings and other areas of TCS. In particular, after a sequence of works \cite{AKL16,AKL17,soda19,AssadiB19}, \citet*{AssadiB19} showed that a $(2/3-\epsilon)$-approximation can be obtained with a graph of maximum degree $\poly(1/p)$, building on connections to dynamic graph algorithms. It was already observed by \citet*{AKL16} (see also \cite{BehnezhadDH20}) that 2/3-approximation is a natural barrier for the problem based on a connection to Ruzsa-Szemer\'edi graphs, a problem in combinatorics which is now known to have applications across various subfields of TCS \cite{GoelKK12,AlonMS12,BehnezhadG-FOCS24,AssadiK-ArXiv,AssadiBKL-STOC23,FischerLNRRS02}. This barrier was finally broken 5 years ago by \citet*{BehnezhadDH20} who showed, building on a connection to distributed algorithms, that any graph $G$ admits a subgraph $H$ of maximum degree  $\quasipoly(1/p) = (1/p)^{\poly\log(1/p)}$ (but still independent of the size of $G$), achieving a $(1-\epsilon)$-approximation for any fixed $\epsilon > 0$. See \Cref{table:prior-work} for an overview of these bounds.

 To summarize, we know that any graph $G$ admits a subgraph $H$ of max-degree $\poly(1/p)$ achieving a $(2/3-\epsilon)$-approximation \cite{AssadiB19,soda19}, and one of maximum degree $\quasipoly(1/p)$ achieving a $(1-\epsilon)$-approximation \cite{BehnezhadDH20} for any fixed $\epsilon > 0$. A major open problem of the area is:
\begin{center}
    {\em Does every graph $G$ admit a $\poly(1/p)$-degree subgraph achieving a $(1-\epsilon)$-approximation?}
\end{center}

More recently, \cite{BehnezhadBD-SODA22,DerakhshanSaneian-ArXiv23} made progress towards this open problem by showing that it is possible to break the 2/3-approximation with a $\poly(1/p)$-degree choice of $H$. The former obtains a .73-approximation provided that the graph is bipartite \cite{BehnezhadBD-SODA22}, and the latter obtains a .68-approximation for general graphs \cite{DerakhshanSaneian-ArXiv23}. Nonetheless, the open problem above, and equivalently, the best approximation achievable with $\poly(1/p)$ per-vertex queries remain wide open.

In this paper, we answer the open question above affirmatively by proving the following.

\begin{graytbox}
    \begin{result}[formalized as \cref{thm:main}]\label{res:main}
        For every fixed $\epsilon > 0$ and every $p \in (0, 1]$, every graph $G$ has a subgraph $H$ of maximum degree $(1/p)^{\exp\left((1/\epsilon) \ln (1/\epsilon)\right)} = \poly(1/p)$ such that
        $$
            \E[\mu(H \cap G_p)] \geq (1-\epsilon) \cdot \E[\mu(G_p)].
        $$
    \end{result}
\end{graytbox}

\begin{table}
    \centering
    \renewcommand{\arraystretch}{1.2} 
    \setlength{\tabcolsep}{10pt} 
    \begin{tabular}{|l|l|c|l|}
    \hline
        \textbf{Reference} & \textbf{Approximation} & \textbf{Degree} & \textbf{Graph} \\ \hline
        \cite{blumetal-EC15,AKL16} & $1/2-\epsilon$ & \multirow{6}{*}{$\poly(1/p)$} & General \\ \cline{1-2} \cline{4-4}
        \cite{AKL17} & $.52$ & & General \\ \cline{1-2} \cline{4-4}
        \cite{soda19} & $.65$ & & General \\ \cline{1-2} \cline{4-4}
        \cite{AssadiB19} & $2/3-\epsilon$ & & General \\ \cline{1-2} \cline{4-4}
        \cite{BehnezhadBD-SODA22} & .73 & & Bipartite \\ \cline{1-2} \cline{4-4}
        \cite{DerakhshanSaneian-ArXiv23} & .68 & & General \\ \hline\hline
        \cite{BehnezhadDH20} & $1-\epsilon$ & $\quasipoly(1/p) = (1/p)^{\poly\log(1/p)}$ & General\\ \hline\hline
        \textbf{This Work} & $1-\epsilon$ & $\poly(1/p)$ & General \\ \hline
     
    \end{tabular}
    \caption{Overview of prior bounds. Here, $\epsilon > 0$ can be any arbitrarily small constant.}
    \label{table:prior-work}
\end{table}

In order to prove \cref{res:main}, we present a novel analysis of an extremely simple and elegant algorithm of the literature introduced first by \citet*{BehnezhadFHR19} for constructing $H$. The algorithm (see \Cref{alg:main}) draws $\poly(1/p)$ independent realizations of the graph and picks an arbitrary maximum matching of each, then takes $H$ to be the union of these matchings. It is clear from the construction that $H$ will have $\poly(1/p)$ maximum degree. It remains to prove that it obtains a $(1-\epsilon)$-approximation. This is what our work focuses on.

Our approximation analysis crucially relies on {\em local computation algorithms} (LCAs) introduced by \citet{AlonRVX12} and \citet{RubinfeldTVX11}, which we adapt to the stochastic matching problem for the first time. An LCA for a graph problem (e.g. graph coloring) does not return the whole output but rather provides an oracle that upon querying a vertex $v$, returns its part of the solution (say the color of $v$). While LCAs are typically used to solve problems on massive graphs, we use them merely for our approximation analysis and in an entirely different way, to bound correlation.

Traditionally, the main measure of complexity for LCAs has been their {\em out-queries}, which is the number of vertices probed to produce the output of a given vertex. One of the main conceptual contributions of our work is to motivate the study of {\em in-queries} for LCAs as well. Informally, the in-query of a vertex $v$ is the number of vertices $u$ that probe $v$ to produce their output (see \cref{sec:lca-def} for formal definitions of in- and out-queries). In \cref{sec:lca-overview}, we provide an overview of how LCAs with bounded in- and out-queries (which we call {\em in-n-out LCAs} for short) have limited correlation in their outputs. This limited correlation is crucial for our analysis and proving \cref{res:main}. We believe that the techniques we develop in studying in-n-out LCAs  (particularly our generic results of \cref{sec:lca-def}) may find applications beyond the stochastic matching problem (see \cref{sec:perspective}).

\section{Our Techniques}\label{sec:overview}

\subsection{Key Tool: Independence via In-n-Out LCAs}\label{sec:lca-overview}

\smparagraph{Background on LCAs:} LCAs are a type of sublinear-time algorithms that are particularly useful for problems where the output is too large to return completely. Our focus in this work will be particularly on LCAs for graph problems. For example, an LCA for graph coloring does not return the color of every vertex at once, but rather returns the color of each vertex $v$ upon query, after probing parts of the input graph (say the local neighborhood of $v$). Importantly, an LCA should produce consistent outputs even if its memory state is erased after each query. This prohibits the LCA from using previous queries to guide its output for future queries, a property that makes such algorithms particularly useful; see \cite{AlonRVX12}.

\smparagraph{LCAs for Independence:} We use LCAs for a completely different purpose than the motivation highlighted above. In particular, \Cref{alg:main} that we employ for constructing subgraph $H$ is not an LCA, but we use LCAs to analyze its approximation ratio. As we soon overview in \cref{sec:analysis-overview}, the key building block of our analysis is to convert an arbitrary distribution $\mc{D}$ over matchings of a random subgraph $G_p$ to another distribution $\mc{D}'$ of matchings of $G_p$, such that $(i)$ almost every vertex has almost the same marginal probability of getting matched in $\mc{D}$ and $\mc{D}'$, and additionally $(ii)$ most vertices are matched almost independently in $\mc{D}'$ (a property that the original distribution $\mc{D}$ may not have). We achieve this by modifying distribution $\mc{D}$ using an LCA that we design for our specific problem. To prove the independence property, we need to bound not just the {\em out-queries} of our LCA which is the standard measure of LCA complexity, but also its {\em in-queries}. Let us define these terms first.

\smparagraph{In-n-Out Queries and Independence:} Traditionally, the main measure of complexity for LCAs has been their {\em out-queries}. Roughly speaking, this is the number of vertices of the graph that a vertex $v$ probes in order to provide its own output (see \cref{sec:lca-def} for the formal definition). This is natural: the out-query of a vertex corresponds to the time needed to produce the output for that vertex and thus it is natural to minimize this quantity. However, our analysis in this work also relies on bounding the {\em in-queries} for LCAs. The in-query of a vertex $v$, informally speaking here (wih the formal definition presented in \cref{sec:lca-def}), is the number of other vertices $u$ in the graph that would query $v$ to provide their own output.

Let us now discuss the connection between in-n-out queries of LCAs and independence. Suppose that each vertex $v$ of the graph has a private random tape $r_v$ that the LCA learns about the moment that it probes $v$, and suppose that this is the only source of randomization available to the LCA. Now take two vertices $u$ and $v$, and let $Q^+(v)$ and $Q^+(u)$ denote the set of vertices that $u$ and $v$ respectively probe (and thus learn their random tapes) in order to produce their own outputs. It is clear that if we always have $Q^+(v) \cap Q^+(u) = \emptyset$, then the outputs of $u$ and $v$ must be independent. Therefore, bounding out-queries (which is the size of set the $Q^+(\cdot)$) is helpful for bounding independence. But note that it is not enough. Consider a hypothetical scenario where $|Q^+(v)|=1$ for every vertex $v$, but all all vertices query the same vertex $w$. This way $Q^+(v) \cap Q^+(u) = \{w\}$ for all pairs $u, v$ and thus we cannot argue any independence for any pair. A useful observation here is that in-queries of $w$, which is the size of the set $Q^-(w)$ of vertices that query $w$, is extremely large. Generally, it can be shown that if $|Q^+(v)| \leq Q^+$ and $|Q^-(v)| \leq Q^-$ for {\em every} vertex $v$, then every vertex $v$ is independent of all but at most $Q^+ \cdot Q^-$ other vertices. This is because the set $\{u \mid \exists w \in Q^+(v) \text{ s.t.} u \in Q^-(w)\}$ has size at most $Q^+ \cdot Q^-$ and includes all vertices $u$ that are not independent of $v$ (i.e., $Q^+(v) \cap Q^+(u) \not= \emptyset$).

We note that two main problems arise when applying this argument to achieve independence via LCAs. First, the set $Q^+(v)$ is usually not deterministic, and the random tape of each vertex that $v$ probes guides its choice of the next vertices to probe. Therefore, the requirement of having $Q^+(v) \cap Q^-(u) = \emptyset$ with probability 1 (needed to argue full independence) is too strong. Second, the independence argument above relies on bounding the in- and out-queries of {\em all} vertices at the same time. This seems hard to achieve, and we can only bound the {\em expected} size of $Q^-(v)$ for every vertex $v$. To resolve the first issue, instead of full independence, we focus on almost independence (which we measure by bounding the total variation distance to full independence). This helps because even if $Q^+(v)$ and $Q^+(u)$ intersect, but with a small probability, we can still argue that the outputs of $u$ and $v$ are almost independent. To resolve the second issue, we prove a general lemma (\cref{lem:correlated_ub}) that bounding in-n-out queries in expectation is sufficient for bounding dependencies of outputs. We emphasize that this result regarding independence guarantees via LCAs is generic (i.e., not specific to our application of stochastic matchings) and may find other applications. For this reason, we present it in \cref{sec:lca-def} as a standalone result.

\subsection{Overview of Our Approximation Analysis}\label{sec:analysis-overview}

Let us now get back to our stochastic matching problem and overview how we use LCAs to analyze the approximation ratio achieved by subgraph $H$. 

Let us first recall the construction of $H$ that we discussed earlier. For a parameter $R$ (which is $\poly(1/p)$ in our case), the algorithm first draws $R$ realizations $\mG_1, \mG_2, \ldots, \mG_R$ of $G$. These realizations are independent of each other and of the actual realization $G_p$.
Then, for each $\mG_i$ a maximum matching $M_i = \MM(\mG_i)$ is computed, where $\MM$ can be any fixed maximum matching algorithm.
Finally, $H$ is taken to be the union of $M_i$ for $1 \leq i \leq R$.
Observe that $H$ has a maximum degree of at most $R$.
The crux of our paper is centered around lower bounding the expected size of the maximum realized matching of $H$, and relating it to that of $G$.

To analyze the approximation ratio, we construct a large fractional matching $y$ in $H_p = H \cap G_p$ (which we show also satisfies the blossom inequalities, hence has no integrality gap).
For an edge $e \in E$, let $q_e$ be the probability that $e \in \MM(G_p)$.
Note that $\card{q} = \sum_e q_e = \E[\mu(G_p)]$ is the optimal value against which the approximation ratio is defined.
We partition the edges into two groups based on their matching probability $q_e$ and a threshold  $\tau < \epsilon p$ that we specify later:
(1) the \emph{crucial} edges $C$ which have a large matching probability $q_e \geq \tau$,
and (2) the \emph{non-crucial} edges $N$ which have a small matching probability $q_e < \tau$.

To construct the fractional matching $y$, we treat the crucial and non-crucial edges completely separately as they have very different properties. The nice property of crucial edges is that so long as $R$ is sufficiently larger than $1/\tau$, almost all crucial edges  appear in $H$ by its construction. On the other hand, the nice property of non-crucial edges is that they have small $q_e$. We heavily rely on this property in constructing the fractional matching on the non-crucial edges, which we essentially show can be used interchangeably. 

First, a fractional matching $f$ is obtained on the non-crucial edges and an integral matching $M_C$ is constructed in $C_p = C \cap G_p$ (the set of realized crucial edges) using an LCA.
Then, the two are combined to obtain a set of values $x$ on the edges of $H_p$ as follows.
For all edges $e \in M_C$, the value of $x_e$ is set to $1$, and for the edges adjacent to $M_C$, the value is set to $0$.
For all the other edges, the value of $x_e$ is set to $f_e$ and scaled up. The resulting set of values $x$ is not a fractional matching but satisfies similar constraints in expectation.
We use a Rounding Lemma (\cref{lem:rounding}) to convert $x$ to the desired fractional matching $y$.

Let us now discuss how the integral matching $M_C$ is constructed and how LCAs help with it. Consider the distribution $\mc{D}$ of matchings on the set $C$ defined as $\MM(G_p) \cap C$. Namely, draw a realization, and take its crucial edges in the maximum matching. The problem with this distribution is that whether a vertex $v$ is matched or not in it, can be vastly correlated with other vertices (recall that $\MM(\cdot)$ is an arbitrary maximum matching algorithm). However, in order to argue that non-crucial edges can combine well with this matching it is necessary \cite{BehnezhadDH20,AKL17} to draw the matching $M_C$ with independence. Therefore, we design an in-n-out LCA that transforms $\mc{D}$ to another distribution $\mc{D}'$ that matches vertices with (almost) the same probability but in addition satisfies independence.
 
The most significant difference between our analysis and that of \cite{BehnezhadDH20} is the construction $M_C$.
In their paper, $M_C$ satisfies three conditions:
(1) $\Pr[v \in V(M_C)] \leq \Pr[v \in V(\MM(G_p) \cap C)]$ for all vertices $v \in V$, 
(2) $\E[\card{M_C}] \geq (1-\epsilon) \E[\card{\MM(G_p) \cap C}]$,
and $(3)$ for all vertices $u, v \in V$ that are further than $\lambda = \Theta_\epsilon(\log \Delta_C)$ from each other in $C$, the events $u \in V(M_C)$ and $v \in V(M_C)$ are independent.
Here, $\Delta_C$ denotes the maximum degree in $C$.
The third property is crucial in proving that not too much of $f$ is lost when it is combined with $M_C$.
They guarantee this property by using a LOCAL algorithm for $M_C$ that runs in $O_\epsilon(\log \Delta_C)$ rounds.
As a result, for each vertex $v$, the event $v \in V(M_C)$ can be correlated with $\Delta_C^{O_\epsilon(\log \Delta_C)}$ others.
This quasi-polynomial dependence on $\Delta_C$, causes the quasi-polynomial dependence of $R$ on $1/p$.

In contrast, we use an LCA to construct $M_C$.
The out-query of the LCA is bounded deterministically by $\Delta_C^{\poly(1/\epsilon)}$.
However, LCAs do not benefit from the same independence guarantees of LOCAL algorithms.
Therefore, we need to quantify the amount of dependence between events of form $v \in V(M_C)$, thereby bounding how much the values of $x$ can deviate from the case where all the events are independent.
In particular, we show for every vertex $v$, that the expected number of in-queries is also at most $\Delta_C^{\poly(1/\epsilon)}$, and as a result the expected size of its correlated set $\Psi(v)$ (intuitively, the set of vertices the output of which is correlated with $v$) is at most only $\Delta_C^{\poly(1/\epsilon)}$. This yields an $R$ that depends only polynomially on $1/p$.

Now, we give an overview of the LCA algorithm for $M_C$. It has a recursive structure. We use $\mB(C_p, r)$ to denote its output matching where $r$ is recursion depth.
For the final output $M_C$, the recursion level is set to a sufficiently large $\poly(1 / \epsilon)$.
In the base case $\mB(C_p, 0)$, the output is simply an empty set. Each level is obtained by augmenting the matching of the previous level.
We defer the details of the recursion to the body of the paper. On a high level, it operates as follows:
\begin{enumerate}
    \item A set of augmenting hyperwalks are defined based on $\mB(C_p, r - 1)$. 
    \item An approximate maximal independent set of the hyperwalks is chosen to be applied to $\mB(C_p, r - 1)$.
    \item Applying the hyperwalks yields $\mB(C_p, r)$.
\end{enumerate}

For an edge $e$, this recursion can be implemented as an LCA as follows: (1) the algorithm checks whether $e \in \mB(C_p, r-1)$, and (2) the algorithm goes over all the hyperwalks containing $e$ and checks whether they are applied to $\mB(C_p, r-1)$ by invoking an LCA for maximal independent set on a graph where the vertices correspond to the hyperwalks.
Recall that the augmenting hyperwalks are defined based on $\mB(C_p, r-1)$. Therefore, for a hyperwalk $W$, the LCA has to check whether $W$ is a valid hyperwalk on the fly by checking $e' \in \mB(C_p, r-1)$ for all $e' \in W$.
The number of out-queries can be bounded deterministically, by bounding the number of out-queries for the MIS LCA.
The number of in-queries requires a more delicate analysis.
To do so, for each level $r$, we go over all the possible random tapes of the algorithm and bound the number of queries made to each edge on level $r-1$ by a counting argument.

\section{Bounding Correlation via In-n-Out LCAs}

In \cref{sec:lca-def}, we formally define LCAs, their in- and out-queries, and a notion of {\em correlated set} that we later show relates to in- and out-queries. In \cref{sec:imp-c-set} we show how bounding the correlated set results in almost independence of the output of most vertices. Finally, in \cref{sec:bounding-c-set}, we present a generic method of translating (expected) upper bounds on in- and out-queries to bounds on the correlated set. 

We hope that the tools developed in this section find applications beyond the stochastic matching problem. As such, we have kept all the definitions and statements of this section independent of stochastic matchings.

\subsection{Definitions}\label{sec:lca-def}

In this section, we start by giving a definition of local computation algorithms (LCAs) for graphs introduced in \cite{RubinfeldTVX11,AlonRVX12}. We start by discussing deterministic LCAs, then focus on randomized ones.

Let $G=(V, E)$ be an input graph. We say $\mc{A}$ is a {\em local computation algorithm} (LCA) that computes a property $f$ of the graph (e.g.\ a coloring), if upon querying a vertex $v$, $\mc{A}$ returns the output $f(v)$ of vertex $v$ (e.g.\ the color of $v$). In this work, we assume that the LCA $\mc{A}$ has adjacency list access to the graph. Namely, the LCA can specify any vertex $u$ and any integer $i$, and receives the $i$-th neighbor of vertex $u$ sorted arbitrarily in a list, provided that $u$ has at least $i$ neighbors. We define the following two complexity measures for any vertex $v$:

\begin{graytbox}
\textbf{Out-Queries} $Q^+(v)$:  The set of vertices for which at least one edge is discovered when computing the output of $v$.\\[0.1cm]

\textbf{In-Queries} $Q^-(v)$: The set of vertices $u$ such that $v \in Q^+(u)$.
\end{graytbox}

Until now, we have only discussed deterministic LCAs. For randomized LCAs, we assume that every vertex $v \in V$ has its own {\em private} tape of randomness $r(v)$. We assume that the LCA learns $r(v)$ the moment that it discovers an edge $(v, u)$ (this edge may have been discovered either by querying $u$ or $v$). We note there are some LCAs in the literature that benefit from public randomness, but our focus in this paper is only on LCAs that rely on private randomness.

It has to be noted that for randomized LCAs, sets $Q^+(v)$ and $Q^-(v)$ are not deterministic, and can be affected by the random tapes of the vertices. In particular, after discovering a vertex $u$, the LCA may choose to query different vertices depending on the random tape $r(u)$.

As discussed earlier, our main motivation for studying in-queries in addition to out-queries is to bound correlations among the outputs. Towards this, we define the following notion of the ``correlated set'' of a vertex and then relate it to independence.

\begin{graytbox}
\begin{itemize}[leftmargin=15pt]
    \item \textbf{Correlated Set} $\Psi(v)$:
    
    This is the set of vertices $u$ such that $Q^+(u) \cap Q^+(v)$ is non-empty.
\end{itemize}
\end{graytbox}

Similar to sets $Q^+(v)$, $Q^-(v)$, the set $\Psi(v)$ is also random for randomized LCAs, and may change after changing the random tapes of the vertices.

For brevity, we use $q^+(v), q^-(v)$, and $\psi(v)$ to respectively denote $|Q^+(v)|, |Q^-(v)|$, and $|\Psi(v)|$.

\newcommand{\cov}[0]{\textsc{Cov}}
\newcommand{\var}[0]{\textsc{Var}}

\subsection{Consequences of Bounding the Correlated Set}\label{sec:imp-c-set}\label{sec:bounding-c-set}

\begin{definition}
    Given probability measures $P$ and $Q$ on the same space $\Omega$, their total variation distance is
    $$
    d_{TV}(P, Q) = \frac{1}{2} \sum_{x \in \Omega} \card{P(x) - Q(x)} 
    = \max_{A \subseteq \Omega} P(A) - Q(A)
    = \min_{\gamma \in \Pi(P, Q)} \Pr_{(X, Y) \sim \gamma}[X \neq Y],
    $$
    where $\gamma$ corresponds to a coupling of $P$ and $Q$.
    We also use $d_{TV}(X, Y)$ for two random variables $X$ and $Y$ to denote the total variation distance between their distributions.
\end{definition}

\begin{claim} \label{clm:difference-by-dtv}
    Take two random variables $X, Y \in \Omega$ and any function $f : \Omega \to [0, M]$. Then, it holds
    $$
    \card{\E[f(X)] - \E[f(Y)]} \leq M \cdot d_{TV}(X, Y).
    $$
\end{claim}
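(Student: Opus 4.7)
The plan is to prove this standard fact via the coupling characterization of total variation distance, which is already listed as one of the equivalent definitions in the claim's preceding definition.

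Concretely, I would first invoke the minimum-coupling form $d_{TV}(X,Y) = \min_{\gamma \in \Pi(P,Q)} \Pr_{(X,Y)\sim\gamma}[X\neq Y]$ to fix an optimal coupling $\gamma$ achieving the minimum. Under this coupling the marginals of $X$ and $Y$ are unchanged, so $\E[f(X)] = \E_\gamma[f(X)]$ and $\E[f(Y)] = \E_\gamma[f(Y)]$. Then I would write
\[
\E[f(X)] - \E[f(Y)] = \E_\gamma[f(X) - f(Y)] = \E_\gamma\!\left[(f(X)-f(Y))\cdot \mathbf{1}[X\neq Y]\right],
\]
where the second equality uses that $f(X)-f(Y)=0$ on the event $\{X=Y\}$. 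Taking absolute values and using $|f(X)-f(Y)|\leq M$ (since $f$ takes values in $[0,M]$) yields
\[
|\E[f(X)] - \E[f(Y)]| \leq M \cdot \Pr_\gamma[X\neq Y] = M \cdot d_{TV}(X,Y),
\]
which is the claim.

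An alternative route, which I would mention briefly if there is space, avoids couplings and uses the layer-cake representation $f(x) = \int_0^M \mathbf{1}[f(x)>t]\,dt$ together with the set-based characterization $d_{TV}(P,Q) = \max_{A\subseteq\Omega}(P(A)-Q(A))$: integrating $\Pr[f(X)>t] - \Pr[f(Y)>t] \leq d_{TV}(X,Y)$ over $t\in[0,M]$ gives the same bound. I do not anticipate any real obstacle here — the only subtlety is remembering that both signs need to be controlled (hence the absolute value), which is handled automatically by the symmetric definition of $d_{TV}$; the coupling proof is essentially two lines and is the cleanest presentation.
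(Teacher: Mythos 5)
Your coupling argument is the same as the paper's: both fix an optimal coupling, observe that $f(X)-f(Y)$ vanishes on $\{X=Y\}$, and bound the remaining contribution by $M\cdot\Pr[X\neq Y]=M\cdot d_{TV}(X,Y)$. The only cosmetic difference is that you express the split via an indicator while the paper conditions on the event $\{X\neq Y\}$ and its complement; these are equivalent.
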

\begin{proof}
    Take a coupling $\gamma$ of $X$ and $Y$ that minimizes the probability $\Pr_{(X, Y) \sim \gamma}[X \neq Y]$, and let $A$ be the event that $X$ and $Y$ are not equal.
    Note that by definition $\Pr[A] = d_{TV}(X, Y)$.
    We have
    \begin{align*}
    \card{\E[f(X)] - \E[f(Y)]}
    &= \card{\E_{(X, Y) \sim \gamma}[f(X) - f(Y)]} \\
    &\leq \Pr[A] \cdot \card{\E_{(X, Y) \sim \gamma}[f(X) - f(Y) \mid A]} + \Pr[\bar{A}] \cdot \card{\E_{(X, Y) \sim \gamma}[f(X) - f(Y) \mid \bar{A}]}\\
    &\leq d_{TV}(X, Y) \cdot M,
    \end{align*}
    where the last inequality follows from the fact that $\card{f(x) - f(y)} \leq M$ for any $x, y \in \Omega$, and that $f(X) = f(Y)$ when $A$ has not happened. This concludes the proof.
\end{proof}

\begin{definition}[$\epsilon$-close to independence]
    Let $(X_1, X_2, \ldots, X_k)$ be a sequence of random variables, and $(\Xt_1, \Xt_2, \ldots, \Xt_k)$ be independent random variables such that $\Xt_i$ has the same marginal distribution as $X_i$ for all $1 \leq i \leq k$.
    Then, $(X_1, X_2, \ldots, X_k)$ are $\epsilon$-close to being independent if 
    $$
    d_{TV}\big((X_1, X_2, \ldots, X_k), (\Xt_1, \Xt_2, \ldots, \Xt_k)\big) \leq \epsilon.
    $$
    Similarly, events $A_1, \ldots, A_k$ are $\epsilon$-close to being independent if their indicator variables $\bone_{A_1}, \ldots, \bone_{A_k}$ are so.
\end{definition}

We now prove claims tailored to our needs for almost independent variables by utilizing \cref{clm:difference-by-dtv}.

\begin{claim} \label{clm:conditional-probability}
Let $A$ and $B$ be two events $\epsilon$-close to being independent.
Then, it holds:
$$
\Pr[A \mid B] \leq \Pr[A] + \frac{\epsilon}{\Pr[B]}.
$$
\end{claim}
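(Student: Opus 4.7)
The plan is to unpack the definition of $\epsilon$-close to independence for the pair of indicator variables $(\bone_A, \bone_B)$ and translate it into a bound on $\Pr[A \cap B]$, from which the claim follows by dividing by $\Pr[B]$.

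First, I would let $(\Xt_A, \Xt_B)$ be independent indicator variables with the same marginal distributions as $(\bone_A, \bone_B)$, so that by hypothesis their joint distributions are within $\epsilon$ in total variation distance. Next, I would apply \cref{clm:difference-by-dtv} to the function $f : \{0,1\}^2 \to [0,1]$ defined by $f(x, y) = x \cdot y$, which is bounded by $M = 1$. Since $\E[f(\bone_A, \bone_B)] = \Pr[A \cap B]$ and $\E[f(\Xt_A, \Xt_B)] = \Pr[A] \cdot \Pr[B]$ by independence of $\Xt_A$ and $\Xt_B$ together with the matching marginals, this immediately gives
$$
\bigl| \Pr[A \cap B] - \Pr[A] \cdot \Pr[B] \bigr| \leq \epsilon.
$$

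Finally, dividing the one-sided upper bound $\Pr[A \cap B] \leq \Pr[A]\Pr[B] + \epsilon$ by $\Pr[B]$ yields
$$
\Pr[A \mid B] = \frac{\Pr[A \cap B]}{\Pr[B]} \leq \Pr[A] + \frac{\epsilon}{\Pr[B]},
$$
as desired. There is no real obstacle here: the claim is essentially a one-line consequence of \cref{clm:difference-by-dtv} once one recognizes $f(x,y) = xy$ as the right test function, and the only thing to be careful about is that the marginals of $\Xt_A$ and $\Xt_B$ agree with those of $\bone_A$ and $\bone_B$ respectively so that $\E[\Xt_A \Xt_B] = \Pr[A] \Pr[B]$.
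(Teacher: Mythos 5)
Your proposal is correct and follows essentially the same route as the paper: both apply \cref{clm:difference-by-dtv} to the test function $f(x,y) = xy$ with $M = 1$ to obtain $\Pr[A \cap B] \leq \Pr[A]\Pr[B] + \epsilon$, then divide by $\Pr[B]$. No meaningful difference in approach or detail.
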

\begin{proof}
    It holds:
    \begin{align}
    \Pr[A \mid B]
    &= \frac{\Pr[AB]}{\Pr[B]} \label{eq:cp-by-def} \\
    &\leq \frac{\Pr[A]\Pr[B] + \epsilon}{\Pr[B]} \label{eq:cp-close-to-independent} \\
    &= \Pr[A] + \frac{\epsilon}{\Pr[B]}. \notag
    \end{align}
    Here, \eqref{eq:cp-by-def} is by definition,
    and $\eqref{eq:cp-close-to-independent}$ follows from applying \cref{clm:difference-by-dtv} to $f(\bone_A, \bone_B) = \bone_A\bone_B$.
    More specifically, let $(\widetilde{\bone}_A, \widetilde{\bone}_B)$ be independent random variables with the same marginal distribution as $(\bone_A, \bone_B)$.
    Observe that $d_{TV}\big((\widetilde{\bone}_A, \widetilde{\bone}_B), (\bone_A, \bone_B)\big) < \epsilon$.
    Therefore, we have:
    \begin{align*}
        \Pr[AB] &= \E[\bone_{A}\bone_{B}] \\
        &\leq \E[\widetilde{\bone}_A \widetilde{\bone}_B] + \epsilon\tag{by \cref{clm:difference-by-dtv}} \\
        &= \E[\widetilde{\bone}_A] \E[\widetilde{\bone}_B] + \epsilon\tag{by the independence of $\widetilde{\bone}_A$ and $\widetilde{\bone}_B$} \\
        &= \Pr[A] \Pr[B] + \epsilon.
    \end{align*}
    This concludes the proof.
\end{proof}

\begin{claim}\label{clm:cov_ub}
    Let $(X, Y, Z)$ be three variables $\epsilon$-close to independence,
    such that $X \in [0, M_X]$, $Y \in [0, M_Y]$, and $Z \in \{0, 1\}$. Then, it holds:
    $$
    \cov(X, Y \mid Z = 0) \leq \frac{3\epsilon}{\Pr[Z = 0]}  M_X M_Y.
    $$

\end{claim}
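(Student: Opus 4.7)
The plan is to reduce the claim to three applications of \cref{clm:difference-by-dtv} by introducing the independent tuple $(\Xt, \Yt, \Zt)$ with the same marginals as $(X, Y, Z)$ guaranteed by the $\epsilon$-closeness hypothesis. Since $\Xt, \Yt, \Zt$ are mutually independent, any expectation of a product factors, which will let us translate bounds on the joint distribution into bounds on the unconditional marginals.

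Concretely, I would apply \cref{clm:difference-by-dtv} to the three functions
\[
f_1(x,y,z) = xy\,\bone_{z=0},\qquad f_2(x,y,z)=x\,\bone_{z=0},\qquad f_3(x,y,z)=y\,\bone_{z=0},
\]
which are bounded respectively by $M_X M_Y$, $M_X$, and $M_Y$. This yields
\[
\bigl|\E[XY\,\bone_{Z=0}] - \E[X]\,\E[Y]\,\Pr[Z=0]\bigr| \leq \epsilon M_X M_Y,
\]
and similarly
\[
\bigl|\E[X\,\bone_{Z=0}] - \E[X]\,\Pr[Z=0]\bigr| \leq \epsilon M_X,\qquad
\bigl|\E[Y\,\bone_{Z=0}] - \E[Y]\,\Pr[Z=0]\bigr| \leq \epsilon M_Y,
\]
where on the right-hand sides I have already used the independence of $(\Xt,\Yt,\Zt)$ and the fact that their marginals match those of $(X,Y,Z)$. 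Dividing through by $\Pr[Z=0]$ converts these into bounds on $\E[XY \mid Z=0]$, $\E[X \mid Z=0]$, and $\E[Y \mid Z=0]$ versus the unconditional quantities $\E[X]\E[Y]$, $\E[X]$, and $\E[Y]$.

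To finish, I would expand the covariance by definition,
\[
\cov(X,Y \mid Z=0) = \E[XY \mid Z=0] - \E[X \mid Z=0]\,\E[Y \mid Z=0],
\]
upper bound the first term using the $f_1$ estimate and lower bound the product using the $f_2,f_3$ estimates (dropping the resulting nonnegative $\epsilon^2$ cross term, which only helps us). After cancelling $\E[X]\E[Y]$ the remaining error is
\[
\frac{\epsilon}{\Pr[Z=0]}\bigl(M_X M_Y + \E[X]\,M_Y + \E[Y]\,M_X\bigr),
\]
and bounding $\E[X] \leq M_X$ and $\E[Y] \leq M_Y$ yields the claimed factor of $3$.

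I do not anticipate a significant obstacle; the proof is essentially bookkeeping once the three applications of \cref{clm:difference-by-dtv} are set up. The only mild subtlety is choosing to lower bound the product $\E[X \mid Z=0]\,\E[Y \mid Z=0]$ so that the inequality goes in the right direction (an upper bound on the covariance), and remembering that the $\epsilon^2$-order term arising from the product can be safely discarded because it is nonnegative.
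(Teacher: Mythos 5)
Your proposal is correct and follows essentially the same route as the paper: three applications of \cref{clm:difference-by-dtv} with the functions $xy\,\bone_{z=0}$, $x\,\bone_{z=0}$, $y\,\bone_{z=0}$, followed by expanding the conditional covariance and discarding the nonnegative $\epsilon^2$ cross term. The only cosmetic difference is that the paper rescales to $M_X = M_Y = 1$ up front while you carry $M_X, M_Y$ through explicitly.
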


\begin{proof}
    Without loss of generality, we can assume $M_X = M_Y = 1$ by rescaling.
    It holds by definition:
    \begin{align*}
        \cov(X, Y \mid Z = 0) 
        &= \E[XY \mid Z = 0] - \E[X \mid Z = 0]\E[Y \mid Z = 0] \\
        &= \frac{\E[XY \bone_{Z = 0}] }{\Pr[Z = 0]} - \frac{\E[X \bone_{Z = 0}]}{\Pr[Z = 0]} \cdot \frac{\E[Y \bone_{Z = 0}]}{\Pr[Z = 0]}
    \end{align*}
    Observing that $(X, Y, \bone_{Z = 0})$ are also $\epsilon$-close to being independent, we can invoke \cref{clm:difference-by-dtv} for $f(X, Y, Z) = XY\bone_{Z = 0}$, $f(X, Z) = X\bone_{Z = 0}$, or $f(Y, Z) = Y\bone_{Z = 0}$ to get:
    \begin{align*}
        \cov(X, Y \mid Z = 0) 
        &\leq \frac{\E[X]\E[Y]\Pr[Z = 0] + \epsilon}{\Pr[Z = 0]}
        - \frac{\E[X]\Pr[Z = 0] - \epsilon}{\Pr[Z = 0]}
        \cdot \frac{\E[Y]\Pr[Z = 0] - \epsilon}{\Pr[Z = 0]} \\
        &= \frac{\epsilon}{\Pr[Z = 0]} + \frac{\epsilon E[Y]}{\Pr[Z = 0]} + \frac{\epsilon E[X]}{\Pr[Z = 0]}
        - \frac{\epsilon^2}{\Pr[Z = 0]^2} \\
        &\leq  \frac{3\epsilon}{\Pr[Z = 0]}. \qedhere
    \end{align*}
\end{proof}

\begin{definition}
    For any two vertices $u, v \in V$, $\delta(u, v) = \Pr[Q^+(u) \cap Q^+(v) \neq \emptyset]$.
\end{definition}

The following lemma is key in connecting the size of the correlated sets to the dependence of the LCA on different vertices.

\begin{lemma}
    \label{lem:distance-from-independence}
    Take an LCA $\mA$, a graph $G = (V, E)$, and vertices $u_1, \ldots, u_k$.
    Let $X_{u_i} \in \{0, 1\}$ be any function of the part of the graph and random tapes the $\mA(G, u_i)$,
    and let $\delta = \sum_{i \neq j} \delta(u_i, u_j)$.
    Then, $X_{u_1}, \ldots, X_{u_k}$ are $\delta$-close to being mutually independent.
\end{lemma}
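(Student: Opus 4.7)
The plan is a hybrid argument that interpolates between the joint distribution $\mathcal{P} = \text{law}(X_{u_1}, \ldots, X_{u_k})$ and the corresponding product of marginals by swapping the shared random tape for independent tapes one invocation at a time. For $0 \leq \ell \leq k$, let $\mathcal{P}_\ell$ denote the distribution in which $X_{u_i}$ is generated by $\mA(G, u_i)$ on a single shared uniform tape $\rho$ for $i \leq \ell$ and on fresh independent tapes $\rho^{(i)}$ for $i > \ell$; so $\mathcal{P}_0$ is the product of marginals and $\mathcal{P}_k = \mathcal{P}$. By the triangle inequality it suffices to bound $d_{TV}(\mathcal{P}_\ell, \mathcal{P}_{\ell-1}) \leq \sum_{i < \ell} \delta(u_i, u_\ell)$ for every $\ell$ and sum, which gives $\sum_{i < j} \delta(u_i, u_j) \leq \delta$.

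For each $\ell$ I would construct an explicit coupling of $\mathcal{P}_\ell$ and $\mathcal{P}_{\ell-1}$ that keeps the coordinates $i \neq \ell$ identical. Concretely, sample $\rho$ and $\rho^{(\ell+1)}, \ldots, \rho^{(k)}$ independently uniform and use them on both sides. On the $\mathcal{P}_\ell$-side set the $\ell$-th coordinate to $\mA(G, u_\ell; \rho)$, while on the $\mathcal{P}_{\ell-1}$-side set it to $\mA(G, u_\ell; \rho')$, where $\rho'$ agrees with $\rho$ outside $S := \bigcup_{i < \ell} Q^+(u_i)$ and is resampled freshly inside $S$. Since $\rho$ and $\rho'$ differ only on $S$, whenever $Q^+(u_\ell) \cap S = \emptyset$ the runs $\mA(G, u_\ell; \rho)$ and $\mA(G, u_\ell; \rho')$ read the same tape values at every step and produce the same output, so the two sides of the coupling agree. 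A union bound bounds this failure by $\sum_{i < \ell} \delta(u_i, u_\ell)$, yielding the desired per-hybrid estimate.

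The main obstacle is verifying the validity of this coupling, i.e.\ that the $\mathcal{P}_{\ell-1}$-side is genuinely distributed as $\mathcal{P}_{\ell-1}$ even though $\rho'$ is a function of $\rho$. The crucial structural observation is that for any fixed set $S_0$, the event $\{S = S_0\}$ is measurable in $\rho|_{S_0}$ alone: each $\mA(G, u_i)$ with $i < \ell$ only reads tape values inside its out-query set $Q^+(u_i)$, and these sets must all lie in $S_0$ whenever $S = S_0$, so $\{S = S_0\}$ can be decided by simulating the earlier LCAs using just $\rho|_{S_0}$. Combined with the i.i.d.\ structure of $\rho$, this means that conditional on $\{S = S_0\}$ the coordinates $\rho|_{V \setminus S_0}$ remain uniform and independent of $\rho|_{S_0}$; together with the fresh resampling on $S_0$, $\rho'$ is marginally uniform and jointly independent of $\rho|_S$. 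Since every $X_{u_i}$ with $i < \ell$ is a deterministic function of $\rho|_{Q^+(u_i)} \subseteq \rho|_S$, this independence shows $\mA(G, u_\ell; \rho')$ is independent of $(X_{u_1}, \ldots, X_{u_{\ell-1}})$ with the correct marginal distribution, precisely as $\mathcal{P}_{\ell-1}$ demands. Independence from the coordinates $i > \ell$ is immediate since those use the fresh tapes $\rho^{(i)}$.
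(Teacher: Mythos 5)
Your proof is correct, and it takes a genuinely different route from the paper's. The paper constructs a \emph{single} coupling of the joint distribution $(X_{u_1},\ldots,X_{u_k})$ and the product of marginals $(\Xt_{u_1},\ldots,\Xt_{u_k})$: it runs the ``original'' and ``altered'' LCAs in parallel, vertex by vertex, letting the altered run use the shared tape until the first time a vertex is explored twice, at which point it branches off onto entirely fresh randomness; the two sides then agree unless some pair of out-query sets intersects. You instead set up a chain of $k$ hybrid distributions $\mathcal{P}_0,\ldots,\mathcal{P}_k$ and bound each adjacent pair with its own coupling in which only the $\ell$-th coordinate is switched from the shared tape $\rho$ to a tape $\rho'$ resampled on $S=\bigcup_{i<\ell}Q^+(u_i)$; this is more modular, and it incidentally yields the slightly sharper constant $\sum_{i<j}\delta(u_i,u_j)$ rather than $\sum_{i\ne j}\delta(u_i,u_j)$. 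The main thing your route buys is that it forces you to surface and justify the measurability point that makes any such argument rigorous: that the random set $S$ is a ``stopping set'' (i.e.\ $\{S=S_0\}$ is decided by $\rho|_{S_0}$ alone, by simulating the earlier LCAs and aborting if they step outside $S_0$), so that conditionally on $S$ the tape outside $S$ is still fresh and $\rho'$ is genuinely uniform and independent of everything determining $X_{u_1},\ldots,X_{u_{\ell-1}}$. The paper's proof uses essentially the same observation implicitly (the altered run is declared to ``take the exact same actions as the original'' until a collision), but does not spell out why the resulting $(\Xt_{u_1},\ldots,\Xt_{u_k})$ has exactly the product law; your write-up makes that explicit, at the cost of carrying the hybrid bookkeeping.
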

\begin{proof}
    Let $(\Xt_{u_1}, \ldots, \Xt_{u_k})$ be independent variables with the same marginal distributions as the original variables $(X_{u_1}, \ldots, X_{u_k})$.
    To prove the lemma, we present a coupling of $(X_{u_1}, \ldots, X_{u_k})$ and $(\Xt_{u_1}, \ldots, \Xt_{u_k})$ where the independent variables are equal to the original ones whenever 
    $Q^+(u_i) \cap Q^+(u_j) \neq \emptyset$ for all $i \neq j$. Here, $Q^+(u_i)$ refers to the out-queries made to by $\mA(G, u_i)$ to compute $X_{u_i}$.
    The coupling is constructed by two parallel runs of the algorithm.
    One to produce the original variables which runs the LCA as is,
    and one to produce the independent variables which assures independence by using a separate set of random tapes for each $\Xt_{u_i}$ (referred to as the altered LCA).
    
    The coupling proceeds as follows: $X_{u_1}, \ldots, X_{u_k}$ are created one by one by running the original LCA $\mA(G, u_i)$. When computing $X_{u_i}$, if the LCA explores a vertex $v$ already explored during the computation of $X_{u_j}$ for some $j < i$, then $X_{u_i}$ uses the already revealed random tape (i.e.\ the random tapes are consistent between the calls to the LCA).
    Similarly $\Xt_{u_1}, \ldots, \Xt_{u_k}$ are created in parallel by running an altered version of the LCA.
    While there have been no intersections in the out-queries, we let the altered LCA take the exact same actions as the original.
    However, as soon as an already explored vertex $v$ is explored for the second time the altered LCA draws a new independent random tape for $v$ to ensure independence.
    From there, the original and altered LCA diverge and run independently. The original one continues the normal flow of the LCA, and the altered one draws new independent random tapes for every $\Xt_{u_i}$ that follows.

    Observe that the altered and the original LCAs diverge only when there is an intersection in the out-queries. Therefore
    \begin{align*}
    d_{TV}\big((X_1, X_2, \ldots, X_k), (\Xt_1, \Xt_2, \ldots, \Xt_k)\big) 
    &\leq \Pr[\exists i \neq j: Q^+(u_i) \cap Q^+(u_j) \neq \emptyset] \\
    &\leq \sum_{i \neq j} \Pr[Q^+(u_i) \cap Q^+(u_j) \neq \emptyset],
    \end{align*}
    where the second inequality follows from the union bound. This concludes the proof.
\end{proof}

\subsection{Bounding The Correlated Set}

It is easy to prove that if $q^+(v) \leq Q^+$ and $q^-(v) \leq Q^-$ for all choices of $v$ at the same time, then $\psi(v) \leq Q^+ \cdot Q^-$. This is because every vertex in $\Psi(v)$ must belong to the in-query set $Q^-(w)$ of a vertex $w$ in the out-query set $Q^+(v)$ of $v$ by definition. Unfortunately, however, providing such for-all upper bounds seems hard to achieve. Particularly, we do not know how to achieve this for our specific application. 
To get around this, we prove a general lemma asserting that expected upper bounds on $q^+(v)$ and $q^-(v)$ also suffice to upper-bound the correlated set sizes $\psi(v)$.

Our result of this section now reads as follows.

\begin{lemma}\label{lem:correlated_ub}
    Let $\al$ be an LCA running on graph $G= (V,E)$ with maximum degree $\Delta$. Suppose that for positive integers $Q^+$ and $Q^-$, it holds for every vertex $u \in V$ that 
    $$
    \E[{q^+(u)}]\leq Q^+ \qquad \text{and} \qquad \E[q^-(u)] \leq Q^-,$$
    then for every vertex $v$
    $$
    \E[\psi(v)] \leq Q^+ \cdot Q^-.
    $$    
\end{lemma}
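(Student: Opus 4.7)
The plan is to begin with the pointwise inclusion
$$
\Psi(v) \;\subseteq\; \bigcup_{w \in Q^+(v)} Q^-(w),
$$
which is immediate from the definitions: if $u \in \Psi(v)$ then some $w \in Q^+(u) \cap Q^+(v)$ witnesses both $w \in Q^+(v)$ and $u \in Q^-(w)$. This yields $\psi(v) \le \sum_{w \in Q^+(v)} q^-(w)$, and linearity of expectation gives
$$
\E[\psi(v)] \;\le\; \sum_w \E\bigl[\bone[w \in Q^+(v)] \cdot q^-(w)\bigr].
$$
If the indicator $\bone[w \in Q^+(v)]$ were independent of $q^-(w)$ for each $w$, the two marginal hypotheses would immediately telescope this into $\E[q^+(v)] \cdot Q^- \le Q^+ \cdot Q^-$, closing the argument.

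The work thus reduces to justifying this factorization despite the correlation: knowing that $v$'s exploration reached $w$ conveys information about the random tapes of the vertices traversed en route to $w$, which can in turn bias the count $q^-(w) = \sum_u \bone[w \in Q^+(u)]$. The plan to handle this is to expose randomness in two stages -- first the tapes along the exploration that produces $Q^+(v)$, then the remaining tapes -- and to argue, by a coupling in the spirit of \cref{lem:distance-from-independence}, that $\E[q^-(w) \mid w \in Q^+(v)] \le Q^-$. Naturalness is essential here: because every prefix of the exploration from $v$ is connected in $G$, the ``history'' fixed by the conditioning is localized to $Q^+(v)$, so the tapes driving $\bone[w \in Q^+(u)]$ for starting vertices $u$ whose own exploration is disjoint from that history remain untouched, and the (few) contributions from overlapping explorations are absorbed into the marginal in-query bound.

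The main obstacle will be making this conditional expectation argument rigorous while handling all $w$ uniformly and without double-counting the dependencies that arise when different $w \in Q^+(v)$ share parts of the same exploration history. Once the conditional inequality $\E[q^-(w) \mid w \in Q^+(v)] \le Q^-$ is in hand, summing over $w$ yields $\E[\psi(v)] \le \E[q^+(v)] \cdot Q^- \le Q^+ \cdot Q^-$ as claimed. The naturalness hypothesis is precisely what makes this possible: without connectedness, $v$'s exploration could leap to vertices whose tapes simultaneously govern $q^-(w)$ for many $w$, and the bias from conditioning could cascade globally, breaking the product bound.
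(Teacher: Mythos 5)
Your opening inclusion $\Psi(v) \subseteq \bigcup_{w \in Q^+(v)} Q^-(w)$ and the reduction to bounding $\E\bigl[\bone[w \in Q^+(v)]\,q^-(w)\bigr]$ are sound, and you correctly identify the correlation between $\bone[w \in Q^+(v)]$ and $q^-(w)$ as the crux. However, the conditional inequality you propose to establish, $\E[q^-(w) \mid w \in Q^+(v)] \le Q^-$, is \emph{false} in general, so the plan does not close. Conditioning on $w \in Q^+(v)$ is the same as conditioning on $v \in Q^-(w)$, which forces the $v$-summand in $q^-(w) = \sum_u \bone[u \in Q^-(w)]$ to equal $1$; this strictly inflates the conditional expectation above the unconditional one whenever $\Pr[v \in Q^-(w)] < 1$. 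A concrete counterexample: let $G$ be a star centered at $w$ and let each leaf $z$ (and $w$) probe its neighbor iff $r(z)=0$, each with probability $1/2$. Then $\E[q^-(w)] = n/2$ so one must take $Q^- = n/2$, yet for any leaf $v$ we get $\E[q^-(w) \mid w \in Q^+(v)] = 1 + (n-1)/2 > Q^-$. The hand-wave that overlapping contributions are ``absorbed into the marginal in-query bound'' is precisely where the argument breaks: those contributions (here, $v$'s own forced membership in $Q^-(w)$) are \emph{not} absorbed for free.

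The paper sidesteps this in two ways that your sketch is missing. First, instead of the union bound $\psi(v) \le \sum_{w\in Q^+(v)} q^-(w)$, it uses a \emph{disjoint} decomposition of $\Psi(v)\setminus Q^+(v)$ according to the first vertex of $Q^+(v)$ that each $u$'s exploration discovers, yielding sets $X_{Q^+(v),w}$ that \emph{exclude} all of $Q^+(v)$ by definition, while $|\Psi(v)\cap Q^+(v)|$ is charged directly to $q^+(v)$. Second, it conditions on the full tape configuration $R(Q^+(v))$ (not on the single event $w\in Q^+(v)$) and proves $\E[|X_{A,w}|+1 \mid R(A)] \le \E[q^-(w)]$: the membership event $u\in X_{A,w}$ for $u\notin A$ depends only on tapes outside $A$, hence is literally independent of $R(A)$, and the ``$+1$'' (which pays for $w\in Q^-(w)$) absorbs the additive inflation that killed your version. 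This exclusion-of-$A$ is exactly what makes the conditional bound true, and it is the ingredient your proposal would need to add.
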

\begin{proof}
    For any vertex-subset $A \subseteq V$ and any vertex $w \in A$, we define $X_{A, w}$ as the set of vertices $u \not\in A$ such that $Q^+(u) \cap A \not= \emptyset$ and $w$ is the first vertex in $A$ that is discovered by $\al$ when providing the output of $u$. 
    
      Observe that for any vertex $w \in A$, $w$ itself and each vertex in set $ X_{A,w}$ appears in $Q^-(w)$. However, the opposite is not always true, since a vertex in $Q^-(w)$ might query $w$ after querying some other vertices in $A$. Since $w \notin X_{A,w}$ we have $q^-(w) \geq  |X_{A,w}| +1$.

    Take a vertex-subset $A \subseteq V$ and let $R(A)$ be any condition on the random tapes of vertices in $A$. We prove that for any $w \in A$ we have $\E[|X_{A, w}|+1 \mid R(A)] \leq \E[q^-(w)] $. Since we are conditioning on $R(A)$ it is not a trivial statement.  To see this, take a vertex $u \not\in A$ and let us reveal $Q^+(u)$ vertex by vertex. Note that before this set includes any vertex of $A$, all the random tapes are independent of $R(A)$. Moreover, if the first discovered vertex that intersects with $A$ is not $w$, then $u \not\in X_{A, w}$ by definition of $X_{A, w}$. So  for $u$ to join $X_{A, w}$, a neighbor $w'$ of $w$ discovered by $u$ should discover the edge $(w, w')$. Since $w'$ is not aware of the random tape of $w$ before discovering it, the condition on the random tape of $w$ does not change the probability of $w'$ discovering $w$. From this, we get that
    \begin{align*}
        \E[|X_{A, w}|+1 \mid R(A)] & = 1 + \sum_{u \not\in A} \Pr[u \in X_{A, w} \tag{By definition of $X_{A, w}$ and linearity of expectation.}
        \mid R(A)] \\ &\leq  1 + \sum_{u \not\in A} \Pr[u \in Q^-(w)] \\ &\leq \E[q^-(w)]. \tag{Since $w \in Q^-(w)$.}
    \end{align*} 

    Let us fix a vertex $v$. By definition of $\Psi(v)$ and conditioning on $R(Q^+(v))$ we have:
    \begin{flalign*}
    \E[\psi(v) \mid R(Q^+(v))] & = \sum_{w \in Q^+_v} \E[|X_{Q^+(v),w}|+1 \mid R(Q^+(v))] \\ & \leq \sum_{w \in Q^+_v} \E[q^-(w) ] \\ & \leq q^+(v)\cdot Q^-.
    \end{flalign*}
    We conclude the proof by the law of total expectation, we have:
    \begin{equation*}
    \E[\psi[v]] = \E[\E[ \psi(v) \mid R(Q^+(v))]] \leq \E [q^+(v)Q^-] = \E[q^+(v)]\cdot Q^- \leq Q^+ \cdot Q^-.\qedhere
    \end{equation*}
\end{proof}

\subsection{Perspective: Other Potential Applications of In-n-Out LCAs}\label{sec:perspective}

We believe that LCAs with bounded in- as well as out-queries have many desirable properties, and thus deserve to be studied on their own beyond the stochastic matching problem. Our earlier discussion in this section showcases one of the applications of in-n-out LCAs in achieving independence. Here, we briefly discuss two other potential applications of in-n-out LCAs for tolerant testing and dynamic graph algorithms. These settings are out of the scope of this paper and thus we only briefly discuss them, as our only goal is to further motivate the study of in-n-out LCAs.

\smparagraph{Tolerant Testing:} Suppose that we are given a list $x_1, \ldots, x_n$ of distinct integers and want to test if the sequence is monotone (i.e., $x_i < x_j$ for all $i < j$) or far from monotone in sublinear time. A classic result of \cite{ErgunKKRV98} considered the following algorithm: pick a random index $i \in [n]$, assume that the sequence is monotone and perform a binary search for value $x_i$. If the sequence is indeed monotone, the test should clearly end up in location $i$ for any choice of $i$ after $O(\log n)$ inspections. However, if the sequence is $\epsilon$-far from being monotone (meaning that one has to change $\epsilon$ fraction of entries to make it monotone), the analysis of \cite{ErgunKKRV98} shows that there is an $\Omega(\epsilon)$ probability that the test does not end up in the right index $i$. Hence, repeating the process $O(1/\epsilon)$ times, leads to an $O(\epsilon^{-1} \log n)$ time tester for monotonicity. 

The downside of this tester is that it is not {\em tolerant} (\`a la \citet*{ParnasRR06}). Suppose that $x_{n/2}$ holds the smallest value and all other values are correctly sorted. Given that this sequence is monotone except for one index, it is natural to expect the algorithm to return monotone. However, the binary search algorithm fails half of the time, as it moves to the wrong direction and thus returns non-monotone. Observe that the binary search algorithm fails, because the {\em in-query} of element $x_{n/2}$ is high: all queries probe $x_{n/2}$. To fix this and achieve a tolerant tester, \cite{ParnasRR06} presents another algorithm that can be seen as an LCAs with bounded in-queries (to achieve tolerance) and bounded out-queries (to bound testing time). It is conceivable that such a connection is not specific to the monotonicity testing of a list, and may be generalized to tolerant testing of other properties, hence motivating the study of in-n-out LCAs.

\smparagraph{Dynamic Graph Algorithms:} Another advantage of LCAs with bounded in-n-out queries is that their outputs can be efficiently maintained under modifications (e.g. edge or vertex insertions and deletions) to the underlying graph. Suppose for example that we have computed the output of an LCA $\mc{A}$ on a graph $G$, and then an edge $e$ is inserted or deleted from the graph. Note that if $\mc{A}$ for a vertex $v$ does not query any endpoint of $e$, then insertion or deletion of $e$ does not affect its output. Therefore, once $e$ is modified, it suffices to go over the list of in-query sets of its endpoints (which can be maintained explicitly for every vertex), and re-run the LCA on those vertices. Overall, it is not hard to prove that if $Q^-$ and $Q^+$ respectively upper bound the in- and out-queries of all vertices of LCA $\mc{A}$, then the output of the LCA can be maintained under edge insertions and deletions in $O(Q^- \cdot Q^+)$ time per update.
\section{Preliminaries}
Given a graph $G = (V, E)$, we use $n$ to denote the number of vertices, $\MM(G)$ to denote the edges of a maximum matching in $G$, and $\mu(G)$ to denote the size of the maximum matching.

\begin{definition}[The Stochastic Maximum Matching Problem]
    In the stochastic maximum matching problem,
    given a graph $G$, and a probability parameter $p \in [0, 1]$,
    the goal is to compute a subgraph $H \subseteq G$
    such that
    $$
    \E[\mu(H \cap G_p)] \geq c \cdot \E[\mu(G_p)].
    $$
    Here, $G_p$ is a random subgraph of $G$ that includes each edge with probability $p$, and $c$ denotes the approximation ratio of the algorithm.
\end{definition}

Let $x: E \to \mathbb{R}$ be a set of values on the edges.
$\card{x}$ is used to denote the sum of $x$ on all the edges.
For an edge set $F \subseteq E$, $x_F$ is used to denote the sum of $x$ on the edges of $F$:
$$
x_F \coloneqq \sum_{e \in F} x_e.
$$
For a vertex $v \in V$, $x_v$ is used to denote the sum of $x$ on the edges adjacent to $v$:
$$
x_v \coloneq \sum_{e \ni v}  x_e,
$$
and $x_v^F$ is used to denote the sum of $x$ on the edges that are adjacent to $v$ and in $F$:
$$
x_v^F \coloneq \sum_{\substack{e \ni v \\ e \in F}} x_e.
$$

A \emph{fractional matching} is a set of values on the edges, $f: E \to [0, 1]$, such that for any vertex $v$ it holds $f_v \leq 1$.
If a fractional matching $f$ for $G$ satisfies the \emph{blossom inequalities} (defined below), then $\mu(G) \geq \card{f}$.
The following lemma provides a continuous version of this fact.

\begin{lemma} [folklore] \label{lem:blossom}
    Let $f$ be a fractional matching that satisfies the blossom inequalities for all vertex sets $S$ with $\card{S} \leq \frac{1}{\epsilon}$, i.e.,
    $$
    \sum_{\substack{(u, v) \in E \\ u, v \in S}} f_{(u, v)} \leq \floor{\frac{S}{2}}.
    $$
    Then, $\mu(G) \geq (1 - \epsilon) \card{f}$.
\end{lemma}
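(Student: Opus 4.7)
The plan is to show that the scaled vector $(1-\epsilon)f$ lies in the matching polytope of $G$. By Edmonds' matching polytope theorem, any such point is a convex combination of indicator vectors of matchings, each of size at most $\mu(G)$, so its $\ell_1$ weight is at most $\mu(G)$. This would give $(1-\epsilon)\card{f} \leq \mu(G)$, which is exactly the claim. The observation motivating this approach is a clean dichotomy at the threshold $\card{S} = 1/\epsilon$: for small odd sets we invoke the hypothesis on $f$ directly, while for large odd sets the factor $(1-\epsilon)$ creates enough slack that the blossom inequality follows from the degree constraints alone.

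Concretely, I would first verify the trivial constraints: $(1-\epsilon)f \geq 0$ since $f \geq 0$, and $(1-\epsilon)f_v \leq f_v \leq 1$ since $f$ is a fractional matching. The nontrivial step is verifying, for every odd vertex set $S$ with $\card{S} \geq 3$, that
$$
\sum_{\substack{(u,v) \in E\\ u, v \in S}} (1-\epsilon)\, f_{(u,v)} \;\leq\; \floor{\card{S}/2}.
$$
I would split on the size of $S$. If $\card{S} \leq 1/\epsilon$, the hypothesis gives $\sum_{(u,v)\in E,\,u,v\in S} f_{(u,v)} \leq \floor{\card{S}/2}$, and scaling by $(1-\epsilon) \in (0,1]$ preserves the bound. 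If $\card{S} > 1/\epsilon$, I would sum the degree constraints over $v \in S$ to obtain $\sum_{u,v \in S} f_{(u,v)} \leq \tfrac{1}{2}\sum_{v \in S} f_v \leq \card{S}/2$; multiplying by $(1-\epsilon)$ and using $\epsilon \card{S} \geq 1$ gives $(1-\epsilon)\card{S}/2 \leq (\card{S}-1)/2 = \floor{\card{S}/2}$, where the last equality uses that $\card{S}$ is odd.

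The argument is really careful bookkeeping rather than anything deep, so I do not expect a serious obstacle. The only subtle point is lining up the floor function with the slack produced by scaling, which is why the odd-$\card{S}$ case is what matters (for even $\card{S}$ the blossom inequality is already implied by the degree constraints alone); this is also what forces the threshold $1/\epsilon$ in the hypothesis to be exactly the right cutoff.
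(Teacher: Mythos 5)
The paper states \cref{lem:blossom} as ``folklore'' and does not include a proof, so there is no in-paper argument to compare against. Your proof is correct and is precisely the standard folklore argument: scale by $(1-\epsilon)$, invoke Edmonds' matching polytope theorem, and check the odd-set blossom constraints by the dichotomy you describe --- for $\card{S}\leq 1/\epsilon$ the hypothesis applies directly after scaling, and for odd $\card{S}>1/\epsilon$ the degree constraints give $\sum_{u,v\in S}f_{(u,v)}\leq \card{S}/2$, after which $\epsilon\card{S}>1$ yields $(1-\epsilon)\card{S}/2 < (\card{S}-1)/2 = \floor{\card{S}/2}$. The bookkeeping around the floor function and the threshold $1/\epsilon$ is exactly right, and you correctly note that even-size sets are already covered by the degree constraints.
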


\section{Stochastic Matching via LCAs}
\subsection{Algorithm and Definitions}

\begin{theorem} \label{thm:main}
    Let $G_p=(V, E)$ be any graph and let $G$ include every edge $e \in E$ independently with probability $p_e \geq p$. \Cref{alg:main} computes a subgraph $H = (V, E_H)$ of maximum degree $(1/p)^{\exp\left((1/\epsilon) \ln (1/\epsilon)\right)} = (1/p)^{O_\epsilon(1)}$ 
    such that
    $$
        \E[\mu(E_H \cap E_p)] \geq (1-\epsilon) \E[\mu(E_p)].
    $$
    Recall that in the stochastic matching problem, the algorithm has access to $G$ and $p$, but not $G_p$.
\end{theorem}

\begin{algorithm}[H]\caption{The algorithm for constructing subgraph $H$, due to \cite{BehnezhadFHR19}.}\label{alg:main}
          \textbf{Input:} Graph \( G = (V, E) \), probability \( p_e \) for each edge \( e \in E \).

        Let $R \coloneq 1 / 2\tau^- = (1/p)^{O_{\epsilon}(1)}$ (as in \cref{clm:taus}).
       
       \For{$i = 1$ to $R$}{
        
              Sample a realization \( \G_i \) of \( G \), where each edge \( e \in E \) is included in $\mG_i$ independently with probability \( p_e \).
             
              Compute a maximum matching \( \MM(\G_i) \) on the realization $\mG_i$.
        }
        
         \textbf{Return} \( H = \MM(\G_1) \cup \MM(\G_2) \cup \dots \cup \MM(\G_R) \)
         
\end{algorithm}
We use $\opt$ to denote $\E[\mu(E_p)]$ and for a subgraph $H$, we use $H_p$ to denote the subgraph of $H$ that is realized, i.e.\ $E(H_p) = E(H) \cap E_p$.
Throughout this section, we make the following assumption:
\begin{assumption} \label{assumption}
    $\opt \geq \epsilon n$.
\end{assumption}
This assumption is without loss of generality due to a \enquote{bucketing} reduction of \citet*{AssadiKL19} (see Appendix $B$ of \cite{BehnezhadDH20}). Intuitively, if $\opt = o(\epsilon n)$, then one can randomly put vertices into $\Theta(\opt/\epsilon)$ buckets and contract all vertices assigned to the same bucket. This reduces the number of vertices to $O(\opt/\epsilon)$, but keeps the maximum matching size almost as large as $\opt$. We note that for this reduction to be valid, it is crucial that the algorithm works even when the realization probabilities are different. This is because the vertex contractions may result in parallel edges which can be modeled by a simple graph with different edge realization probabilities.

\subsection{Algorithm Analysis}
First, we make some definitions.
For an edge $e$ let $q_e$ be $\Pr[e \in \MM(G_p)]$, the probability of $e$ being matched in $\MM(G_p)$ where the $G_p$ is obtained by sampling each edge $e'$ of $G$ with probability $p_{e'}$.
Thus, it holds $\card{q} = \E[\card{\MM(G_p)}] = \opt$.
Two thresholds $0 < \tau^- < \tau^+ < 1$ are computed based on the graph $G$, the parameter $\epsilon$, and the sampling probabilities $\{p_e\}_{e \in E}$.
An edge $e \in E$ is \emph{crucial} if $q_e \geq \tau^+$,
and \emph{non-crucial} if $q_e \leq \tau^-$ (the edges with $\tau^- < q_e < \tau^+$ are essentially ignored, see \cref{clm:taus}).
We use $C$ and $N$ to denote the set of crucial and non-crucial edges respectively.

Here, we provide a bird's-eye view of the analysis.
To show $\mu(H_p)$ is large, we eventually create a large fractional matching $y$ on the edges of $H_p$ that satisfies the blossom inequalities.
First, a fractional matching $f$ on the non-crucial edges (\cref{lem:f})
is combined with an integral matching $M_C \subseteq C_p$ (\cref{def:MC}) to obtain a set of values $x$ on the edges of $H_p$ (\cref{def:x}).
These values do not form a fractional matching. However, they satisfy similar constraints in expectation (\cref{clm:expectation_ub}), as well as some tail bounds (\cref{clm:x-tail-bound}).
Therefore, we can use the Rounding Lemma (\cref{lem:rounding}) to obtain a fractional matching $y$ from $x$.
The entries of $y$ are small, except on the edges of the matching $M_C$.
As a result, $y$ satisfies the blossom inequalities and $\mu(H_p)$ is at least as large as $\card{y}$.

Before describing the construction of $y$,
we prove some properties of $q$,
and give a characterization of the thresholds $\tau^-$ and $\tau^+$.

\begin{claim}
    For any vertex $v$, it holds $q_v \leq 1$.
\end{claim}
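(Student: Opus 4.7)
The plan is to unfold the definition of $q_v$ via linearity of expectation and then use the fact that $\MM(G_p)$ is a matching, so at most one edge incident to $v$ can be selected in any realization.

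More concretely, by definition $q_v = \sum_{e \ni v} q_e = \sum_{e \ni v} \Pr[e \in \MM(G_p)]$. Rewriting each probability as an expectation of an indicator and pulling the sum inside the expectation gives
$$
q_v \;=\; \E\!\left[\sum_{e \ni v} \bone_{\{e \in \MM(G_p)\}}\right].
$$
Since $\MM(G_p)$ is a matching (for every realization of $G_p$), at most one edge incident to $v$ can belong to it, so the random sum inside the expectation is bounded above by $1$ pointwise. Taking expectations yields $q_v \leq 1$.

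The main step is simply invoking the matching constraint on $\MM(G_p)$; no obstacle is expected. The only subtlety worth flagging is that the bound holds for every realization of $G_p$ (including the case where $v$ is unmatched, in which the sum is $0$), so no measure-theoretic care is needed beyond linearity of expectation.
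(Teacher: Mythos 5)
Your proof is correct and takes essentially the same approach as the paper: both hinge on the fact that $\MM(G_p)$ is a matching, so at most one edge incident to $v$ is selected in any realization. The paper phrases this as the events $\{e \in \MM(G_p)\}$ for $e \ni v$ being mutually exclusive (so their probabilities sum to at most $1$), while you phrase it via linearity of expectation and a pointwise bound on the indicator sum; these are the same argument.
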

\begin{proof}
    This is true since for an edge $e \ni v$, $q_e$ is the probability of $v$ being matched using $e$ in $\MM(G_p)$.
    Therefore, $q_v = \sum_{e \ni v}q_e$ is the probability of $v$ being matched by any edge, as is at most $1$.
\end{proof}

\begin{claim}
    The maximum degree in $C$ (the set of the crucial edges) is $\Delta_C \leq \frac{1}{\tau^+}$.
\end{claim}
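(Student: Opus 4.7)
The plan is a one-line calculation that combines the previous claim ($q_v \leq 1$) with the defining property of crucial edges ($q_e \geq \tau^+$). First I would fix an arbitrary vertex $v$ and let $d_C(v)$ denote the number of crucial edges incident to $v$. Since every crucial edge $e \ni v$ contributes at least $\tau^+$ to the sum $q_v = \sum_{e \ni v} q_e$, restricting the sum to crucial edges yields
\[
q_v \;\geq\; \sum_{\substack{e \ni v \\ e \in C}} q_e \;\geq\; d_C(v)\cdot \tau^+.
\]
Combined with the previous claim's bound $q_v \leq 1$, this gives $d_C(v) \leq 1/\tau^+$. Taking a maximum over $v$ yields $\Delta_C \leq 1/\tau^+$, as desired.

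There is essentially no obstacle here: the argument is a direct averaging, and both ingredients (the lower bound $q_e \geq \tau^+$ for $e \in C$ and the upper bound $q_v \leq 1$) are already in place from the definition of crucial edges and the immediately preceding claim. The only thing to be careful about is to sum $q_e$ only over crucial incident edges when lower-bounding the degree, rather than over all incident edges, so that the lower bound $\tau^+$ actually applies termwise.
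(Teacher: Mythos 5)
Your proof is correct and follows essentially the same approach as the paper: lower-bound the incident crucial-edge contribution by $d_C(v)\cdot\tau^+$, combine with $q_v\leq 1$, and take a maximum over $v$.
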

\begin{proof}
    By definition, for any edge crucial edge $e \in C$, it holds $q_e \geq \tau^+$.
    Therefore, for any vertex $v$, since $q_v \leq 1$,
    there can be at most $\frac{1}{\tau^+}$ crucial edges adjacent to $v$.
\end{proof}

Now, we characterize $\tau^-$ and $\tau^+$.
In broad terms, they satisfy two constraints:
$(1)$ $\tau^-$ is sufficiently smaller than $\tau^+$,
and $(2)$ the total mass of $q$ on the edges $e$ with $q_e \in (\tau^-, \tau^+)$ is small.

\begin{claim} \label{clm:taus}
    For any graph $G = (V, E)$ and realization probabilities $\{p_e\}_{e\in E}$, there exists constants $p^{\exp\left((1/\epsilon) \ln (1/\epsilon)\right)} < \tau^- < \tau^+ < (\epsilon p)^2$ such that 
    \begin{enumerate}
        \item $\tau^- < (\tau^+)^{\poly(1/\epsilon)}$ where the power of $\tau^+$ can be taken arbitrarily large, and
        \item $q_C + q_N \geq (1 - \epsilon)\opt$, where $C$ and $N$ are defined w.r.t.\ $\tau^+$ and $\tau^-$.
    \end{enumerate}
\end{claim}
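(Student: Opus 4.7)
The plan is to prove existence by a pigeonhole argument over a geometric chain of candidate thresholds. I would fix a large integer $K = (1/\epsilon)^{c}$, where $c$ is chosen large enough to yield the arbitrarily-large polynomial degree asked for in condition~1, and set $L \coloneqq \lceil 1/\epsilon \rceil$. Next I would build a decreasing chain $t_0 > t_1 > \cdots > t_L$ by setting $t_0 \coloneqq (\epsilon p)^2/2$ and $t_{i+1} \coloneqq t_i^{K}$, so that the candidate pairs for $(\tau^+, \tau^-)$ are $(t_i, t_{i+1})$ for $i \in \{0, 1, \ldots, L-1\}$. By construction, $\tau^- = (\tau^+)^K$ would hold for every candidate, handling condition~1, and $\tau^+ \leq t_0 < (\epsilon p)^2$ would handle the upper bound on $\tau^+$.

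The next step is to pick $i^*$ using averaging. Since $q_e = \Pr[e \in \MM(G_p)]$, linearity of expectation gives $\sum_{e \in E} q_e = \E[|\MM(G_p)|] = \opt$. I would define the pairwise disjoint bands $B_i \coloneqq \{e \in E : t_{i+1} < q_e \leq t_i\}$ for $i = 0, \ldots, L-1$, noting $\sum_{i=0}^{L-1} q_{B_i} \leq \opt$. By pigeonhole, some $i^*$ satisfies $q_{B_{i^*}} \leq \opt/L \leq \epsilon \opt$. Setting $\tau^+ \coloneqq t_{i^*}$ and $\tau^- \coloneqq t_{i^*+1}$, the only edges excluded from $q_C + q_N$ are those with $q_e \in (\tau^-, \tau^+)$, which form a subset of $B_{i^*}$ and hence contribute $q$-mass at most $\epsilon \opt$. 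This yields $q_C + q_N \geq (1-\epsilon)\opt$ as required for condition~2.

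The step demanding quantitative care---which I expect to be the only mildly delicate part---is the lower bound $\tau^- > p^{\exp(1/\epsilon^2)}$, since iterating a large power many times can be dangerous. Taking logs, $\log(1/\tau^-) \leq \log(1/t_L) = K^L \log(1/t_0)$. With $K = (1/\epsilon)^c$ and $L = \lceil 1/\epsilon \rceil$, the multiplier is $K^L = (1/\epsilon)^{cL} = \exp\!\bigl(O(\log(1/\epsilon)/\epsilon)\bigr)$, which is $o(\exp(1/\epsilon^2))$ for any fixed $c$ as $\epsilon \to 0$. Combined with $\log(1/t_0) = O(\log(1/(\epsilon p)))$, the product $K^L \log(1/t_0)$ stays below $\exp(1/\epsilon^2) \log(1/p)$ in the nontrivial regime where $p$ is smaller than some function of $\epsilon$ (the constant-$p$ case being immediate). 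Hence $c$ can be freely tuned to make condition~1 quantitatively as strong as desired while keeping $K^L$ comfortably below $\exp(1/\epsilon^2)$, and all three conditions hold simultaneously.
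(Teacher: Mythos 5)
Your proposal is correct and takes essentially the same route as the paper: both build a geometric chain of thresholds $t_0 > t_1 > \cdots$ with $t_{i+1}$ a large $\poly(1/\epsilon)$-power of $t_i$, and then use an averaging/pigeonhole argument over the $\Theta(1/\epsilon)$ bands to find a consecutive pair with $q$-mass at most $\epsilon\opt$. Your third paragraph makes the lower bound on $\tau^-$ somewhat more explicit than the paper's one-line verification, but the underlying argument is identical.
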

\begin{proof}
    Let $\tau_0 \coloneq (\epsilon p)^2$ and for $1 \leq i \leq \frac{1}{\epsilon}$, let $\tau_i \coloneq \tau_{i-1}^{\poly(1/\epsilon)}$.
    Let $q_i$ be the sum of $q_e$ for edges $e$ with $q_e \in (\tau_i, \tau_{i-1}]$. That is:
    $$
    q_i \coloneq \sum_{\substack{e \in E \\ q_e \in (\tau_i, \tau_{i-1}]}} q_e.
    $$
    Observe that $\sum_{1\leq i \leq 1/\epsilon} q_i \leq \card{q} = \opt$. Therefore, there is an index $i$ for which $q_i \leq \epsilon \opt$. Let $\tau^- = \tau_i$ and $\tau^+ = \tau_{i-1}$.
    Then, for the crucial and non-crucial edges defined w.r.t.\ $\tau^+$ and $\tau^-$, it holds:
    $$
    q_C + q_N = \opt - q_i \geq (1 - \epsilon)\opt.
    $$
    Also, $\tau^- = q_i = ((\epsilon p)^2)^{(\poly(1/\epsilon))^i} \geq p^{\exp\left((1/\epsilon) \ln (1/\epsilon)\right)}$.
\end{proof}

To begin the construction of $y$, we first obtain the fractional matching $f$ on the non-crucial edges, based on $H$.
A similar fractional matching is constructed in \cite{BehnezhadDH20}.
However, there are slight differences in the definition.
Hence, we have included the proof in \cref{sec:ref} for completeness.

\begin{restatable}[\cite{BehnezhadDH20}]{lemma}{lemf}\label{lem:f}
Let $H = \bigcup_i \MM(\mG_i)$ be the output of \Cref{alg:main} on graph $G$. One can define a set of random variables $\{f_e\}_{e \in E}$ based on $H$ such that:
\begin{enumerate}
    \item for every edge $e \notin E(H) \cap N$, $f_e = 0$,
     \item for every vertex $v$, it holds that $\sum_{e \ni v} f_e \leq q^N_v$,
    \item for every edge $e \in E$, $f_e \leq \frac{1}{\sqrt{\epsilon R}}$,
    \item for every non-crucial edge, $\E[f_e] \leq q_e$, and
    \item it holds $\E[\card{f}] \geq q_N - \epsilon \opt$.
\end{enumerate} 
\end{restatable}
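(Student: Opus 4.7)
The plan is to construct $f$ from empirical matching frequencies across the $R$ iterations of \Cref{alg:main}, then apply two truncations to enforce the deterministic constraints without losing much mass in expectation. For each non-crucial edge $e \in N$, set
\[
\hat{f}_e := \frac{1}{R}\bigl|\{i \in [R] : e \in \MM(\mG_i)\}\bigr|,
\]
and $\hat{f}_e := 0$ otherwise. Cap by $\hat{f}_e^* := \min(\hat{f}_e,\, 1/\sqrt{\epsilon R})$. Then, for $e = (u,w)$, set
\[
T_e := \min\!\Bigl(1,\, \tfrac{q_u^N}{\sum_{e'\ni u}\hat{f}_{e'}^*}\Bigr) \cdot \min\!\Bigl(1,\, \tfrac{q_w^N}{\sum_{e'\ni w}\hat{f}_{e'}^*}\Bigr), \qquad f_e := \hat{f}_e^* \cdot T_e.
\]

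Properties (1)--(4) follow by inspection. (1) holds because $\hat{f}_e = 0$ whenever $e \notin E(H)\cap N$. (2) holds because each endpoint factor of $T_e$ forces $\sum_{e\ni v} f_e \leq q_v^N$. (3) is immediate from $f_e \leq \hat{f}_e^* \leq 1/\sqrt{\epsilon R}$. (4) follows because $f_e \leq \hat{f}_e$ pointwise while $\E[\hat{f}_e] = \frac{1}{R}\sum_{i=1}^{R} \Pr[e \in \MM(\mG_i)] = q_e$ for every $e \in N$.

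The main obstacle is property (5). Since $\E[\card{\hat{f}}] = \E\bigl[\tfrac{1}{R}\sum_i |\MM(\mG_i) \cap N|\bigr] = q_N$, it suffices to bound the two truncation losses each by $\epsilon\,\opt/2$ in expectation. For the cap loss, observe that $R\hat{f}_e \sim \mathrm{Bin}(R, q_e)$ has mean at most $R\tau^- = 1/2$. Markov's inequality gives $\Pr[\hat{f}_e > 1/\sqrt{\epsilon R}] \leq q_e\sqrt{\epsilon R}$, and $\E[\hat{f}_e^2] \leq q_e/R + q_e^2 \leq 3q_e/(2R)$; Cauchy--Schwarz then yields $\E[(\hat{f}_e - 1/\sqrt{\epsilon R})_+] = O\bigl(q_e (\epsilon/R)^{1/4}\bigr)$. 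Summing over edges gives $O\bigl(\opt \cdot (\epsilon/R)^{1/4}\bigr)$, which is $\leq \epsilon\,\opt/2$ once $R$ is a sufficiently large polynomial in $1/\epsilon$. For the vertex-truncation loss, the inequality $1 - T_u T_w \leq (1-T_u)+(1-T_w)$ together with the identity $(1-T_v)\sum_{e\ni v}\hat{f}_e^* = \bigl(\sum_{e\ni v}\hat{f}_e^* - q_v^N\bigr)_+$ collapses $\sum_e \hat{f}_e^*(1-T_e)$ to
\[
\sum_v \Bigl(\sum_{e\ni v}\hat{f}_e^* - q_v^N\Bigr)_+ \;\leq\; \sum_v (Z_v/R - q_v^N)_+,
\]
where $Z_v \sim \mathrm{Bin}(R, q_v^N)$ counts the iterations in which $v$ is matched via a non-crucial edge (independent across $i$ since the $\mG_i$'s are independent). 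Chebyshev yields $\E[(Z_v/R - q_v^N)_+] \leq \sqrt{q_v^N/R}$, and Cauchy--Schwarz over $v$ combined with \cref{assumption} (giving $n \leq \opt/\epsilon$ and $\sum_v q_v^N \leq 2\opt$) bounds the total by $\opt\sqrt{4/(\epsilon R)}$, which is below $\epsilon\,\opt/2$ because $R = (1/p)^{O_\epsilon(1)}$ can be made arbitrarily large (as permitted by \cref{clm:taus}). Combining the two bounds yields property (5).
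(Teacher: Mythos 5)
Your proof is correct, but it takes a genuinely different route from the paper's. The paper enforces the per-vertex constraint $f_v \leq q_v^N$ via a \emph{hard threshold}: it scales $f'$ down by $(1-\epsilon)$ and then zeroes out every edge incident to a vertex $v$ with $(1-\epsilon)f'_v > q_v^N$. The loss from this zeroing is then controlled by bounding $\Pr[(1-\epsilon)f'_v > q_v^N]$ per vertex via Chebyshev, and similarly the cap loss is bounded edge-by-edge via $\Pr[t_e > 1/\sqrt{\epsilon R}] \leq 2\epsilon q_e$ (using $t_e \leq 1$ to convert a probability bound into an expected-loss bound). You instead enforce the vertex constraint via \emph{continuous multiplicative rescaling} $T_e = T_u T_w$ with $T_v = \min(1, q_v^N/\hat{f}_v^*)$, which eliminates the need for the $(1-\epsilon)$ factor and replaces the event-based bookkeeping with the clean identity $(1-T_v)\hat{f}_v^* = (\hat{f}_v^* - q_v^N)_+$; you then bound the resulting $L_1$ excess via second-moment/Cauchy--Schwarz arguments rather than tail probabilities. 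The two approaches buy different things: the paper's bound on the cap loss ($2\epsilon q_e$ per edge) holds without requiring $R$ large, whereas your $O(q_e(\epsilon/R)^{1/4})$ bound leans on $R$ being a large polynomial in $1/\epsilon$ (which the choice of $\tau^-$ in \cref{clm:taus} does guarantee); in exchange your rescaling is arguably more elementary and avoids the somewhat delicate case analysis that the hard-threshold approach entails. One minor naming quibble: the step bounding $\E[(Z_v/R - q_v^N)_+]$ by $\sqrt{q_v^N/R}$ is Jensen/Cauchy--Schwarz applied to the second moment, not Chebyshev. You should also note that your $Z_v$ counts matched-by-non-crucial iterations, so $\sum_{e\ni v}\hat{f}_e = Z_v/R$ requires that each $\MM(\mG_i)$ contributes at most one edge at $v$ — which holds since each $\MM(\mG_i)$ is a matching.
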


Let us now state the lemma used to construct the integral matching $M_C \subseteq C_p$.
On a high level, it proposes that we can compute a matching $M_C$ that has (almost) the same vertex marginals as $\MM(G_p) \cap C$, and at the same time satisfies certain independence guarantees. This independence is phrased in terms of the size of the correlated sets in the LCA algorithm that computes $M_C$.

\begin{lemma} \label{lem:stochastic-matching-lca}
    Let $G=(V, E)$ be a given graph of maximum degree $\Delta$ and let $G_p$ include every edge $e \in E$ independently with probability $p_e$. Let $\mc{A}(H)$ be any possibly randomized algorithm producing a matching of its input graph $H$. 
    There exists an LCA $\mB(H)$ returning a matching of $H$ such that:
    \begin{enumerate}
        \item $\Pr[v \in \mc{B}(G_p)] \leq \Pr[v \in \mc{A}(G_p)]$, for all $v \in V(G)$. \label{item:vertex-expectation}
        \item $\E[|\mc{B}(G_p)|] \geq \E[|\mc{A}(G_p)|] - \epsilon \card{V}$,
        \label{item:matching-expectation}
        \item For any vertex $v \in V$,
        $$\E[\psi(v)] \leq \Delta^{\poly(1/\epsilon)},$$
        where recall $\psi(v)$ is the size of the correlated set of $v$, i.e., the set of vertices that query the same vertices as $v$ in the LCA.
        \label{item:correlated-sets}
    \end{enumerate}
    All the probablistic statements above are taken over both the randomization of the algorithms $\mc{A}$, $\mc{B}$, as well as the randomization in realization $G_p$.
\end{lemma}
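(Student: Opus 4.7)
The plan is to construct $\mathcal{B}$ recursively via augmenting hyperwalks, following the blueprint of \cref{sec:analysis-overview}. Define $\mathcal{B}(H, r)$ for $r = 0, 1, \ldots, L$ with $L = \poly(1/\epsilon)$: set $\mathcal{B}(H, 0) := \emptyset$, and for $r \geq 1$, obtain $\mathcal{B}(H, r)$ from $\mathcal{B}(H, r-1)$ by applying an approximately maximal collection of vertex-disjoint short augmenting hyperwalks (alternating walks of length $O_\epsilon(1)$ accommodating blossoms in general graphs). The final output is $\mathcal{B}(H) := \mathcal{B}(H, L)$. To realize this as an LCA on edges, I would formalize a conflict graph on the hyperwalks (two adjacent iff they share a vertex of $H$), whose maximum degree is $\Delta^{O_\epsilon(1)}$, and invoke a known LCA for maximal independent set on it; validating a hyperwalk requires recursive calls into level $r-1$ to verify its constituent edges, which is exactly the recursive structure described in \cref{sec:analysis-overview}.

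For the size property $\E[|\mathcal{B}(G_p)|] \geq \E[|\mathcal{A}(G_p)|] - \epsilon n$, I would use the general-graph analogue of the Hopcroft--Karp short-augmenting-walk lemma: any matching that is $\epsilon$-far from maximum admits many vertex-disjoint short augmenting hyperwalks. Iterating for $L$ rounds yields $\E[|\mathcal{B}(G_p)|] \geq (1-\epsilon)\E[\mu(G_p)] \geq \E[|\mathcal{A}(G_p)|] - \epsilon n$, using $|\mathcal{A}(G_p)| \leq \mu(G_p)$. For marginal domination $\Pr[v \in \mathcal{B}(G_p)] \leq \Pr[v \in \mathcal{A}(G_p)]$, the construction must integrate $\mathcal{A}$: I would let $\mathcal{B}$ invoke $\mathcal{A}$ as a black box on the local subgraph explored during each augmentation and use $\mathcal{A}$'s local output to restrict which augmenting hyperwalks are applied; under suitable bookkeeping this prevents $v$'s match-probability in $\mathcal{B}$ from exceeding that in $\mathcal{A}$ while keeping the total size within $\epsilon n$ of $|\mathcal{A}|$.

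For the correlated-set bound, I would bound out- and in-queries separately and apply \cref{lem:correlated_ub}. The out-queries are bounded deterministically: at each of the $L$ levels, the LCA on edge $e$ invokes the MIS LCA ($\Delta^{O_\epsilon(1)}$ queries on the hyperwalk conflict graph) and makes $O_\epsilon(1)$ recursive level-$(r{-}1)$ calls per enumerated hyperwalk; unrolling across all $L$ levels yields $q^+(v) \leq \Delta^{\poly(1/\epsilon)}$. The in-queries require a more delicate counting argument: for each level $r$, I would fix all random tapes of the algorithm and count the queries reaching a fixed edge $e'$ at level $r-1$ by enumerating possible initiating edges; each edge can be reached from only $\Delta^{O_\epsilon(1)}$ others, giving $\E[q^-(v)] \leq \Delta^{\poly(1/\epsilon)}$. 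Since the LCA explores a connected subgraph, it is natural in the sense of \cref{def:natural-LCA}, and \cref{lem:correlated_ub} then gives $\E[\psi(v)] \leq Q^+ \cdot Q^- \leq \Delta^{\poly(1/\epsilon)}$.

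The main obstacle will be the in-query analysis. Unlike LOCAL algorithms, LCAs have no global round structure, so an edge may be queried by distant origins under specific random tapes; the counting argument must simultaneously account for all such origins and yields only expected bounds---precisely the regime handled by \cref{lem:correlated_ub}. A secondary difficulty is the marginal-domination property: coupling $\mathcal{B}$'s marginals with $\mathcal{A}$'s without a global view of $\mathcal{A}$'s output, while still maintaining the near-optimality guaranteed by augmenting hyperwalks, is a subtle trade-off that the construction must manage carefully.
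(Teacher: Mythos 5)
Your plan captures the high-level architecture correctly---recursive levels $\mathcal{B}(H,0)=\emptyset,\ \mathcal{B}(H,1),\ldots,\mathcal{B}(H,L)$ with $L=\poly(1/\epsilon)$, applying an approximate MIS over a conflict graph of short augmenting hyperwalks, bounding out-queries deterministically and in-queries in expectation via a counting argument, and closing with \cref{lem:correlated_ub}. However, two components are missing or wrong in ways that would prevent the argument from going through.

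First, you have misunderstood what a hyperwalk is. You describe them as ``alternating walks of length $O_\epsilon(1)$ accommodating blossoms in general graphs.'' In the paper, the object at level $r$ is a \emph{profile} $P = ((\mG_0, M_0), \ldots, (\mG_\alpha, M_\alpha))$ with $\alpha = 1/\epsilon^7 - 1$: $\mG_0$ is the input realization while $\mG_1,\ldots,\mG_\alpha$ are \emph{fresh independent realizations} drawn by the algorithm, and $M_i = \mathcal{B}(\mG_i,r-1)$ for each. A hyperwalk $W = ((e_1,s_1),\ldots,(e_k,s_k))$ tags each walk edge with the index of a realization, and ``augmenting'' is defined via the count $d_P(v) = |\{i : v \in V(M_i)\}|$---the walk must raise this count by exactly one at the two endpoints and keep it fixed at internal vertices. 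The multi-realization structure is the whole reason the recursion can simultaneously increase vertex marginals without overshooting; it has nothing to do with blossoms. With single-graph alternating walks as you propose, the Hopcroft--Karp argument would push $\mathcal{B}$ towards a near-maximum matching of $G_p$, but there is no mechanism that controls the resulting vertex marginals, so property~\ref{item:vertex-expectation} cannot hold in general (e.g., when $\mathcal{A}$ is a far-from-maximum matching algorithm whose marginals must nonetheless be dominated).

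Second, and related, your proposed mechanism for marginal domination---``let $\mathcal{B}$ invoke $\mathcal{A}$ as a black box on the local subgraph explored during each augmentation''---is not what is needed and would not give a clean domination. The paper's device is a global saturation test: a vertex $v$ is \emph{unsaturated} at level $r-1$ iff $\Pr[v \in V(\mathcal{B}(G_p,r-1))] < \Pr[v \in V(\mathcal{A}(G_p))] - 2\epsilon^2$, and only hyperwalks whose two endpoints are unsaturated are eligible to enter the MIS. Since every applied hyperwalk raises $d_P$ only at its two endpoints, marginals increase only at vertices that still have $\geq 2\epsilon^2$ slack relative to $\mathcal{A}$, which bounds them once one accounts for the inflation introduced by a single augmentation. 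Crucially, these probabilities are determined by $G$, $\{p_e\}$, and the algorithm descriptions alone---the LCA model here grants the algorithm the full structure of $G$ (only the per-edge random tapes are local)---so they can be precomputed and consulted with zero queries. Your worry about ``coupling $\mathcal{B}$'s marginals with $\mathcal{A}$'s without a global view'' is resolved this way, not by running $\mathcal{A}$ locally. Without the unsaturated-endpoint filter, properties~\ref{item:vertex-expectation} and~\ref{item:matching-expectation} need to be established jointly by a completely different argument, which your sketch does not supply.

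The parts of your plan touching the correlated-set bound are broadly aligned with the paper: the out-query bound unrolls the recursion deterministically (with the MIS LCA explicitly truncated after $\Tht(\tDelta^2/\epsilon)$ recursive calls), the in-query bound fixes the independently-drawn realizations and sums over all possible rank assignments by an inductive counting argument per level, and the LCA is natural because each query is adjacent to a previously queried edge. Those ingredients match; the gap is entirely in the definition of the augmentation step and the saturation filter.
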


\begin{definition} \label{def:MC}
    Let $\mA(C_p) \coloneq \MM(G_p) \cap C$, i.e.\ the set of crucial edges matched by $\MM$ in the realized graph $G_p$.
    Then, let the matching $M_C \subseteq C_p$ be the output of $\mB(C_p)$, where $\mB$ is the LCA obtained by applying \cref{lem:stochastic-matching-lca} to $\mA$ with parameter $\epsilon' = \epsilon^2$.
\end{definition}
\begin{remark}
    The only information $\mB(C_p)$ needs is: (1) the structure of $C$, (2) the realized subset $C_p \subseteq C$, and (3) $q_e$ for every edge $e \in C$ which is the matching probabilities of algorithm $\mA$. Therefore, the output of $M_C$ is independent of the remaining realizations outside of $C$.
\end{remark}

We restate the definition of $\delta(u, v)$.
\begin{definition} 
    For any two vertices $u$ and $v$,
    let $$\delta(u, v) \coloneq \Pr[Q^+(u) \cap Q^+(v) \neq \emptyset],$$ where $Q^+(u)$ and $Q^+(v)$ denote the explored vertices for LCA calls $\mB(C_p, u)$ and $\mB(C_p, v)$,
    and the probability is over the randomness of algorithm $\mB$.
\end{definition}

Based on $f$ and $M_C$, we define values $\{x_e\}_{e \in E}$. Then, we show that (1) $x$ is (almost) a fractional matching in expectation, (2) $\E[\card{x}]$ is almost as large as $\card{q}$, and (3) for any vertex $v$ the probability of $\Pr[x_v > 1 + O(\epsilon)]$ is small.

\begin{definition} \label{def:x}
    Let $M_C$ be the output of $\mB$ on the realized edges of $C$. Then, for every crucial edge $e \in C$
    $$
        x_e \coloneq 
        \begin{cases}
        1, & \text{if } e \in M_C \text{ and } e \in H_p, \\
        0, & \text{otherwise},
        \end{cases}
    $$
    and for every non-crucial edge $e \in N$,
    $$
        x_e \coloneq 
        \begin{cases}
        0, 
        & \text{if $\delta(u, v) > (\epsilon p)^{15}$,}  \\
        
        & \text{or $\Pr[u \notin V(M_C)] < \epsilon^2$ or $\Pr[v \notin V(M_C)] < \epsilon^2$},   \\
        & \text{or $e \notin E_p$ or  $u \in V(M_C)$ or $v \in V(M_C)$, and} \\
        \\
        \frac{f_e}{p_e\Pr[u \notin V(M_C)]\Pr[v \notin V(M_C)]}, & \text{otherwise}.
        \end{cases}
    $$
\end{definition}

\begin{claim}\label{clm:expectation_ub}
Let $x$ be as defined in \cref{def:x}. For any vertex $v \in V$, it holds that $\E[x_v] \leq 1 + \epsilon$. 
\end{claim}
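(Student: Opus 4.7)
The plan is to split $\E[x_v] = \E[x_v^C] + \E[x_v^N]$ according to whether the incident edges are crucial or non-crucial, and bound the two pieces separately. For the crucial part, observe that for each $e \in C$, $x_e \in \{0,1\}$ and $x_e = 1$ requires $e \in M_C$; since $M_C$ is a matching, this gives $x_v^C \leq \bone[v \in V(M_C)]$. Therefore
$$
\E[x_v^C] \;\leq\; \Pr[v \in V(M_C)] \;\leq\; \Pr[v \in V(\mA(C_p))] \;=\; q_v^C,
$$
using Property~\ref{item:vertex-expectation} of \cref{lem:stochastic-matching-lca} for the middle step, and the definition $\mA(C_p) = \MM(G_p) \cap C$ together with disjointness of the events $\{e \in \MM(G_p)\}_{e \ni v}$ for the final equality.

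For the non-crucial part, fix $e = (u,v) \in N$ and consider the ``otherwise'' branch of \cref{def:x}; the remaining branches contribute $0$. The key step is to exploit a clean separation of three sources of randomness. First, $f_e$ is a function of $H = \bigcup_i \MM(\mG_i)$, which is built from the fresh independent samples $\{\mG_i\}$ and is therefore independent of both $G_p$ and of the LCA tapes that produce $M_C$. Second, $\bone[e \in E_p]$ depends only on the single coin flip for edge $e$ and is independent of $M_C$, which is a function of the \emph{crucial} part of $G_p$ together with LCA tapes. Third, the ``otherwise'' condition $\delta(u,v) \leq (\epsilon p)^{15}$ combined with \cref{lem:distance-from-independence} applied to the two indicators $\bone[u \in V(M_C)]$ and $\bone[v \in V(M_C)]$ yields
$$
\Pr[u,v \notin V(M_C)] \;\leq\; \Pr[u \notin V(M_C)]\,\Pr[v \notin V(M_C)] + 2(\epsilon p)^{15}.
$$
Putting these facts together, and using the marginal lower bound $\epsilon^2$ that the ``otherwise'' branch enforces on $\Pr[u\notin V(M_C)]$ and $\Pr[v\notin V(M_C)]$,
$$
\E[x_e] \;=\; \frac{\E[f_e]\cdot p_e\cdot \Pr[u,v \notin V(M_C)]}{p_e\,\Pr[u \notin V(M_C)]\,\Pr[v \notin V(M_C)]} \;\leq\; \E[f_e]\!\left(1 + \frac{2(\epsilon p)^{15}}{\epsilon^4}\right) \;\leq\; (1+\epsilon)\,\E[f_e],
$$
for sufficiently small $\epsilon$. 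By Property~4 of \cref{lem:f}, $\E[f_e] \leq q_e$; summing over non-crucial edges incident to $v$ then gives $\E[x_v^N] \leq (1+\epsilon)\, q_v^N$.

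Combining the two bounds completes the proof: $\E[x_v] \leq q_v^C + (1+\epsilon) q_v^N \leq (1+\epsilon)(q_v^C + q_v^N) \leq (1+\epsilon)\, q_v \leq 1+\epsilon$, where the last step uses the earlier observation $q_v \leq 1$. The main obstacle will be rigorously disentangling the three independent sources of randomness -- the sample realizations $\{\mG_i\}$, the true realization $G_p$, and the LCA's private tapes -- in order to legitimately factor $\E[f_e\, \bone[e\in E_p]\, \bone[u,v\notin V(M_C)]]$ as the product above. Once that decomposition is in place, the near-independence slack $(\epsilon p)^{15}$ is negligible compared with the $\epsilon^4$ marginal lower bound baked into the ``otherwise'' condition, so the small error is absorbed comfortably into the final $(1+\epsilon)$ factor.
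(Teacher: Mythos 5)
Your proof is correct and reaches the same conclusion, but via a genuinely different decomposition than the paper's. The paper conditions on the event $v \in V(M_C)$: it argues $\E[x_v \mid v \in V(M_C)] = 1$ (in fact $\leq 1$, since $x_e = 1$ additionally requires $e \in H_p$) and then bounds $\E[x_v \mid v \notin V(M_C)] \leq (1+\epsilon) q_v^N / \Pr[v \notin V(M_C)] \leq (1+\epsilon) q_v^N / (1 - q_v^C) \leq 1 + \epsilon$, using \cref{clm:conditional-probability} to handle the conditional probability $\Pr[u_i \notin V(M_C) \mid v \notin V(M_C)]$. You instead compute the expectation unconditionally, splitting $\E[x_v] = \E[x_v^C] + \E[x_v^N]$ by edge type: $\E[x_v^C] \leq \Pr[v \in V(M_C)] \leq q_v^C$ (via item~\ref{item:vertex-expectation} of \cref{lem:stochastic-matching-lca}), and $\E[x_v^N] \leq (1+\epsilon) q_v^N$ by bounding each $\E[x_e]$ directly using the independence of $f_e$ and $\bone[e \in E_p]$ from $M_C$, together with the near-independence of $\bone[u \notin V(M_C)]$ and $\bone[v \notin V(M_C)]$ applied to the \emph{joint} probability $\Pr[u,v \notin V(M_C)]$ via \cref{clm:difference-by-dtv}. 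The two branches combine cleanly as $q_v^C + (1+\epsilon)q_v^N \leq (1+\epsilon)q_v \leq 1+\epsilon$. Your route is symmetric in $u$ and $v$, avoids dividing by $\Pr[v\notin V(M_C)]$, and sidesteps the need for \cref{clm:conditional-probability} in favor of the simpler TV-distance bound; it also implicitly handles the $\bone[e \in H_p]$ factor that the paper's ``$=1$'' statement glosses over. The only cosmetic slip is the stray factor of $2$ in $2(\epsilon p)^{15}$ — the TV bound gives $(\epsilon p)^{15}$ directly — but this has no effect on the conclusion.
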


\begin{proof}
Let $M_C$ be the matching defined in \cref{def:MC}. Take any vertex $v \in V$. First, we prove that $\E[x_v \mid v \in V(M_C)]  = 1$. 
If $v \in V(M_C)$, then $x$ is equal to $1$ on the matching edge of $v$ in $M_C$ (when the edge appears in $H_p$), and zero on all the other edges adjacent to $v$.
Therefore,
$$
\E[x_v \mid v \in V(M_C)] \leq 1.
$$

It remains to bound the expectation conditioned on $v \notin V(M_C)$. 
If we have $\Pr[v \notin V(M_C)] < \epsilon^2$, then $x$ is zero on all the adjacent edges of $v$ (by definition of $x$).
Hence, we restrict our attention to the case where $\Pr[v \in V(M_C)] \geq \epsilon^2$.
Let $e_1, \ldots, e_k$ be the non-crucial edges adjacent to $v$,
such that $\Pr[u_i \notin V(M_C)] \geq \epsilon^2$
and $\delta(u_i, v) \leq (\epsilon p)^{15}$,
where $u_i$ is the other endpoint of $e_i$.

Observe that since $v \notin V(M_C)$, the value of $x$ is zero on the crucial edges adjacent to $v$.
Therefore, by linearity of expectation, we get 
$$\E[x_v\mid  v \notin V(M_C)] = \sum_{1 \leq i \leq k} \E[x_{e_i} \mid v \notin V(M_C)]. $$
Given that $v \notin V(M_C)$, the value of $x_{e_i}$ is $\frac{f_{e_i}}{p_{e_i}\Pr[u_i \notin V(M_C)]\Pr[v \notin V(M_C)]}$ if $e_i$ is realized and $u_i \notin V(M_C)$. Otherwise, $x_{e_i}$ is zero. Therefore,
$$
E[x_{e_i} \mid v \notin V(M_C)]  =   \Pr[u_i \notin V(M_C),\: e_i \in E_p \mid  v \notin V(M_C) ] \frac { \E[f_{e_i}]}{p_{e_i} \Pr[u_i \notin V(M_C)]\Pr[v \notin V(M_C)]}.
$$

Recall $M_C$ was computed using $\mB$ based on $C$ and its realization $C_p$.
As a result, the realization of edge $e_i$ is independent from $u_i \notin V(M_C)$ and $v \notin V(M_C)$, and we have

\begin{align*}
E[x_{e_i} \mid v \notin V(M_C)] 
&= \frac {p_{e_i} \Pr[u_i \notin V(M_C) \mid  v \notin V(M_C) ]}{p_{e_i} \Pr[u_i \notin V(M_C)]\Pr[v \notin V(M_C)]} \E[f_{e_i}] \\ 
&= \frac {\Pr[u_i \notin V(M_C) \mid  v \notin V(M_C) ]}{ \Pr[u_i \notin V(M_C)]\Pr[v \notin V(M_C)]} \E[f_{e_i}].
\end{align*} 

Since $\delta(u_i, v) \leq (\epsilon p)^{15}$, by \cref{lem:distance-from-independence}, the events $u_i \notin V(M_C)$ and $ v \notin V(M_C)$ are $(\epsilon p)^{15}$-close to being independent. Hence, by \cref{clm:conditional-probability}, we have
\begin{align*}
E[x_{e_i} \mid v \notin V(M_C)] 
&\leq \frac { \Pr[u_i \notin V(M_C)] +(\epsilon p)^{15}/ \Pr[v \notin V(M_C)]}{\Pr[u_i \notin V(M_C)]\Pr[v \notin V(M_C)]}  \E[f_{e_i}] \\ 
&= \left( 1 + \frac{(\epsilon p)^{15}}{\Pr[u_i \notin V(M_C)]\Pr[v \notin V(M_C)]}  \right) \frac{\E[f_{e_i}]}{\Pr[v \notin V(M_C)]}.
\end{align*} 
By the definition of $e_i$, it holds $\Pr[u_i \notin V(M_C)] \geq \epsilon^2$ and $\Pr[v \notin V(M_C)] \geq \epsilon^2$. Therefore,
$$
E[x_{e_i} \mid v \notin V(M_C)] \leq  (1 + \epsilon)\frac{\E[f_{e_i}]}{\Pr[v \notin V(M_C)]}.
$$
Applying $E[f_{e_i}] \leq q_{e_i}$ from \cref{lem:f}, we have
$$E[x_{e_i} \mid v \notin V(M_C)] \leq  (1 + \epsilon)\frac{q_{e_i}}{\Pr[v \notin V(M_C)]}.$$
Consequently,
$$\E[x_v\mid  v \notin V(M_C)] = \sum_{1 \leq i \leq k} \E[x_{e_i} \mid v \notin V(M_C)] \leq \sum_{1 \leq i \leq k}\frac{(1+\epsilon)q_{e_i}}{\Pr[v \notin V(M_C)]} 
\leq \frac{(1 + \epsilon) q^N_v}{\Pr[v \notin V(M_C)]},
$$
where the last inequality follows from the fact that $e_1, \ldots, e_k$ are non-crucial.
Finally, observe that $\Pr[v \in V(M_C)] \leq q^C_v$ (item \ref{item:vertex-expectation} of \cref{lem:stochastic-matching-lca}), and $q^N_v + q^C_v \leq 1$. Thus, it follows
\begin{equation*}
\E[x_v\mid  v \notin V(M_C)] \leq  \frac{(1 + \epsilon) q^N_v}{1 - q^C_v} \leq (1 + \epsilon). \qedhere
\end{equation*}
\end{proof}

\begin{claim}\label{clm:x-expectation-lb}
    Let $x$ be as defined in \cref{def:x}, it holds that $$\E[|x|] \geq (1-7\epsilon)\opt.$$
\end{claim}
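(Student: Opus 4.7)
The plan is to split $\E[|x|] = \E\bigl[\sum_{e \in C} x_e\bigr] + \E\bigl[\sum_{e \in N} x_e\bigr]$ and lower bound each part by approximately $q_C$ and $q_N$ respectively; combined with $q_C + q_N \geq (1-\epsilon)\opt$ from \cref{clm:taus}, this gives the claim.

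For the crucial part, the key observation is that $M_C = \mB(C_p)$ depends only on the realization $C_p$ of the \emph{crucial} edges within the true realization $E_p$, while $H = \bigcup_i \MM(\mG_i)$ is built from \emph{fresh} realizations $\mG_1,\dots,\mG_R$ drawn independently of $E_p$. Hence $M_C$ and $H$ are independent, and because $M_C \subseteq E_p$, we get $\E[x_e] = \Pr[e \in M_C]\cdot \Pr[e \in H]$ for every $e \in C$. For crucial $e$, $\Pr[e \in H] \geq 1 - (1-\tau^+)^R \geq 1 - e^{-\tau^+/(2\tau^-)} \geq 1 - \epsilon$, where the last step uses that the exponent relating $\tau^-$ to $\tau^+$ in \cref{clm:taus} can be taken arbitrarily large. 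Combined with item~\ref{item:matching-expectation} of \cref{lem:stochastic-matching-lca}, which yields $\E[|M_C|] \geq q_C - \epsilon^2 n \geq q_C - \epsilon \opt$ under \cref{assumption}, this gives $\E\bigl[\sum_{e \in C} x_e\bigr] \geq q_C - 2\epsilon \opt$.

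For non-crucial edges, let $N' \subseteq N$ be those $e=(u,v)$ satisfying all three good conditions in \cref{def:x}: $\delta(u,v) \leq (\epsilon p)^{15}$ and $\Pr[u \notin V(M_C)], \Pr[v \notin V(M_C)] \geq \epsilon^2$; edges in $N \setminus N'$ contribute $x_e = 0$. For $e \in N'$, the random objects $f_e$ (a function of the fresh $\mG_i$'s), $\bone[e \in E_p]$ (realization of a non-crucial edge, independent of $C_p$), and the pair $(\bone[u \in V(M_C)], \bone[v \in V(M_C)])$ (a function of $C_p$) are mutually independent, so the $p_e$ factors cancel and
\[
\E[x_e] \;=\; \E[f_e] \cdot \frac{\Pr[u \notin V(M_C),\,v \notin V(M_C)]}{\Pr[u \notin V(M_C)]\,\Pr[v \notin V(M_C)]}.
\]
By \cref{lem:distance-from-independence} the two "not matched" events are $(\epsilon p)^{15}$-close to independent, and since both marginals exceed $\epsilon^2$, the ratio above is at least $1 - (\epsilon p)^{15}/\epsilon^4 \geq 1 - \epsilon$. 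Applying items~4 and~5 of \cref{lem:f} then gives $\E\bigl[\sum_{e \in N} x_e\bigr] \geq (1-\epsilon)\bigl(q_N - \epsilon\opt - q_{N \setminus N'}\bigr)$, so the entire analysis reduces to bounding $q_{N \setminus N'}$.

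The main obstacle is precisely this last bound, which is where the careful parameter tuning of \cref{clm:taus} is essential. For the "bad marginals" case, item~\ref{item:vertex-expectation} of \cref{lem:stochastic-matching-lca} implies that any vertex $w$ with $\Pr[w \notin V(M_C)] < \epsilon^2$ satisfies $q^C_w > 1 - \epsilon^2$ and hence $q^N_w < \epsilon^2$, giving total non-crucial $q$-mass at most $\epsilon^2 n \leq \epsilon \opt$ via \cref{assumption}. For the "large $\delta$" case, the identity $\sum_u \delta(u,v) = \E[\psi(v)] \leq \Delta_C^{\poly(1/\epsilon^2)}$ from item~\ref{item:correlated-sets} of \cref{lem:stochastic-matching-lca} combined with Markov caps the number of bad neighbors per vertex at $\Delta_C^{\poly(1/\epsilon^2)}/(\epsilon p)^{15}$; multiplying by $n$, by the per-edge ceiling $q_e \leq \tau^-$, and using $\Delta_C \leq 1/\tau^+$, the total $q$-mass is at most $n \tau^- \cdot (\tau^+)^{-\poly(1/\epsilon^2)} / (\epsilon p)^{15}$, which is made $\leq \epsilon \opt$ by choosing the exponent in the constraint $\tau^- < (\tau^+)^{\poly(1/\epsilon)}$ of \cref{clm:taus} larger than $\poly(1/\epsilon^2) + O(1)$. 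Summing the two cases yields $q_{N \setminus N'} \leq 2\epsilon \opt$, and combining with the crucial bound gives $\E[|x|] \geq q_C + q_N - 6\epsilon \opt \geq (1 - 7\epsilon)\opt$, as claimed.
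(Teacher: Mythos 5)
Your proof is correct and follows essentially the same approach as the paper: the same split into crucial and non-crucial contributions, the same use of independence between $M_C$ and $H$ for the crucial part, the same definition and two-case bounding of $N \setminus N'$ (the paper factors the $\delta$-case into a separate claim, \cref{clm:small-dependent-edges}, but the argument is identical), and the same almost-independence calculation for $e \in N'$ via \cref{lem:distance-from-independence}. The bookkeeping that yields $(1-7\epsilon)\opt$ also matches.
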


\begin{proof}
    To prove the claim, we show that 
    $$\E[x_C] \geq q_C - 2\epsilon \opt, \qquad \text{and} \qquad \E[x_N] \geq q_N - 4\epsilon \opt.$$
    Then, the claim follows from $q_N + q_C \geq (1-\epsilon)\card{q}$ (\cref{clm:taus}). 
    
    To prove the lower bound on $\E[x_C]$,
    note that $x_C = \card{M_C \cap H}$,
    i.e.\ the number of edges of $M_C$ that \Cref{alg:main} includes in $H$.
    Therefore, we have
    \begin{align}
        \E[x_C]
        &= \E[\card{M_C \cap H}] \notag \\
        &= \sum_{e \in C} \Pr[e \in M_C, e \in H] \notag \\
        &= \sum_{e \in C} \Pr[e \in M_C] \Pr[e \in H], \label{eq:xc-lb}
    \end{align}
    where the last inequality holds because $M_C$ is computed by $\mB(C_p)$ based on the realized edges of $C$, and $H$ is a function of the random tape of \Cref{alg:main}, therefore the two are independent.
    Recall that for $e \in C$ it holds $q_e \geq \tau^+$.
    As a result, $\Pr[e \notin H]$ can be bounded as follows:
    \begin{align*}
        \Pr[e \notin H] 
        &= \prod_{1 \leq i \leq R} \Pr[e \notin \MM(\mG_i)] \\
        &= (1 - q_e)^R \\
        &\leq (1 - \tau^+)^R \\
        &\leq e^{-\tau^+ R} \\
        &\leq \epsilon.
    \end{align*}
    The last inequality follows from $R = \frac{1}{2\tau^-} \geq (1/\tau^+)^{\poly(1/\epsilon)}$.
    Plugging back into \eqref{eq:xc-lb}, we get
    \begin{align}
        \E[x_C] 
        &\geq (1 - \epsilon) \sum_{e \in C} \Pr[e \in M_C] \notag\\
        &= (1 - \epsilon) \E[\card{M_C}] \notag\\
        &\geq (1 - \epsilon) (q_C - \epsilon^2 n) \label{eq:mc-large}\\
        &\geq q_C - 2\epsilon \opt. \label{eq:xc-lb-assumption}
    \end{align}
    Here, \eqref{eq:mc-large} is directly implied by item \ref{item:matching-expectation} of \cref{lem:stochastic-matching-lca},
    and \eqref{eq:xc-lb-assumption} follows from \cref{assumption}.
    
    To prove $\E[x_N] \geq q_N - \epsilon \opt$,
    let $N'$ be the set of non-crucial edges $(u, v)$ such that $\Pr[v \notin V(M_C)] \geq \epsilon^2$, $\Pr[u \notin V(M_C)] \geq \epsilon^2$, and $\delta(u, v) \leq (\epsilon p)^{15}$.
    We first show that $q_{N \setminus N'} \leq 4 \epsilon \opt$, i.e.\ we can ignore the edges not in $N'$.

    Consider an edge $(u, v) \in N \setminus N'$.
    There are two possibilities for $(u, v)$:
    First, $\delta(u, v) > (\epsilon p)^{15}$.
    Let $D$ be the set of all such edges.
    The contribution of these edges to $q$ is at most $q_D \leq \epsilon
    \opt$ by the choice of $\tau^-$ and $\tau^+$ (\cref{clm:small-dependent-edges} below).
    Second, either $\Pr[u \notin V(M_C)] < \epsilon^2$ or $\Pr[v \notin V(M_C)] < \epsilon^2$. Call these edges $F$. Their contribution is at most $\epsilon^2 n$.
    Because for any vertex $u$ with $\Pr[u \notin V(M_C)] < \epsilon^2$, it holds:
    \begin{align}
    \sum_{\substack{e \ni u \\ e \in F}} q_e 
    &\leq \sum_{\substack{e \ni u \\ e \in N}} q_e \label{eq:x-expectation-f-subset-N} \\
    &= q^N_u \label{eq:x-expectation-N-def}\\
    &\leq 1 - q^C_u  \label{eq:x-expectation-qc-plus-qn} \\
    &\leq \Pr[u \notin V(M_C)] \label{eq:x-expectation-from-22} \\
    &\leq \epsilon^2. \notag
    \end{align}
    Here, \eqref{eq:x-expectation-f-subset-N} follows from $F \subseteq N$,
    \eqref{eq:x-expectation-N-def} is by definition,
    \eqref{eq:x-expectation-qc-plus-qn} follows from $q^C_u + q^N_u \leq q_u \leq 1$, and
    \eqref{eq:x-expectation-from-22} holds by item \ref{item:vertex-expectation} of \cref{lem:stochastic-matching-lca}.
    Summing over $u$, we get:
    $$
    q_F \leq \sum_u \sum_{\substack{e \ni u \\ e \in F}} q_e 
    \leq \epsilon^2 n \leq \epsilon \opt.
    $$
    The second ineuqality here follows from $\opt \geq \epsilon n$ (\cref{assumption}).
    Then, we can conclude:
    \begin{equation*}
    q_{N \setminus N'} = q_{D \cup F} \leq q_D + q_f \leq 2\epsilon \opt.
    \end{equation*}

    As the final step, we prove $\E[x_e \mid f_e] \geq (1-\epsilon)f_e$ for all edges $e \in N'$,
    where recall $f$ is the fractional matching that mimics $q$ as defined in \cref{lem:f}.
    Note that for $e = (u, v) \in N'$, the value of $x_e$ is equal to $\frac{f_e}{p_e \Pr[u \notin V(M_C)] \Pr[v \notin V(M_C)]}$ if $e$ is realized and $u, v \notin V(M_C)$, and zero otherwise.
    Conditioning on $f_e$ we get:
    \begin{equation}
    \E[x_e \mid f_e] = \Pr[e \in E_p, u \notin V(M_C), v \notin V(M_C)] \frac{f_e}{p_e \Pr[u \notin V(M_C)] \Pr[v \notin V(M_C)]}.
    \label{eq:x-expectation-conditional-exp}
    \end{equation}
    Recall, that $M_C$ is only determined by the set of realized crucial edges $C_p$ and is independent of non-crucial edges. Hence,
    $$
    \Pr[e \in E_p, u \notin V(M_C), v \notin V(M_C)]
    = p_e \Pr[u \notin V(M_C), v \notin V(M_C)].
    $$
    Also, since $(u, v) \in N'$, the events $u \notin V(M_C)$ and $v \notin V(M_C)$ are $(\epsilon p)^{15}$-close to being independent, and by \cref{clm:difference-by-dtv} it follows:
    $$
    \Pr[e \in E_p, u \notin V(M_C), v \notin V(M_C)]
    \geq p_e (\Pr[u \notin V(M_C)] \Pr[v \notin V(M_C)] - (\epsilon p)^{15}).
    $$
    Plugging this back in \eqref{eq:x-expectation-conditional-exp} implies
    \begin{align*}
    \E[x_e \mid f_e] &\geq p_e (\Pr[u \notin V(M_C)] \Pr[v \notin V(M_C)] - (\epsilon p)^{15}) \frac{f_e}{p_e \Pr[u \notin V(M_C)] \Pr[v \notin V(M_C)]} \\
    &\geq \left(1 - \frac{(\epsilon p)^{15}}{p_e\epsilon^4}\right) f_e \\
    &\geq (1 - \epsilon) f_e.
    \end{align*}

    By lifting the conditioning on $f$, and summing over all the edges, we get:
    \begin{align}
        \E[x_{N'}]
        &= \sum_{e \in N'} \E[\E[x_e \mid f_e]] \notag \\
        &\geq  \sum_{e \in N'} \E[(1-\epsilon)f_e] \notag \\
        &= (1-\epsilon) \E[f_{N'}] \notag \\
        &= (1-\epsilon) \E[f_{N} - f_{N \setminus N'}] \notag \\
        &\geq (1-\epsilon) (q_N - \epsilon \opt - q_{N \setminus N'}) \label{eq:x-expectation-from-f} \\
        &\geq q_N - 4\epsilon \opt, \label{eq:nnp-ub}
    \end{align}
    where \eqref{eq:x-expectation-from-f} follows from 
    $\E[f_N] = \E[|f|] \geq q_N - \epsilon \opt$ and $\E[f_{N\setminus N'}] \leq q_{N \setminus N'}$(\cref{lem:f});
    and \eqref{eq:nnp-ub} utilizes $q_{N\setminus N'} \leq \epsilon \opt$.
    
    Combining this with $\E[x_C] \geq q_C - 2 \epsilon \opt$ and $q_N + q_C > \card{q} - \epsilon\opt$ (\cref{clm:taus}),
    gives:
    \begin{equation*}
    \E[\card{x}] \geq E[x_C] + \E[x_N] \geq (q_C - 2\epsilon\opt) + (q_N - 4\epsilon\opt) \geq (1 - 7\epsilon)\opt. \qedhere
    \end{equation*}
\end{proof}

\begin{claim} \label{clm:small-dependent-edges}
    Let $\tau^-$ and $\tau^+$ be as in \cref{clm:taus},
    and let $D$ be the set of non-crucial edges $(u, v)$ for which $\delta(u, v) \geq (\epsilon p)^{15}$. Then, it holds
    $$
        q_D \leq \epsilon^2 n \leq \epsilon \opt.
    $$
\end{claim}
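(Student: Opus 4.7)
The plan is to leverage item~3 of \cref{lem:stochastic-matching-lca}, which gives $\E[\psi(v)] \leq \Delta_C^{\poly(1/\epsilon)}$ for every vertex $v$, combined with $\Delta_C \leq 1/\tau^+$ and the freedom to choose $\tau^-$ much smaller than any fixed power of $\tau^+$ granted by \cref{clm:taus}.

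First, I would translate the bound on $\E[\psi(v)]$ into a bound on $\sum_{u} \delta(u, v)$. By the definition of the correlated set, $\psi(v) = \sum_{u \in V} \bone_{Q^+(u) \cap Q^+(v) \neq \emptyset}$, so by linearity of expectation, $\sum_{u} \delta(u, v) = \E[\psi(v)] \leq \Delta_C^{\poly(1/\epsilon)} \leq (1/\tau^+)^{\poly(1/\epsilon)}$.

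Next, I would fix a vertex $v$ and count how many edges of $D$ are incident to it. Every edge $(u,v) \in D$ contributes at least $(\epsilon p)^{15}$ to $\sum_u \delta(u,v)$, so the number of $D$-edges incident to $v$ is at most $(1/\tau^+)^{\poly(1/\epsilon)}/(\epsilon p)^{15}$. Since $D \subseteq N$, every such edge has $q_e \leq \tau^-$, which yields
$$
\sum_{\substack{e \ni v \\ e \in D}} q_e \;\leq\; \tau^- \cdot \frac{(1/\tau^+)^{\poly(1/\epsilon)}}{(\epsilon p)^{15}}.
$$
Summing over $v \in V$ and dividing by $2$ gives
$$
q_D \;\leq\; \frac{n}{2} \cdot \tau^- \cdot \frac{(1/\tau^+)^{\poly(1/\epsilon)}}{(\epsilon p)^{15}}.
$$

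Finally, by \cref{clm:taus} we may take $\tau^- \leq (\tau^+)^{K}$ for an arbitrarily large $K = \poly(1/\epsilon)$, which (together with $\tau^+ < (\epsilon p)^2$) makes the right-hand side at most $\epsilon^2 n$. The claim then follows since $\opt \geq \epsilon n$ by \cref{assumption}, so $\epsilon^2 n \leq \epsilon \opt$. The only non-routine part is matching up the exponents so that the $(1/\tau^+)^{\poly(1/\epsilon)}/(\epsilon p)^{15}$ blow-up is absorbed by the choice of $\tau^-$; this is handled by invoking \cref{clm:taus} with a sufficiently large polynomial in $1/\epsilon$ for the exponent.
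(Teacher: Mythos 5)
Your proof is correct and follows essentially the same route as the paper's: both translate the bound $\E[\psi(v)] \leq \Delta_C^{\poly(1/\epsilon)}$ from item~3 of \cref{lem:stochastic-matching-lca} into $\sum_u \delta(u,v) \leq (1/\tau^+)^{\poly(1/\epsilon)}$, apply Markov to count edges with $\delta \geq (\epsilon p)^{15}$, multiply by the non-crucial bound $q_e \leq \tau^-$, and invoke \cref{clm:taus} to make $\tau^-$ small enough relative to $\tau^+$. The only cosmetic difference is that you bound the $q_D$ contribution per vertex and then sum (picking up a harmless factor of $1/2$), whereas the paper first bounds $|D|$ globally and then multiplies by $\tau^-$; the two are interchangeable.
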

\begin{proof}
    For any fixed vertex $u$, we have (by item \ref{item:correlated-sets} of \cref{lem:stochastic-matching-lca})
    $$
    \sum_v \delta(u, v) = \E[\psi(u)] \leq \Delta^{\poly(1/\epsilon)}.
    $$
    Consequently, there are at most $\Delta^{\poly(1/\epsilon)} / (\epsilon p)^{15}$  edges $(u, v)$ with $\delta(u, v) > (\epsilon p)^{15}$.
    Considering this for all $u$, there are at most a total $n \Delta^{\poly(1/\epsilon)} / (\epsilon p)^{15}$ edges $(u, v)$ with $\delta(u, v) > (\epsilon p)^{15}$, i.e.\ 
    $$
    \card{D} \leq n \Delta^{\poly(1/\epsilon)} / (\epsilon p)^{15}.
    $$

    Finally, recall that all the edges in $D$ are non-crucial. That is, $q_e \leq \tau^-$ for all $e \in D$. It follows:
    \begin{equation*}
    q_D = \sum_{e \in D} q_e \leq \tau^- \card{D} \leq \frac{\tau^- \Delta^{\poly(1/\epsilon)}}{(\epsilon p)^{15}} n \leq \epsilon^2 n \leq \epsilon\opt. \qedhere
    \end{equation*}
\end{proof}

\begin{claim} \label{clm:x-tail-bound}
   Let $x$ be as defined in \cref{def:x}. For any vertex $v \in V$, it holds that 
    $$\Pr[x_v \geq 1 + 2\epsilon] \leq \epsilon^2.$$
\end{claim}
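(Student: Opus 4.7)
The plan is to apply Chebyshev's inequality after handling two trivial cases by direct argument. First, if $v \in V(M_C)$, then by \cref{def:x} at most one crucial edge at $v$ contributes to $x_v$ (contributing at most $1$), while every non-crucial edge at $v$ has $x_e = 0$; hence $x_v \leq 1 < 1 + 2\epsilon$ deterministically. Similarly, if $\Pr[v \notin V(M_C)] < \epsilon^2$, the definition of $x$ again zeros out all non-crucial contributions at $v$, leaving only the at-most-one crucial contribution, so $x_v \leq 1$. In both situations the claim holds vacuously, so we may assume $\Pr[v \notin V(M_C)] \geq \epsilon^2$. Since the claim also trivially holds on the event $\{v \in V(M_C)\}$, it suffices to prove $\Pr[x_v \geq 1 + 2\epsilon \mid v \notin V(M_C)] \leq \epsilon^2$.

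By the proof of \cref{clm:expectation_ub}, $\E[x_v \mid v \notin V(M_C)] \leq 1 + \epsilon$, so Chebyshev reduces the task to showing $\var(x_v \mid v \notin V(M_C)) \leq \epsilon^4$. Conditioned on $\{v \notin V(M_C)\}$, $x_v = \sum_i x_{e_i}$ where $e_i = (u_i, v)$ ranges over non-crucial edges for which $x_{e_i}$ can be nonzero (i.e., edges satisfying the conditions of \cref{def:x}), and each $x_{e_i}$ is deterministically at most $\beta := \frac{1}{\sqrt{\epsilon R}\, p\, \epsilon^4}$, using $f_{e_i} \leq 1/\sqrt{\epsilon R}$ from \cref{lem:f} together with the lower bounds $\Pr[u_i \notin V(M_C)], \Pr[v \notin V(M_C)] \geq \epsilon^2$. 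Decomposing $\var(x_v \mid v \notin V(M_C)) = \sum_i \var(x_{e_i} \mid \cdot) + \sum_{i \neq j} \cov(x_{e_i}, x_{e_j} \mid \cdot)$, the diagonal part is immediate: $\sum_i \var(x_{e_i} \mid \cdot) \leq \beta \cdot \E[x_v \mid \cdot] \leq 2\beta$, which is at most $\epsilon^4/2$ once $\tau^-$ is chosen small enough in \cref{clm:taus}; this is permitted because $\tau^-$ can be any $(\tau^+)^{\poly(1/\epsilon)}$ with arbitrarily large exponent.

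For the off-diagonal covariances I would split pairs based on $\gamma_{ij} := \delta(u_i, u_j) + \delta(u_i, v) + \delta(u_j, v)$ into \emph{good} pairs ($\gamma_{ij} \leq T$) and \emph{bad} pairs, for a threshold $T = \poly(\epsilon, p)$ to be chosen. Since $H$ (and hence $f$) is independent of $G_p$ and each non-crucial edge realization is independent of $M_C$, the covariance of $x_{e_i}$ and $x_{e_j}$ factors as a product of $f$-moments with a covariance of indicators involving $M_C$. For good pairs, \cref{lem:distance-from-independence} implies that $(\bone_{u_i \notin V(M_C)}, \bone_{u_j \notin V(M_C)}, \bone_{v \in V(M_C)})$ are $\gamma_{ij}$-close to being independent, so \cref{clm:cov_ub} bounds each good-pair covariance by $O(f_{e_i} f_{e_j}\gamma_{ij}/\epsilon^{10})$; summing with $\sum_i f_{e_i} \leq 1$ from \cref{lem:f} yields $O(T/\epsilon^{10})$. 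For bad pairs, item~\ref{item:correlated-sets} of \cref{lem:stochastic-matching-lca} gives $\sum_{u_j} \delta(u_i, u_j) \leq \Delta_C^{\poly(1/\epsilon)}$, so the number of bad $j$ per $i$ is at most $\Delta_C^{\poly(1/\epsilon)}/T$; combined with the crude bound $f_{e_j} \leq 1/\sqrt{\epsilon R}$, the bad-pair contribution to the covariance sum is $O(\Delta_C^{\poly(1/\epsilon)}/(T \sqrt{\epsilon R}\, \epsilon^{10}))$.

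The main obstacle is balancing the parameters so that both the good-pair and bad-pair contributions fall below $\epsilon^4/4$: making $T$ small helps the good-pair bound but forces $R$ to satisfy a lower bound of the form $R \gtrsim \Delta_C^{\poly(1/\epsilon)}/(T^2 \cdot \poly(\epsilon, p))$. Since $\Delta_C \leq 1/\tau^+$ and \cref{clm:taus} permits $\tau^-$ (and hence $R = 1/(2\tau^-)$) to be chosen as any polynomial in $1/\tau^+$ with arbitrarily large exponent, this constraint is satisfiable. With the parameters aligned, $\var(x_v \mid v \notin V(M_C)) \leq \epsilon^4$, and Chebyshev's inequality yields the desired bound $\Pr[x_v \geq 1 + 2\epsilon \mid v \notin V(M_C)] \leq \epsilon^2$, completing the proof.
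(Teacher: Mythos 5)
Your proposal is correct and follows the same high-level scheme as the paper: dispose of the cases $v \in V(M_C)$ and $\Pr[v \notin V(M_C)] < \epsilon^2$ directly, then apply Chebyshev conditionally on $v \notin V(M_C)$ and bound the conditional variance. Where you diverge is in how the covariance sum is controlled. The paper's proof applies \cref{clm:cov_ub} uniformly to every pair $(i,j)$ --- including $i=j$, where $\delta(u_i, u_i) = 1$ but this is absorbed because each term is scaled by $f_{e_j}$ --- and then bounds the inner sum $\sum_j \delta(u_i, u_j) f_{e_j}$ in one stroke using $f_{e_j} \leq 1/\sqrt{\epsilon R}$ together with $\sum_j \delta(u_i, u_j) \leq \E[\psi(u_i)] \leq (1/\tau^+)^{\poly(1/\epsilon)}$. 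Your version separates the diagonal terms, then introduces a threshold $T$ to split off-diagonal pairs into \enquote{good} (small $\gamma_{ij}$, where \cref{clm:cov_ub} gives a strong bound) and \enquote{bad} (few in number by Markov on $\E[\psi(u_i)]$, each bounded crudely by $M_X M_Y$ with $f_{e_j} \leq 1/\sqrt{\epsilon R}$). The two routes rest on the same two facts --- the expected correlated set size from \cref{lem:stochastic-matching-lca} and the cap on $f_e$ from \cref{lem:f} --- and impose the same qualitative constraint, $R \gtrsim \Delta_C^{\poly(1/\epsilon)}/\poly(\epsilon, p)$, which \cref{clm:taus} accommodates. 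The paper's formulation is a bit more economical since it avoids the auxiliary parameter $T$, but your good/bad split is equally valid and arguably makes the role of the correlated-set bound more transparent. One small presentational remark: you should state explicitly that the variance analysis is performed conditionally on $f$ (the paper opens with \enquote{We fix the values of $f$}), since otherwise the covariance does not cleanly factor into an $f$-moment times an $M_C$-indicator covariance as your text suggests; conditioning on $f$ is harmless because $f$ is determined by $H$, which is independent of $G_p$ and of the randomness in $\mB$.
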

\begin{proof} 
We fix the values of $f$, and analyze $\Pr[x_v \geq 1 + 2\epsilon \mid v \in V(M_C)]$ and $\Pr[x_v \geq 1 + 2\epsilon \mid v \notin V(M_C)]$ separately.
When $v \in V(M_C)$, $x$ equals $1$ on the edge of $M_C$ that matches $v$, and is zero on the other edges of $v$.
Therefore $x_v$ is always $1$ in this case, and we have:
$$
\Pr[x_v \geq 1 + 2\epsilon \mid v \in V(M_C)] = 0.
$$

Now, consider the case where $v \notin V(M_C)$.
If $\Pr[v \notin V(M_C)] < \epsilon^2$, then $x_v$ is equal to zero.
Therefore, we assume $Pr[v \notin V(M_C)] \geq \epsilon^2$.
To bound $\Pr[x_v > 1 + 2\epsilon \mid v \notin V(M_C)]$,
we use Chebyshev's inequality.
\begin{equation}
\Pr[x_v > 1+2\epsilon \mid v \notin V(M_C)] 
\leq \Pr[x_v - \mu  > \epsilon \mid v \notin V(M_C)] 
\leq \frac{\var(x_v \mid v \notin V(M_C)}{\epsilon^2},
\label{eq:x-tail-cheybshev}
\end{equation}
where $\mu = \E[x_v \mid v \notin V(M_C)]$ is at most $1 + \epsilon$ as shown in \cref{clm:expectation_ub}.

The variance of $x_v$ can be bounded as follows.
Let $e_1, \ldots e_k$ be the non-crucial edges adjacent to $v$ such that $\Pr[u_i \notin V(M_C)] \geq \epsilon^2$ and $\delta(v,u_i)\leq (\epsilon p)^{15}$, where $u_i$ is the other endpoint of $e_i$.
It holds
$$
x_v = \sum_{1 \leq i \leq k} x_{e_i}.
$$
Hence,
\begin{equation}
\var(x_v \mid v \notin V(M_C))
= \sum_{1 \leq i, j \leq k} \cov(x_{e_i}, x_{e_j} \mid v \notin V(M_C)). \label{eq:sum-cov}
\end{equation}

Considering that $v \notin V(M_C)$,
the value of $e_i$ is determined by $u_i \notin V(M_C)$ and $e_i \in E_p$.
By \cref{lem:distance-from-independence} the three events $[v \notin V(M_C)]$, $[e_i \in E_p, u_i \notin V(M_C)]$, and $[e_j \in E_p, u_j \notin V(M_C)]$ are $(\delta(v, u_i) + \delta(v, u_j) + \delta(u_i, u_j))$-close to being independent.
Also, when $x_{e_i}$ is non-zero, it equals $\frac{f_{e_i}}{p_{e_i} \Pr[u_i \notin V(M_C)]\Pr[v \notin V(M_C)]} \leq \frac{f_{e_i}}{p \epsilon^4}$.
Therefore, we can utilize \cref{clm:cov_ub} to bound the covariances as follows:
$$
\cov(x_{e_i}, x_{e_j} \mid v \notin V(M_C))
\leq \frac{3 (\delta(v, u_i) + \delta(v, u_j) + \delta(u_i, u_j))}{\Pr[v \notin V(M_C)]} \cdot \frac{f_{e_i}f_{e_j}}{p^2\epsilon^8}.
$$
Plugging this back into \eqref{eq:sum-cov}, we get 
\begin{align}
    \var(x_v \mid v \notin V(M_C))
    &\leq \sum_{1 \leq i, j \leq k} \frac{3 (\delta(v, u_i) + \delta(v, u_j) + \delta(u_i, u_j))}{\Pr[v \notin V(M_C)]} \cdot \frac{f_{e_i}f_{e_j}}{p^2\epsilon^8} \notag \\
    &\leq \sum_{1 \leq i, j \leq k} \frac{6(\epsilon p)^{15} + 3\delta(u_i, u_j)}{p^2\epsilon^{10}} \cdot f_{e_i}f_{e_j}\label{eq:low-cov-v} \\
    &= \sum_{1 \leq i \leq k} \frac{f_{e_i}}{p^2\epsilon^{10}} \sum_{1 \leq j \leq k} \left(6(\epsilon p)^{15} + 3\delta(u_i, u_j)\right)  f_{e_j} \notag \\
    &\leq \sum_{1 \leq i \leq k} \frac{f_{e_i}}{p^2\epsilon^{10}} \sum_{1 \leq j \leq k} \left(6(\epsilon p)^{15}f_{e_j} + 3\frac{\delta(u_i, u_j)}{\sqrt{\epsilon R}}\right) \label{eq:var-ub-1} \\
    &\leq \sum_{1 \leq i \leq k} \frac{f_{e_i}}{p^2\epsilon^{10}} \left(6(\epsilon p)^{15} + \frac{(1/\tau^+)^{\poly(1/\epsilon)}}{\sqrt{\epsilon R}}\right) \label{eq:var-ub-2} \\
    &\leq \frac{6(\epsilon p)^{15}}{p^2\epsilon^{10}} + \frac{(1/\tau^+)^{\poly(1/\epsilon)}}{p^2\epsilon^{10} \sqrt{\epsilon R}} \label{eq:var-ub-3} \\
    &\leq \epsilon^4. \label{eq:var-ub-4}
\end{align}
Here, \eqref{eq:low-cov-v} follows from $\delta(v, u_i), \delta(v, u_j) \leq (\epsilon p)^{15}$ and $\Pr[v \notin V(M_C) \geq \epsilon^2]$;
\eqref{eq:var-ub-1} holds since ${f_{e_j} \leq \frac{1}{\sqrt{\epsilon R}}}$ (\cref{lem:f});
\eqref{eq:var-ub-2} results from $\sum_{j}f_{e_j} \leq 1$ (\cref{lem:f}), and
\begin{equation}
    \sum_{1 \leq j\leq k} \delta(u_i, u_j) 
    = \sum_{1 \leq j\leq k} \Pr[Q^+(u_i) \cap Q^+(u_j) \neq \emptyset] 
    \leq \E[\psi(u_i)] \leq (1/\tau^+)^{\poly(1/\epsilon)}; \tag{\cref{lem:stochastic-matching-lca}}
\end{equation}
and \eqref{eq:var-ub-3} follows again from $\sum_j f_{e_{j}} \leq 1$.
The last inequality \eqref{eq:var-ub-4} follows from $R = \frac{1}{2\tau^-} \gg \frac{1}{(\tau^+)^{\poly(1/\epsilon)}}.$

Finally, plugging back the variance into Chebyshev's inequality \eqref{eq:x-tail-cheybshev}, we obtain:
\begin{equation*}
\Pr[x_v \geq 1 + 2\epsilon \mid v \notin V(M_C)]
\leq \frac{\epsilon^4}{\epsilon^2} = \epsilon^2. \qedhere
\end{equation*}
\end{proof}

\begin{lemma}[Rounding Lemma]
    \label{lem:rounding}
    Let $G=(V, E)$ be any graph and let $x: E \to \mathbb{R}^{\geq 0}$ be a randomized assignment of nonnegative values to the edges of $G$ satisfying the following conditions for a parameter $\epsilon \in [0, 1]$, and all vertices $v \in V$:
    $$
    \E[x_v] \leq 1 
    \qquad \textnormal{and} \qquad
    \Pr[x_v > 1 + \epsilon] \leq \alpha.
    $$
    Then given $x$, there exists an assignment $y \coloneq y(x) : E \to [0, 1]$ such that all the following hold: 
    \begin{enumerate}
        \item $y_e \leq x_e$ for all $e \in E$.
        \item $y_v \leq 1$ for all $v \in V$ (in other words, $y$ is a fractional matching of $G$).
        \item $\E[|y|] \geq (1 - \epsilon)\E[|x|] - \alpha n$.
    \end{enumerate}
\end{lemma}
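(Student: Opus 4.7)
The plan is to define $y$ by a proportional per-vertex scaling of $x$ and then charge the expected loss to the bad vertices (those with $x_v > 1+\epsilon$). Concretely, for each edge $e = (u,v)$ I would set
\[
y_e \;:=\; \frac{x_e\,(1+\epsilon)}{\max(1+\epsilon,\,x_u)\,\max(1+\epsilon,\,x_v)}.
\]
Items~1 and~2 follow from direct algebra. Since $\max(1+\epsilon,x_w)\ge 1+\epsilon$, we get $y_e\le x_e/(1+\epsilon)\le x_e$, and since $x_e\le x_v\le \max(1+\epsilon,x_v)$ we also get $y_e\le 1$. For the fractional-matching constraint, I would pull the factor at $v$ out and bound
\[
y_v \;=\; \frac{1+\epsilon}{\max(1+\epsilon,x_v)}\sum_{e\ni v}\frac{x_e}{\max(1+\epsilon,x_u)} \;\le\; \frac{1+\epsilon}{\max(1+\epsilon,x_v)}\cdot\frac{x_v}{1+\epsilon} \;=\; \frac{x_v}{\max(1+\epsilon,x_v)} \;\le\; 1.
\]

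For item~3, let $A_v=\{x_v\le 1+\epsilon\}$ so $\Pr[\bar A_v]\le\alpha$. My plan is to compare each $y_v$ to the ``achievable cap'' $\min(1,\,x_v/(1+\epsilon))$ and bound the deficit vertex-by-vertex. When $A_v$ holds, the cap equals $x_v/(1+\epsilon)$ and a short calculation shows the deficit at $v$ equals $\sum_{e\ni v,\,\bar A_u}x_e\bigl(\tfrac1{1+\epsilon}-\tfrac1{x_u}\bigr)$; when $\bar A_v$ holds, the cap equals $1$ and the deficit at $v$ is at most $1-y_v\le 1$. The key point is that the proportional scaling (as opposed to the naive hard-zeroing of edges adjacent to bad vertices) ensures the loss at each bad vertex is capped at $1$ rather than growing with $x_v$. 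Summing over $v$ and taking expectations gives
\[
|y|\;\ge\;\frac{|x|}{1+\epsilon}-\tfrac{1}{2}\bigl(\text{per-vertex deficits}\bigr),\qquad \E\bigl[\text{total deficit}\bigr]\;\le\;\alpha n,
\]
from which $\E[|y|]\ge \E[|x|]/(1+\epsilon)-\alpha n\ge (1-\epsilon)\E[|x|]-\alpha n$, using $1/(1+\epsilon)\ge 1-\epsilon$.

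\emph{Main obstacle.} The delicate step is item~3: a naive definition such as $y_e=\tfrac{x_e}{1+\epsilon}\mathbf{1}[A_u\cap A_v]$ leaves a loss of the form $\sum_v x_v\mathbf{1}[\bar A_v]$, which cannot be controlled by the tail bound $\Pr[\bar A_v]\le\alpha$ alone since $x_v$ may be arbitrarily large on $\bar A_v$. The proportional-scaling construction above sidesteps this by exploiting the matching constraint $y_v\le 1$: however large $x_v$ becomes, the ``recoverable'' mass at $v$ is at most $1$, so each bad vertex contributes at most $1$ to the deficit and the overall expected loss is governed by $\E[|\{v:\bar A_v\}|]\le\alpha n$. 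The remaining work is the per-vertex arithmetic showing that the loss ascribed to ``bad-neighbor'' effects at good vertices can be absorbed by the same bookkeeping rather than producing an additional unbounded term.
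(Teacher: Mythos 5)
Your construction is genuinely different from the paper's. The paper sets $y_e = x_e/(1+\epsilon)$ when $x_u, x_v \le 1+\epsilon$ and $y_e = 0$ otherwise, whereas you use the smooth per-vertex rescaling $y_e = x_e(1+\epsilon)/(\max(1+\epsilon,x_u)\max(1+\epsilon,x_v))$. Your verifications of items~1 and~2 are correct, and arguably a bit cleaner than with the paper's hard thresholding since they need no case analysis.

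Item~3 has a concrete gap, and it sits exactly where you flag the ``remaining work.'' You compare $y_v$ against the cap $c_v := \min(1, x_v/(1+\epsilon))$ and bound the per-vertex deficit $c_v - y_v$ by $1$ on bad vertices — that part is fine. But the step that concludes $|y| \ge |x|/(1+\epsilon) - \tfrac{1}{2}(\text{deficits})$ silently uses $\sum_v c_v = \sum_v x_v/(1+\epsilon)$, which is off by $\sum_{v: x_v > 1+\epsilon}\bigl(x_v/(1+\epsilon)-1\bigr)$, a quantity that the hypothesis $\Pr[x_v>1+\epsilon]\le\alpha$ alone does not control: it requires a bound on $\E[x_v\,\bone_{x_v>1+\epsilon}]$, not just on the tail probability. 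Indeed the lemma as stated cannot hold in full generality: take a single edge with $x_e = K\cdot\bone_B$ for an event $B$ with $\Pr[B]=\alpha$; then $\E[|x|]=\alpha K$ while every fractional matching has $\E[|y|]\le 1$, so the conclusion fails once $K$ is large. (For context, the paper's own one-line derivation passes from $\E\bigl[\sum_e x_e\,\bone_{x_u>1+\epsilon\text{ or }x_v>1+\epsilon}\bigr]$ directly to $\sum_v\Pr[x_v>1+\epsilon]$, which has the same issue; it is only sound in the application because there $x_v$ is deterministically bounded by $1/(p\epsilon^4)$.) To complete a correct proof along your lines you must either import a uniform bound $x_v\le M$ (replacing $\alpha n$ by $M\alpha n$ in the conclusion), or strengthen the hypothesis to control $\E[(x_v-1)_+\bone_{x_v>1+\epsilon}]$ rather than just $\Pr[x_v>1+\epsilon]$.
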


\begin{proof}
    We define $y$ as follows:
    $$
    y_e = \begin{cases} 
        \frac{x_e}{1 + \epsilon}, & \text{if  $x_u, x_v \leq 1 + \epsilon$}, \\
        0, & \text{otherwise}.
        \end{cases}        
    $$
    Then, $y$ satisfies the first two properties in the statement trivially.

    The third property can be proven as follows:
    \begin{align*}
        \E[\card{y}] &\geq \frac{1}{1 + \epsilon}\E\left[\sum_{e = (u, v)} 
        x_e (1 - \bone_{x_u > 1 + \epsilon \text{ or } x_v > 1 + \epsilon}) \right] \\
        &= (1 - \epsilon)\left(\E[\card{x}] - \sum_v \Pr[x_v > 1 + \epsilon]\right) \\
        &\geq (1 - \epsilon) \E[\card{x}] - \alpha n. \qedhere
    \end{align*} 
\end{proof}

For the final step, we construct fractional matching $y$ by applying the Rounding Lemma (\cref{lem:rounding}) to $(1 - \epsilon)x$. Note that $(1-\epsilon)x$ satisfies the condition of the Rounding Lemma by \cref{clm:x-tail-bound}, i.e.\ 
$$
\Pr[(1 - \epsilon)x_v \geq 1 + \epsilon]
\leq \Pr[x_v \geq 1 + 2\epsilon] \leq \epsilon^2.
$$
It remains to show that $\mu(H_p)$ is (almost) as large as $\card{y}$.
We do so by showing that $y$ satisfies the blossom inequalities for small sets.

\begin{proof}[Proof of \cref{thm:main}]
    It holds $\E[\card{y}] \geq (1-\epsilon)\E[\card{(1-\epsilon)x}] - \epsilon^2n   \geq (1 - 10\epsilon)\opt$; The first inequality follows from the Rounding Lemma, and the second from \cref{clm:x-expectation-lb}.
    Therefore, it suffices to show $\E[\mu(H_p)] \geq \E[\card{y}]$.
    To do so, we prove $H_p$ always has a matching of size $(1-\epsilon)\card{y}$ by arguing that $y$ satisfies the blossom inequalities for vertex sets of size at most $\frac{1}{\epsilon}$.
    This implies $\mu(H_p) \geq (1-\epsilon) \card{y}$ by \cref{lem:blossom}.

    Take any vertex set $S \subseteq V$ with $\card{S} = 2k + 1 \leq \frac{1}{\epsilon}$, where $k \geq 0$ is an integer. We show $y(S) \leq k$. Here, $y(S)$ is used to denote sum of $y$ on $E[S]$, the set of edges with both endpoints in $S$:
    $$
    y(S) \coloneq \sum_{e \in E[S]} y_e = \sum_{\substack{(u, v) \in E \\ u, v \in S}} y_{(u, v)}.
    $$
    First, we show the blossom inequality holds when $M_C \cap E[S] = \emptyset$.
    In this case, for every edge $e \in E[S]$ it holds 
    $$
    y_e \leq \frac{1}{\sqrt{\epsilon R}} \leq \epsilon^2.
    $$
    Consequently,
    $$
    y(S) \leq \card{E[S]} \cdot \epsilon^2 \leq \frac{1}{\epsilon^2} \epsilon^2 = 1.
    $$
    This is at most $k$ for $k \geq 1$, and the blossom inequality is trivial for $k = 0$, concluding the case where $M_C \cap E[S] = \emptyset$.
    
    Now consider the case where $S$ includes some edges of $M_C$.
    Let $M$ be $M_C \cap E[S]$, the edges of $M_C$ that are inside $S$.
    Observe that $y_e$ is equal to $1$ on the edges of $M$ and zero on those adjacent to $M$. Hence, we have
    $$
    y(S) = y_M + y( S \setminus V(M_C) ) = \card{M} + y( S \setminus V(M) ).
    $$
    Here, $S \setminus V(M)$ is another vertex set with no edges of $M_C$ inside. The cardinality of $S \setminus V(M)$ is $2k + 1 - 2\card{M} = 2(k - \card{M}) + 1$.
    Therefore, per the analysis of the other case,
    $$
    y(S \setminus V(M)) \leq k - \card{M},
    $$
    which implies
    $$
    y(S) \leq \card{M} + (k - \card{M}) \leq k.
    $$
    As a result, by \cref{lem:blossom}, it holds $\mu(H_p) \geq (1-\epsilon)\card{y}$, and 
    \begin{equation*}
    \E[\mu(H_p)] \geq (1 - \epsilon)\E[\card{y}] \geq (1 - 11\epsilon)\opt.
    \end{equation*}
    The theorem follows from a rescaling of $\epsilon$.
\end{proof}
\section{Proof of \cref{lem:stochastic-matching-lca}}

This section is devoted to the proof of \cref{lem:stochastic-matching-lca}. First, in \cref{subsec:stochastic-matching-generic}, we give a generic characterization of an algorithm $\mB$ that satisfies the first two conditions of \cref{lem:stochastic-matching-lca}.
Then, in section \cref{subsec:stochastic-matching-queries}, we show how it can be implemented as an LCA. This is the crux of our analysis where we show the in-queries and the out-queries are bounded in expectation, thereby proving that the LCA implementation satisfies the last condition of \cref{lem:stochastic-matching-lca} (due to \cref{lem:correlated_ub}).

Our algorithm follows the framework of \cite{BehnezhadDH20}, i.e.\ using augmenting hyperwalks across independent realizations to compute a matching. We include some of their definitions here.

\begin{definition}[Profiles and hyperwalks] 
    Given a graph $G$, a profile $P = ((\mG_0, M_0), \ldots (\mG_\alpha, M_\alpha))$ is a sequence of pairs where each $\mG_i$ is a subgraph of $G$, and each $M_i$ is a matching in $\mG_i$.

    A hyperwalk is a sequence of edge-integer pairs $W = ( (e_1, s_1), \ldots, (e_k, s_k) )$ such that $(e_1, e_2, \ldots, e_k)$ is a walk on $G$, and $s_i \in \{0, 1, \ldots, \alpha\}$ for all $i \in k$. Two hyperwalks are considered adjacent if they share a vertex, i.e.\ if they include edges that are adjacent.

    The result of applying a hyperwalk $W$ to a profile $P$ is denoted by $P \oplus W = ((\mG_0, M'_0), \ldots, (\mG_\alpha, M'_\alpha))$, where
    $$
    M'_i = M_i \cup \{e_j \mid \text{$j$ is odd, and $s_j = i$}\}
    \setminus \{e_j \mid \text{$j$ is even, and $s_j = i$}\}.
    $$
    That is, $M_i'$ is obtained by adding the odd edges $e_j$ of $W$ such that $s_j = i$ to $M_i$, and removing the even edges of $W$ where $s_j = i$.
\end{definition}

We plug in the value $\alpha := \frac{1}{\epsilon^7} - 1$, and prove two useful claims about hyperwalks:
\begin{claim}
    \label{clm:paths-containing-edge}
    \label{clm:paths-containing-vertex}
    For any vertex $v$, there are at most $\Delta^{O((1/\epsilon) \log (1/\epsilon))}$ hyperwalks of length $\frac{2}{\epsilon}$ that contain $v$.
\end{claim}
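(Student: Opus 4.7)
The plan is to bound the count by separately enumerating the two ingredients of a hyperwalk: the underlying walk $(e_1,\ldots,e_k)$ and the integer labels $(s_1,\ldots,s_k)$, where $k = 2/\epsilon$.

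For the walks, I would fix the position $j \in \{0,1,\ldots,k\}$ at which the vertex $v$ appears in the vertex sequence of the walk. There are $k+1 = O(1/\epsilon)$ choices for $j$. Having fixed $v$'s position, the walk is determined by choosing its $j$ predecessors and $k-j$ successors; each of these extensions has at most $\Delta$ choices (since the walk moves along edges of $G$ and $\Delta$ is the max degree). This yields at most $(k+1) \cdot \Delta^k = O(1/\epsilon) \cdot \Delta^{2/\epsilon}$ walks of length $2/\epsilon$ passing through $v$.

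For the labels, each $s_i$ lies in $\{0,1,\ldots,\alpha\}$, so the number of labelings is at most $(\alpha+1)^k = (\alpha+1)^{2/\epsilon}$. In the framework used here, $\alpha = \poly(1/\epsilon)$ (it will be set when the recursion depth / number of profile realizations is fixed later in the algorithm), so $(\alpha+1)^{2/\epsilon} = 2^{O((1/\epsilon)\log(1/\epsilon))}$.

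Multiplying the two counts gives
\[
O(1/\epsilon)\cdot \Delta^{2/\epsilon}\cdot 2^{O((1/\epsilon)\log(1/\epsilon))}.
\]
The claim is trivial when $\Delta \leq 1$, so I may assume $\Delta \geq 2$, in which case $2^{O((1/\epsilon)\log(1/\epsilon))} \leq \Delta^{O((1/\epsilon)\log(1/\epsilon))}$ and the prefactor $O(1/\epsilon)$ is absorbed, yielding the stated bound $\Delta^{O((1/\epsilon)\log(1/\epsilon))}$. There is no real technical obstacle here; the argument is a pure counting bound. The only subtle point is being careful that the $(\alpha+1)^{2/\epsilon}$ factor does not dominate, which is precisely why the $\log(1/\epsilon)$ factor appears in the exponent and why the dependence on $\alpha = \poly(1/\epsilon)$ must be explicitly tracked.
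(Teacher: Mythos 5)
Your proof is correct and takes essentially the same approach as the paper: fix the position of $v$ (giving $\frac{2}{\epsilon}+1$ choices), count the $\Delta^{2/\epsilon}$ walk extensions, multiply by the $(\alpha+1)^{2/\epsilon}$ label choices with $\alpha = \poly(1/\epsilon)$, and absorb everything into $\Delta^{O((1/\epsilon)\log(1/\epsilon))}$. You are slightly more careful than the paper in making the $\Delta \geq 2$ assumption explicit so the $(\alpha+1)^{2/\epsilon}$ factor is genuinely absorbed, but the argument is otherwise identical.
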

\begin{proof}
    Fix the position of $v$ on the hyperwalk, there are $\frac{2}{\epsilon} + 1$ possible positions.
    Having fixed the position, there are at most $\Delta^{\frac{2}{\epsilon}}$ ways to choose the edges of the hyperwalk,
    and at most $\alpha^{2/\epsilon}$ ways to choose the integers $\{s_i\}_{i\in\{0, \ldots, \alpha\}}$.
    Therefore, there are at most $\left(\frac{2}{\epsilon} + 1\right) \cdot \Delta^{2/\epsilon} \cdot \alpha^{2/\epsilon} = \Delta^{O((1/\epsilon) \log (1/\epsilon))}$ hyperwalks containing $v$.
\end{proof}

\begin{claim}
    \label{clm:hyperwalk-max-degree}
    A hyperwalk $W$ of length at most $\frac{2}{\epsilon}$ is adjacent to at most $\tDelta \coloneq \Delta^{O((1/\epsilon) \log (1/\epsilon))}$ other hyperwalks of length at most $\frac{2}{\epsilon}$.
\end{claim}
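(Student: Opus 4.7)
The plan is to reduce this claim directly to \cref{clm:paths-containing-vertex} via a simple union bound over the vertices of $W$. Two hyperwalks are adjacent if and only if they share at least one vertex, so any hyperwalk adjacent to $W$ must contain one of the vertices appearing on $W$. Hence, counting adjacent hyperwalks reduces to counting, for each vertex $v$ on $W$, the number of length-$\frac{2}{\epsilon}$ hyperwalks through $v$.

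First I would bound the number of vertices on $W$. A hyperwalk of length at most $\frac{2}{\epsilon}$ consists of at most $\frac{2}{\epsilon}$ edges, and thus touches at most $\frac{2}{\epsilon} + 1$ vertices. Next, I would invoke \cref{clm:paths-containing-vertex}, which bounds the number of length-$\frac{2}{\epsilon}$ hyperwalks containing any fixed vertex by $\Delta^{O((1/\epsilon)\log(1/\epsilon))}$. Summing over the vertices of $W$ gives a total of at most
\[
\left(\frac{2}{\epsilon} + 1\right) \cdot \Delta^{O((1/\epsilon)\log(1/\epsilon))}
\]
hyperwalks of length at most $\frac{2}{\epsilon}$ that share a vertex with $W$. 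Since the leading factor $\frac{2}{\epsilon}+1$ is absorbed into the exponent's big-$O$ (it contributes only a $\log(1/\epsilon)$-type term that is dwarfed by the $\Delta^{O((1/\epsilon)\log(1/\epsilon))}$ factor when $\Delta \geq 2$; and for $\Delta = 1$ the claim is vacuous up to constants), this is still $\Delta^{O((1/\epsilon)\log(1/\epsilon))} = \tDelta$.

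There is no real obstacle here: the claim is essentially a corollary of \cref{clm:paths-containing-vertex} together with the observation that hyperwalks of bounded length have boundedly many vertices. The only minor subtlety is whether to count $W$ itself among the ``other'' hyperwalks, but subtracting one does not affect the asymptotic bound, so I would simply absorb it into the constant hidden by the big-$O$ notation.
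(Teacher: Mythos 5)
Your proof is correct and matches the paper's argument exactly: both bound the number of vertices on $W$ by $\frac{2}{\epsilon}+1$ and then union-bound via \cref{clm:paths-containing-vertex}, absorbing the $O(1/\epsilon)$ factor into the exponent. The remark about excluding $W$ itself is a harmless refinement the paper silently ignores.
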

\begin{proof}
    Recall that two hyperwalks are adjacent if they share a vertex.
    There are at most $\frac{2}{\epsilon} + 1$ vertices on $W$, and each of them appears in at most $\Delta^{O((1/\epsilon) \log (1/\epsilon))}$ hyperwalks $W'$ of length at most $\frac{2}{\epsilon}$. Therefore, $W$ is adjacent to at most $\left(\frac{2}{\epsilon} + 1\right) \cdot \Delta^{O((1/\epsilon) \log (1/\epsilon))} = \Delta^{O((1/\epsilon) \log (1/\epsilon))}$ other hyperwalks.
\end{proof}
We also use $\tDelta$ as an upperbound for the number of hyperwalks containing a fixed vertex/edge.

\begin{definition}[Augmenting hyperwalks] 
For a profile $P$ and a vertex $v$, let $d_P(v) = \card{\{i \mid v \in V(M_i) \}}$. A hyperwalk $W$ is augmenting with respect to $P$ if the following hold:
\begin{enumerate}
    \item applying $W$ to $P$ results in a profile, i.e.\ each $M'_i$ in $P \oplus W$ is a matching,
    \item for the first or last vertex $v$ in the walk, it holds $d_{P \oplus W}(v) = d_P(v) + 1$, and
    \item for all the internal nodes $v$, it holds that $d_{P \oplus W}(v) = d_P(v)$.
\end{enumerate}
    
\end{definition}

\subsection{A Generic Description of $\mB$}
\label{subsec:stochastic-matching-generic}

Fixing some graph $G$, the matching $\mB(H, r)$ is computed recursively, where $H$ is a subgraph of $G$ and $r$ is the recursion level.
In this section and \cref{subsec:stochastic-matching-queries}, we argue that $\mB(G_p, 1/\epsilon^9)$ achieves the guarantees of \cref{lem:stochastic-matching-lca}.
We use $\mB(G_p)$ to refer to $\mB(G_p, 1/\epsilon^9)$.

For $r = 0$, $\mB(H, r)$ is empty.
For $r > 0$, first, a profile $P = ((\mG_0, M_0), \ldots, (\mG_\alpha, M_\alpha))$ is computed:
Let $\mG_0 \coloneq H$,
and let $\mG_1, \ldots, \mG_\alpha$ be independent realizations of $G$ with the same sampling probability, where $\alpha = \frac{1}{\epsilon^7} - 1$.
The matchings are computed recursively, i.e.\ $M_i = \mB(\mG_i, r-1)$.

Finally, a set of hyperwalks $I$ is applied to $P$ as follows. 
We call a vertex $v$ \emph{unsaturated} on level $r-1$ if 
$$\Pr[v \in V(\mB(G_p, r-1))] < \Pr[v \in V(\mA(G_p))] - 2\epsilon^2.$$
Here, the probability is taken over the randomness of $G_p$ and the algorithms.
Let $I$ be a $(1-\epsilon)$-approximate MIS of the augmenting hyperwalks of length $\frac{2}{\epsilon}$ that start and end at unsaturated vertices. $I$ is applied to $P$ to obtain $P' = P \Delta I = ((\mG_0, M'_0), \ldots, (\mG_\alpha, M'_\alpha))$. Here, $P \Delta I$ denotes the result of applying the walks $W \in I$ to $P$ one by one. Observe that the order of applying them does not matter since $I$ is an independent set. The output of the algorithm $\mB(H, r)$ is $M'_0$ (see \Cref{alg:generic-alg}).

\begin{algorithm}\caption{The Generic Version of $\mB(H, r)$}\label{alg:generic-alg}
    \textbf{Input:} a subgraph $H \subseteq G$, and recursion level $r$.

    \If{$r = 0$}{\Return $\emptyset$.}

    Let $\alpha \coloneq \frac{1}{\epsilon^7} - 1$, let $\mG_0 \coloneq H$, and let $\mG_1, \ldots, \mG_\alpha$ be independent realizations of $G$.
    
    \For{$i \in \{0, 1, \ldots, \alpha\}$}{
        Let $M_i \coloneq \mB(\mG_i, r-1)$.
    }

    Let profile $P \coloneq ((\mG_0, M_0), \ldots, (\mG_\alpha, M_\alpha))$.

    Let $I$ be any $(1 - \epsilon)$-approximate maximal independent set of the augmenting hyperwalks (w.r.t.\ $P$) that start and end in unsaturated vertices and have length at most $\frac{2}{\epsilon}$. \label{step:approx-mis}

    Let $P' \coloneq P \Delta I = ((\mG_0, M'_0), \ldots, (\mG_\alpha, M'_\alpha))$.

    \Return $M'_0$.
\end{algorithm}

We note that $\mB(G_p)$ satisfies the first two conditions of $\cref{lem:stochastic-matching-lca}$.

\begin{claim}[\cite{BehnezhadDH20}] \label{clm:BDH-expectations}
    The matching $\mB(G_p, r)$ satisfies the following:
    \begin{enumerate}
        \item For any $r \geq 0$, it holds that $\Pr[v \in \mc{B}(G_p, r)] \leq \Pr[v \in \mc{A}(G_p)],$ and
        \item for $r \geq \frac{1}{\epsilon^9}$,
$\E[|\mc{B}(G_p, r)|] \geq \E[|\mc{A}(G_p)|] - \epsilon n$.
    \end{enumerate}
\end{claim}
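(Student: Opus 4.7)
The plan is to prove both items by induction on the recursion level $r$, with item~1 serving as a structural invariant that is reused in the analysis of item~2. The key symmetry that drives everything is that $\mG_0, \mG_1, \ldots, \mG_\alpha$ are i.i.d.\ realizations of $G$ (when $H = G_p$), so by the inductive hypothesis $M_i = \mB(\mG_i, r-1)$ are identically distributed across $i$. This lets us bridge per-$M'_0$ probabilities and per-profile degree statistics via
\[
\Pr[v \in V(M'_0)] \;=\; \frac{1}{\alpha+1}\,\E[d_{P'}(v)] \;=\; g(v,r-1) \;+\; \frac{1}{\alpha+1}\,\Pr[v \text{ is an endpoint of some } W \in I],
\]
where $g(v,r) \coloneq \Pr[v \in V(\mB(G_p,r))]$. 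Here we used $\E[d_P(v)] = (\alpha+1)\,g(v,r-1)$ and that an augmenting hyperwalk increases $d$ only at its two endpoints.

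For item~1, the base case $r=0$ is immediate. For the inductive step, by definition of $I$ only \emph{unsaturated} vertices (those with $g(v,r-1) < g^*(v) - 2\epsilon^2$, writing $g^*(v) \coloneq \Pr[v \in V(\mA(G_p))]$) may appear as endpoints of walks in $I$. Hence, if $v$ is saturated, the second term above vanishes and $g(v,r) = g(v,r-1) \leq g^*(v)$ by induction; if $v$ is unsaturated, the second term is at most $1/(\alpha+1) = \epsilon^7$, and combined with the defining gap $2\epsilon^2$ this gives $g(v,r) < g^*(v) - 2\epsilon^2 + \epsilon^7 \leq g^*(v)$. So item~1 is preserved.

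For item~2, the plan is a progress/potential argument over the $1/\epsilon^9$ levels. Let $T_r$ denote the set of saturated vertices at level $r$. Case A: if at some level $r \leq 1/\epsilon^9$ we have $|V \setminus T_r| \leq \epsilon n/2$, then summing $g(v,r) \geq g^*(v) - 2\epsilon^2$ over $v \in T_r$ and using $2\E[|\mB|] = \sum_v g(v,r)$, $2\E[|\mA|] = \sum_v g^*(v)$, gives $\E[|\mB(G_p,r)|] \geq \E[|\mA(G_p)|] - \epsilon n$ and we are done (saturation propagates forward because item~1 keeps $g(v,r')$ below $g^*(v)$, while the average $\frac{1}{\alpha+1}\sum_i \E[|M'_i|]$ only increases across levels). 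Case B: if $|V \setminus T_r| > \epsilon n/2$ at every level $r' \leq r-1$, we must show that each level grows the expected matching size by $\Omega(\epsilon^8 n)$, so after $1/\epsilon^9$ levels the deficit is eliminated. The quantitative gain per level comes from the identity $\E[|M'_0|] - \E[|M_0|] = \E[|I|]/(\alpha+1)$ (again by symmetry across the $\mG_i$), so we need $\E[|I|] \gtrsim \epsilon n$.

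The main obstacle is the combinatorial claim that when $\Omega(\epsilon n)$ vertices are unsaturated, the augmenting-hyperwalk graph has a maximal independent set of size $\Omega(\epsilon n)$, so that the $(1-\epsilon)$-approximate MIS $I$ chosen in line~\ref{step:approx-mis} is $\Omega(\epsilon n)$ in expectation. This is essentially the core lemma of \cite{BehnezhadDH20}: one considers the symmetric difference between the profile $P$ and an ``alternative'' profile where $\mA$ replaces $\mB$ on each $\mG_i$, and uses a short-augmenting-paths argument (capped at length $2/\epsilon$, absorbing a $(1-\epsilon)$ factor as in Hopcroft--Karp) together with the $\alpha = \Theta(1/\epsilon^7)$ parallel realizations to argue that a constant fraction of the unsaturated vertices are paired up by vertex-disjoint augmenting hyperwalks of length at most $2/\epsilon$. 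I would reproduce (or directly invoke) this lemma from \cite{BehnezhadDH20}, only checking that our slightly different ``unsaturated'' threshold ($2\epsilon^2$ gap) and our choice of $\alpha$ are compatible with its quantitative statement. Combining these pieces closes the induction and completes both items.
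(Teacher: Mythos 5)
The paper does not prove this claim itself; it is cited directly from \cite{BehnezhadDH20}, whose augmenting-hyperwalk framework the generic description of $\mB$ reproduces, so there is no in-paper proof to compare against. That said, your reconstruction is a sound sketch of the \cite{BehnezhadDH20} argument: the identity
\[
g(v,r) \;=\; g(v,r-1) \;+\; \tfrac{1}{\alpha+1}\,\Pr[v \text{ is an endpoint of some } W \in I],
\]
valid because $\mG_0,\ldots,\mG_\alpha$ are i.i.d.\ copies of $G_p$ and the recursion treats indices symmetrically, is the correct accounting device; item~1 then follows from the unsaturation gap $2\epsilon^2 > \epsilon^7 = 1/(\alpha+1)$, and your Case~A/Case~B potential argument for item~2 has the right shape with $\alpha = \Theta(\epsilon^{-7})$, gap $2\epsilon^2$, and depth $\epsilon^{-9}$ fitting together (it is also worth noting explicitly that $g(v,\cdot)$ is nondecreasing since applying augmenting hyperwalks never decreases $d_P(v)$, which is what makes saturation propagate forward). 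The lemma you flag as needing to be imported --- that $\Omega(\epsilon n)$ unsaturated vertices force a near-maximal independent set of $\Omega(\epsilon n)$ short augmenting hyperwalks --- is indeed the technical core of \cite{BehnezhadDH20} (a Hopcroft--Karp-style short-augmenting-walk argument over the $\alpha+1$ parallel realizations), and since the present paper also does not reprove it, invoking it is the intended route.
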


\subsection{LCA Implementation and Bounding the In/Out-queries}
\label{subsec:stochastic-matching-queries}

In this section, we show how \Cref{alg:generic-alg} can be implemented as an LCA. 
We bound the expected number of in-queries and out-queries for every vertex $v$, thereby bounding the expected size of the correlated sets.

\subsubsection{Background (Maximal Independent Set)}

A key ingredient is the sublinear algorithm for randomized greedy maximal independent set ($\RGMIS$), due to \cite{YoshidaYI12}.
Given a graph $G$, it draws a random permutation $\pi$ over the vertices, and computes the greedy maximal matching using permutation $\pi$, denoted by $\GMIS(G, \pi)$.
The $\GMIS$ is defined by the following greedy algorithm: Start with the empty-set $I$, and go over the vertices in the order they appear in $\pi$. For a vertex $v$, if none of its neighbors appear in $I$, then let $I \gets I \cup \{v\}$.

Remarkably, $\GMIS$ can be implemented as a sublinear algorithm or (with approximation) as an LCA. For a vertex $v$ let $\pi(v) \in [n]$ be its rank in $\pi$. It can be seen that $v \in \GMIS(G, \pi)$ if and only if none of its neighbors $u$ with $\pi(u) < \pi(v)$ appear in $\GMIS(G, \pi)$. 
Therefore, membership in $\GMIS(G, \pi)$ can be tested in the sublinear setting by recursively checking the neighbors of smaller rank.
A crucial observation of \cite{YoshidaYI12}, was that the vertices with smaller ranks are more likely to be in the independent set.
Consequently, going over the neighbors in increasing order of ranks (w.r.t.\ $\pi$) and stopping as soon as one of them is in the independent set, results in a much better running time (see \Cref{alg:rgmis}).

\begin{algorithm}\caption{$\GMIS(G, \pi, v)$: A sublinear algorithm that asserts membership in the greedy maximal independent set, given a graph $G$ and a permutation $\pi$ (\cite{YoshidaYI12})}\label{alg:rgmis}
\For{neighbors $u \in N(v)$ in increasing order of $\pi$} {
    \If{$\GMIS(G, \pi, u)$}{
        \Return false
    }
}
\Return true
\end{algorithm}

\begin{lemma}[\cite{YoshidaYI12}] \label{lem:rgmis}
    Fixing any vertex $v$, take a random permutation $\pi$ and compute $\GMIS(G, \pi, u)$ for all vertices $u \in V(G)$. The expected number of recursive calls made to $\RGMIS(G, \pi, v)$ is $\Delta$, where $O(\Delta)$ is the maximum degree in $G$.
\end{lemma}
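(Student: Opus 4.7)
The plan is to apply the amortized analysis of \cite{YoshidaYI12} to bound the expected in-queries at $v$. Recall that $\RGMIS(G, \pi, w)$ iterates through the neighbors of $w$ in increasing order of $\pi$-rank, recursively invoking itself on each until it either finds a neighbor in the greedy MIS or exhausts the neighbors with rank smaller than $\pi(w)$. Consequently, every invocation of $\RGMIS(G, \pi, v)$ is either the top-level query at $v$, or a call initiated by some neighbor $w \in N(v)$ with $\pi(w) > \pi(v)$ whose iteration reaches $v$ before terminating at an earlier MIS member.

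I would decompose the in-query count $T(v)$ accordingly. An invocation of $\RGMIS(G, \pi, w)$ reaches $v$ if and only if no neighbor $u \in N(w)$ with $\pi(u) < \pi(v)$ lies in $\GMIS(G, \pi)$; call this the \emph{reachability event} for $w \to v$. Writing $T(w)$ for the total invocations of $\RGMIS(G, \pi, w)$ across all top-level queries, this yields
\[
T(v) \;=\; 1 \;+\; \sum_{\substack{w \in N(v) \\ \pi(w) > \pi(v)}} T(w) \cdot \mathbf{1}\bigl[w \text{ reaches } v\bigr].
\]
The key ingredient of the YYI analysis is that for a uniformly random $\pi$, the expected position of the first MIS member in any vertex's neighbor list (sorted by $\pi$) is $O(1)$. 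Combining this with the recurrence above and summing over the at most $\Delta$ neighbors of $v$ yields $\E_\pi[T(v)] = O(\Delta)$.

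The main obstacle is disentangling the correlation between $T(w)$ and the reachability event, since both depend on the same random permutation $\pi$ in delicate ways. The YYI resolution is not to unroll the recurrence explicitly, but to charge each recursive call globally to an incident edge of the graph and bound the total charge via a uniform counting argument. The factor of $\Delta$ then enters naturally through the maximum vertex degree, and I expect that the analogous accounting carries over to the in-query setting essentially verbatim, because every direct caller of $v$ lies in $N(v)$.
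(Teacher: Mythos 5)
The paper does not prove this lemma; it is cited directly from \cite{YoshidaYI12}, so there is no in-paper argument to compare against. Judged on its own, your sketch rests on a false ``key ingredient.'' The claim that the expected position of the first MIS member in a vertex's neighbor list (sorted by $\pi$) is $O(1)$ fails: in $K_{d,n}$ with $n \gg d$ (so $\Delta = n$), the globally smallest-ranked vertex lies on the $n$-side with probability about $1 - d/n$, in which case the entire $n$-side is the greedy MIS and the $d$-side is disjoint from it. An $n$-side vertex $b$ then has \emph{no} MIS member among its neighbors, and $\RGMIS(G,\pi,b)$ exhausts all $\Theta(d)$ lower-rank neighbors before returning. (Consistently, \cite{YoshidaYI12} prove an $O(\bar{d})$ bound for a \emph{uniformly random} starting vertex, not $O(1)$ per call.) Moreover, even if the $O(1)$ claim held, the stated conclusion would not follow from your recurrence
$$
T(v) = 1 + \sum_{\substack{w \in N(v) \\ \pi(w) > \pi(v)}} T(w)\cdot \mathbf{1}\bigl[w\text{ reaches }v\bigr],
$$
since $T(w)$ is itself on the order of $\Delta$; multiplying an $O(1)$ reachability probability by $\Delta$ summands each weighted by $T(w)$ does not give $O(\Delta)$ without further control on the product.

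What the YYI analysis actually supplies, and what the argument should invoke, is that the expected number of recursive calls that cross any \emph{fixed edge} of $G$---aggregated over all $n$ top-level queries---is $O(1)$. Since every direct call to $\RGMIS(G,\pi,v)$ arrives along one of the at most $\Delta$ edges incident to $v$, summing this per-edge bound over $N(v)$ yields $\E_\pi[T(v)] \leq 1 + O(\Delta)$. Your decomposition of $T(v)$ by caller and your closing observation that the right resolution is a global, per-edge charging rather than unrolling the recurrence are both on target; what is missing is stating the correct per-edge invariant and actually deriving it from the YYI analysis rather than the unsupported $O(1)$-position claim.
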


A direct implication of \cref{lem:rgmis} is that for \emph{a random vertex} $v$, $\RGMIS(G, v)$ makes $O(\Delta)$ recursive calls, and hence runs in time $O(\Delta^2)$.
We use a modified version of $\RGMIS$ that halts as soon as $O(\Delta^2 / \epsilon)$ recursive calls are made, implemented as an LCA, to get a $(1 - \epsilon)$-approximate MIS.

\begin{definition}
    The truncated randomized greedy maximal independent set ($\TMIS$), is an LCA that simulates $\RGMIS$. The random tape of each vertex contains its rank (e.g.\ a number in $[0, 1]$). If $\RGMIS$ halts within $O(\Delta^2 /\epsilon)$ recursive calls, $\TMIS$ returns the same output.
    Otherwise, $\TMIS$ halts as soon as the threshold is reached and returns false, i.e.\ reports that the vertex is not in the independent set.
\end{definition}

\begin{claim} \label{clm:tmis}
    $\TMIS$ has the following properties:
    \begin{enumerate}
        \item $\TMIS$  returns a $(1 - \epsilon)$-approximate MIS in expectation.
        \item For any vertex, the number of recursive calls is $O(\Delta^2 / \epsilon)$ deterministically, i.e.\ the number of out-queries is $O(\Delta^3/\epsilon)$.
        \item For any vertex, the expected number of in-queries is $O(\Delta)$, where the expectation is taken over the random tapes. \label{item:tmis-expected-in-query}
    \end{enumerate}
\end{claim}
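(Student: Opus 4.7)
The plan is to verify the three parts in order, with the in-query bound being the main technical point. Naturality in Part 1 is immediate from the recursive structure of \cref{alg:rgmis}: starting from $v$, every subsequently discovered vertex is reached through a chain of adjacency-list expansions rooted at $v$, so any prefix of $Q^+(v)$ is connected in $G$. The deterministic bound in Part 2 reads directly off the truncation rule, since TMIS performs at most $K = O(\Delta^2/\epsilon)$ recursive calls by definition, and each call must examine the ranks of the current vertex's at most $\Delta$ neighbors in order to iterate them in increasing rank order, yielding $K\Delta = O(\Delta^3/\epsilon)$ out-queries in total.

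For the $(1-\epsilon)$-approximation in Part 1, first observe that $\TMIS(u)$ coincides with $\RGMIS(u)$ whenever the recursion stays within the $K$-call budget, and otherwise returns false. Summing the bound of \cref{lem:rgmis} over all target vertices, the expected total number of recursive calls performed by RGMIS across all $n$ starting vertices is $O(n\Delta)$, so by Markov's inequality the expected number of starting vertices $u$ at which TMIS must truncate is at most $O(n\Delta/K) = O(n\epsilon/\Delta)$. Because truncation can only change the output from true to false, the realized set $I = \{u : \TMIS(u) = \text{true}\}$ is always a subset of the GMIS and differs from it by $O(n\epsilon/\Delta)$ vertices in expectation. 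Since any maximal independent set of a graph of maximum degree $\Delta$ has size at least $n/(\Delta+1)$, this translates into a $(1-O(\epsilon))$ multiplicative guarantee.

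Part 3 is the main obstacle. The key idea is to interpret TMIS as a global-budget variant of RGMIS: inside a top-level call $\TMIS(u)$, the nested recursion faithfully simulates $\RGMIS(u)$ and truncation is enforced only when the global $K$-call budget is exhausted, at which point the top-level call returns false. Under this interpretation, $Q^+_{\TMIS}(u)$ is always a prefix of $Q^+_{\RGMIS}(u)$ in the canonical simulation order, and in particular $Q^+_{\TMIS}(u) \subseteq Q^+_{\RGMIS}(u)$. It follows that $q^-_{\TMIS}(v) \leq q^-_{\RGMIS}(v)$ pointwise, so the desired bound $\E[q^-_{\TMIS}(v)] \leq O(\Delta)$ is exactly the YYI bound of \cref{lem:rgmis}. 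The delicacy lies precisely in this prefix claim: had one instead implemented TMIS so that each nested call independently enforced its own budget, a nested call that was truncated to false could cause its parent to continue iterating to additional neighbors that an untruncated RGMIS would have skipped, inflating TMIS's query set beyond RGMIS's. The global-budget interpretation sidesteps this pitfall, and the prefix property can be formalized by a coupling that runs TMIS and RGMIS in lockstep on shared random tapes until the $K$-th recursive call.
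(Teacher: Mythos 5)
Your proof is correct and follows essentially the same route as the paper. For Parts 1 and 2 the arguments coincide. For Part 3 the paper simply asserts that the bound ``follows from the same guarantee for $\RGMIS$''; your explicit observation that under the global‑budget definition $Q^+_{\TMIS}(u)$ is always a prefix of $Q^+_{\RGMIS}(u)$, hence $q^-_{\TMIS}(v)\le q^-_{\RGMIS}(v)$ pointwise, spells out the step the paper leaves implicit, and your remark about the pitfall of per‑call (local) budgets is a worthwhile clarification of why that definition is the right one.
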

\begin{proof}
Without truncation, the algorithm would have returned a maximal independent set. Let $I$ be that independent set.
Note that $\card{I} \geq \frac{n}{\Delta + 1}$ since for every vertex in the independent set, there are at most $\Delta$ vertices not in it.
To prove the first property it suffices to show that in expectation, the algorithm truncates for at most $\epsilon \card{I}$.
By \cref{lem:rgmis}, the total number of recursive calls when $\RGMIS$ is queried for every vertex is at most $O(n\Delta)$ in expectation.
As a result, in expectation, there are at most $O(\epsilon n / \Delta) \leq \epsilon\card{I}$ vertices that make more than $O(\Delta^2 / \epsilon)$ queries.

The second property holds by the definition of $\TMIS$,
and the third property follows from the same guarantee for $\RGMIS$ (\cref{lem:rgmis}).
\end{proof}

\subsubsection{Implementation}
First, we describe the model.
The LCA receives inquiries about whether an edge $e$ is in the matching returned by $\mB(G_p, r)$.
For technical reasons, which will become clear shortly, the whole graph $G$ is known to the LCA.
However, the random tapes associated with each edge are known only locally, and the algorithm has to query an edge to discover its random tape.
The random tapes determine whether each edge is realized in $G_p$,
and also serve as the source of randomness for $\mB$.

 Observe that everything could be described in the vertex-oriented version of LCAs. For example, the inquiries could be regarding whether a vertex $v$ is matched in $\mB(G_p, r)$ and, if so, to which neighbor. Also, the fact that an edge is realized could be determined by the random tape of one of its endpoints (e.g.\ the one with the smaller ID). Nonetheless, we take the edge-oriented approach here as it is the more natural setting for the matching problem.

 Let us now consider the recursive structure of the algorithm.
 To calculate the final matching $M_0'$ on $G_p$, the algorithm first computes $\alpha + 1$ matchings $M_0, \ldots, M_\alpha$ using $\mB(\cdot, r - 1)$.
 Here, $M_0$ is a matching in $G_p$, the input realization,
 whereas $M_1, \ldots, M_\alpha$ are each in an independent realization $\mG_i$ of $G$.
 The former is the randomization of the input,
 and the latter is the randomization of the algorithm.
 Both are stored locally in the random tapes of the edges.
 Each of these matchings $M_i$ in turn creates $\alpha + 1$ matching $M^{i}_0, \ldots, M^{i}_\alpha$ using $\mB(\cdot, r - 2)$,
 and each of them creates $\alpha + 1$ matchings, and so on.

 To implement this recursion as an LCA, we use three functions:
 $\isInMatching(H, e, r)$ outputs whether $e$ is in the matching $\mB(H, r)$,
 $\isInMIS(H, W, r)$ uses $\TMIS$ to assert if a hyperwalk $W$ is in the independent set $I$ (in step \ref{step:approx-mis}) when computing $\mB(H, r)$,
 and $\isValid(H, W, r)$ determines whether a hyperwalk $W$ is an augmenting hyperwalk w.r.t.\ the profile $P = ( (\mG_0, M_0), \ldots, (\mG_\alpha, M_\alpha) )$ when computing $\mB(H, r)$ and if its endpoints are saturated.
 We go over each of them in detail (see also \Cref{alg:is-in-matching,alg:is-in-MIS,alg:is-valid}).

 The main subroutine, $\isInMatching(H, e, r)$, first checks if $e \in M_0$ using $\isInMatching(H, e, r-1)$. Then, it goes over all the hyperwalks $W$ of length at most $\frac{2}{\epsilon}$ that include $e$ and if $W$ is in $I$ (which is checked using $\isInMIS(H, W, r)$), it applies $W$ to $e$. That is, $e \in M'_0$ can be determined by checking $e \in M_0$ and whether applying $I$ adds or removes $e$.

 \begin{algorithm}
     \caption{$\isInMatching(H, e, r)$: returns true if $e \in \mB(H, r)$}
     \label{alg:is-in-matching}

    output $\gets \isInMatching(H, e, r - 1)$ 
    
    \For{every hyperwalk $W$ of length at most $2/\epsilon$ containing $e$} {
        \If{$\isInMIS(H, W, r)$}{
            apply $W$ to adjust output
        }
    }

    \Return ans
 \end{algorithm}

 To determine if a hyperwalk $W$ is in $I$ when computing $\mB(H, r)$,
 $\isInMIS(H, W, r)$ runs $\TMIS$ on the agumenting hyperwalks.
 Ideally, we would have run $\TMIS$ on the graph where each vertex is an augmenting hyperwalk, and two vertices are adjacent if the corresponding hyperwalks share a vertex.
 However, whether a hyperwalk is augmenting depends on the profile $P = ((\mG_0, M_0), \ldots, (\mG_\alpha, M_\alpha))$, which is not readily available to the algorithm.
 Therefore, a slight variation of the algorithm is used. That is, $\TMIS$ runs on the graph where the vertex set is the set of \emph{all} hyperwalks of length at most $\frac{2}{\epsilon}$,
 and the algorithm checks if a hyperwalk is augmenting w.r.t.\ $P$ on the fly. Note that the maximum degree in the graph of hyperwalks is at most $\tDelta = \Delta^{O((1/\epsilon) \log (1/\epsilon))}$ (\cref{clm:hyperwalk-max-degree}), therefore $\TMIS$ should terminate if more than $\tDelta^2/\epsilon$ recursive calls are made to valid (i.e.\ augmenting) hyperwalks. This involves exploring the rank and validity of at most $\tDelta^3 / \epsilon$ hyperwalks.

 \begin{algorithm}
    \caption{$\isInMIS(H, W, r)$: returns true if $W \in I$ when computing $\mB(H, r)$}
    \label{alg:is-in-MIS}
    \If{the recursive calls to $\isInMIS(H, \cdot, r)$ made by descendants of this call exceeds $\tDelta^3/\epsilon$}{
        \Return false
    }

    \If{not $\isValid(H, W, r)$}{
        \Return false
    }

    \For{hyperwalks $W'$ of length $\leq 2/\epsilon$ adjacent to $W$ in increasing order of rank} {
        \If{$\isInMIS(H, W', r)$}{
            \Return false
        }
    }
    
    \Return true
 \end{algorithm}

 Finally, $\isValid(H, W, r)$ first checks that the endpoints of $W$ are unsaturated (on level $r-1$),
 where recall a vertex $v$ is unsaturated if
 $$\Pr[v \in V(\mB(G_p, r-1))] < \Pr[v \in V(\mA(G_p))] - 2\epsilon^2.$$
 Note that both these values are determined by the structure of $G$ which is completely known to the algorithm.

 $\isValid(H, W, r)$ also verifies that $W$ is a valid augmenting path, i.e.\ $(1)$ the subgraphs $\{M_i''\}_i$ in $P \oplus W = ((\mG_0, M_0''), \ldots (\mG_\alpha, M_\alpha''))$ are matchings,
 $(2)$ for each internal node $v$, it holds $d_{P\oplus W}(v) = d_P(v)$, and $(3)$ for the two endpoints, $d_{P\oplus W}(v) = d_P(v) + 1$. Observe, that applying $W$ to $P$ only changes the status of the edges in $W$ (in terms of which matching they appear in), and $d_P(v)$ can only change for the vertices in $W$.
  Therefore, to check if the three conditions hold, it suffices to compute $\isInMatching(\mG_i, e, r-1)$ for all $i \in \{0, 1, \ldots \alpha\}$ and every $e$ that is in $W$ or adjacent to it.
 
 \begin{algorithm}
     \caption{$\isValid(H, W, r)$: returns true if hyperwalk $W$ is augmenting w.r.t.\ $P$ when computing $\mB(h, r)$ and its endpoints are unsturated}
     \label{alg:is-valid}
     \If{either endpoint of $W$ is saturated}{
        \Return false
     }

     \For(\tcp*[f]{compute $P$ locally around $W$}){vertices $u$ in $W$}{ 
        \For{edges $e$ adjacent to $u$}{
            \For{$i \in \{0, 1, \ldots, \alpha\}$}{
                compute $\isInMatching(\mG_i, e, r - 1)$
            }
        }
     }

     Use the local information about $P$ to assert if $W$ is an augmenting hyperwalk, and return the result
 \end{algorithm}

\subsubsection{In-queries, Out-queries, and Correlated Sets}

\begin{claim}
    \label{clm:low-out-query}
    Consider the LCA for $\mB(G_p)$ (\Cref{alg:is-in-matching} with $r = 1/\epsilon^9$).
    For every edge $e$, it holds deterministically 
    $$q^+(e) \leq \left(\frac{\tDelta^4}{\epsilon} \cdot \frac{2\Delta}{\epsilon^{10}}\right)^{1/\epsilon^9} = \Delta^{\poly(1/\epsilon)}.$$
\end{claim}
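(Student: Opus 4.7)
The plan is to bound $q^+(e)$ by a recurrence on the recursion depth $r$ of $\isInMatching(\cdot, \cdot, r)$. Let $T(r)$ denote the maximum number of edges probed by a single call to $\isInMatching(H, e, r)$, over all choices of $H$, $e$, and random tapes. Since the recursion depth of the top-level call is $r = 1/\epsilon^9$, it suffices to show that $T(r) \le B \cdot T(r-1)$ for a branching factor $B \le \tilde\Delta^4/\epsilon \cdot 2\Delta/\epsilon^{10}$, with base case $T(0) = O(1)$ (as $\mB(H, 0) = \emptyset$, so \Cref{alg:is-in-matching} bottoms out without further exploration).

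The branching analysis proceeds by inspecting each subroutine level-by-level. First, $\isInMatching(H, e, r)$ issues one recursive call $\isInMatching(H, e, r-1)$ and, by \cref{clm:paths-containing-edge}, iterates over at most $\tilde\Delta$ hyperwalks $W$ of length $\leq 2/\epsilon$ containing $e$, calling $\isInMIS(H, W, r)$ on each. Second, $\isInMIS(H, W, r)$ is explicitly truncated to make at most $\tilde\Delta^3/\epsilon$ recursive calls (on hyperwalks adjacent to $W$ in the $\TMIS$ exploration), each of which triggers a single invocation of $\isValid$. Third, $\isValid(H, W', r)$ walks over the $\leq 2/\epsilon + 1$ vertices of $W'$, inspects their $\leq \Delta$ adjacent edges, and issues $\alpha+1 = 1/\epsilon^7$ calls to $\isInMatching(\mG_i, \cdot, r-1)$ per edge. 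Composing these factors, the number of calls to $\isInMatching(\cdot, \cdot, r-1)$ spawned from a single $\isInMatching(\cdot, \cdot, r)$ call is at most
$$
1 + \tilde\Delta \cdot \frac{\tilde\Delta^3}{\epsilon} \cdot \bigl(2/\epsilon + 1\bigr) \cdot \Delta \cdot \frac{1}{\epsilon^7} \;\le\; \frac{\tilde\Delta^4}{\epsilon} \cdot \frac{2\Delta}{\epsilon^{10}},
$$
where the final inequality absorbs the $+1$ and the $2/\epsilon+1$ factor into a single $1/\epsilon$ slack.

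To convert this into a bound on $q^+(e)$, I observe that every vertex counted in $Q^+(e)$ is discovered inside one of these $\isInMatching$ calls, and each call (apart from its recursive descendants) directly inspects only $O(1)$ new edges/vertices before delegating to its children; hence the total number of probed vertices is dominated, up to constants absorbed in the slack, by the total number of calls. Unrolling the recurrence $T(r) \le B \cdot T(r-1)$ for $r = 1/\epsilon^9$ levels gives
$$
q^+(e) \;\le\; T(1/\epsilon^9) \;\le\; B^{1/\epsilon^9} \;=\; \left(\frac{\tilde\Delta^4}{\epsilon} \cdot \frac{2\Delta}{\epsilon^{10}}\right)^{1/\epsilon^9},
$$
as claimed. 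Substituting $\tilde\Delta = \Delta^{O((1/\epsilon)\log(1/\epsilon))}$ from \cref{clm:hyperwalk-max-degree} yields $\Delta^{\poly(1/\epsilon)}$.

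There is no real obstacle here beyond careful bookkeeping: the bound is deterministic because each branching factor is either structural (number of hyperwalks through an edge or vertex, bounded by $\tilde\Delta$) or a hard-coded truncation threshold in $\TMIS$ ($\tilde\Delta^3/\epsilon$). The only mild subtlety is ensuring that the $\isInMIS$ truncation threshold is respected \emph{for every} random tape — but this is built into \Cref{alg:is-in-MIS}, which aborts whenever the threshold is reached, making the branching factor deterministic rather than expected.
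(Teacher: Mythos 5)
Your proposal is correct and follows essentially the same approach as the paper: bound the number of level-$(j{-}1)$ calls spawned by a single level-$j$ call by $\tDelta \cdot \tDelta^3/\epsilon \cdot (\text{vertices per hyperwalk}) \cdot \Delta \cdot (\alpha+1)$ and unroll over $r = 1/\epsilon^9$ levels. The only minor cosmetic differences are that you explicitly carry the $+1$ for the direct recursive call on line 1 of $\isInMatching$ (which the paper silently absorbs into slack) and frame the closing step informally via ``each call directly inspects $O(1)$ new edges,'' whereas the paper observes directly that $Q^+(e)$ is exactly the set of edges $e'$ for which $\isInMatching(\cdot,e',0)$ is invoked.
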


\begin{proof}
    Observe that $Q^+(e)$ is the set of edges $e'$ for which a call to $\isInMatching(\cdot, e', 0)$ is made.
    Let us examine the number of recursive calls made to $\isInMatching(\cdot, \cdot, j-1)$ as a result of computing $\isInMatching(H, e, j)$, for $1 \leq j \leq r$.
    Recall that the edge $e$ is contained in at most $\tDelta = \Delta^{O((1/\epsilon)\log(1/\epsilon))}$ hyperwalks (\cref{clm:paths-containing-edge}).
    Therefore, there are at most $\tDelta$ direct calls made to $\isInMIS(H, \cdot, j)$ from $\isInMatching(H, e, j)$.
    Each of these calls terminates before making $\tDelta^3/\epsilon$ recursive calls to $\isInMIS(H, \cdot, j)$.
    Finally, each of the calls to $\isInMIS(H, \cdot, j)$ (direct or recursive) makes a call to $\isValid(H, \cdot, j)$ which in turn makes at most $\frac{2\Delta}{\epsilon^{10}}$ calls to $\isInMatching(\cdot, \cdot, j-1)$, i.e.\ $\alpha\Delta$ calls for each of the $\frac{2}{\epsilon}$ vertices in the hyperwalk.

    Hence, the total number of calls made to $\isInMatching(\cdot, \cdot, j-1)$ as a result of computing $\isInMatching(H, e, j)$ is at most 
    $$ 
    \frac{\tDelta^4}{\epsilon} \cdot \frac{2\Delta}{\epsilon^{10}}.
    $$
    It follows by induction that for all $1 \leq j \leq r$, the number of calls made to $\isInMatching(\cdot, \cdot, r - j)$ as a result of computing $\isInMatching(H, e, r)$ is at most $\left(\frac{\tDelta^4}{\epsilon} \cdot \frac{2\Delta}{\epsilon^{10}}\right)^j$. Therefore, computing $\isInMatching(G_p, e, 1/\epsilon^9)$ (which is the LCA for $\mB(G_p)$) takes at most $\left(\frac{\tDelta^4}{\epsilon} \cdot \frac{2\Delta}{\epsilon^{10}}\right)^{1/\epsilon^9}$ queries to $\isInMatching(\cdot, \cdot, 0)$, which is an upperbound for the out-queries.
\end{proof}

\begin{claim}
    \label{clm:low-in-query}
    Consider the LCA for $\mB(G_p)$ (\Cref{alg:is-in-matching} with $r = 1/\epsilon^9$).
    For every edge $e$, it holds 
    $$\E_{\mB}[q^-(e)] \leq \Delta^{\poly(1/\epsilon)},$$
    where the expectation is taken over the randomness of the algorithm, and the bound holds for any fixed realization $G_p$.
\end{claim}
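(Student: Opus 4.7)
The plan is to follow the counting idea hinted at in \cref{sec:overview}: we bound, for each recursion level $j$, how many top-level LCA invocations can reach $e$ via recursive descent, and multiply the contributions through the $r = 1/\epsilon^9$ levels. Concretely, define $M_j(e)$ as the expected total number of level-$j$ calls $\isInMatching(H, e, j)$ (for any $H \subseteq G$) spawned while running $\isInMatching(G_p, e', r)$, summed over all possible top-level queries $e' \in E$. Since every query to edge $e$ made during a top-level execution happens inside either a call $\isInMatching(\cdot, e, j)$ for some $j$, or inside a $\isValid(\cdot, W, j)$ call whose hyperwalk $W$ passes through an endpoint of $e$ (which then queries $e$ while collecting $\isInMatching$ values on edges adjacent to $W$), an upper bound of the form $\E[q^-(e)] \le \sum_{j=0}^{r} \big(M_j(e) + \tDelta \cdot \max_{e^\prime \sim e} M_j(e^\prime)\big)$ reduces the claim to showing $M_j(e) \le \Delta^{\poly(1/\epsilon)}$ uniformly in $j$ and $e$.

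The base case is $M_r(e) = 1$, as the top-level LCA is invoked on a single edge. For the inductive step from level $j{+}1$ to level $j$, a level-$j$ call on $e$ is spawned either by direct recursion inside $\isInMatching(\cdot, e, j{+}1)$ (contributing at most $M_{j+1}(e)$), or by a call $\isValid(\cdot, W, j{+}1)$ for some hyperwalk $W$ of length at most $2/\epsilon$ containing an endpoint of $e$ (each such $\isValid$ invokes $\alpha{+}1 = 1/\epsilon^7$ calls $\isInMatching(\mG_i, e, j)$). The key input is item~3 of \cref{clm:tmis} applied to the hyperwalk graph at level $j{+}1$, whose max degree is $\tDelta = \Delta^{O((1/\epsilon)\log(1/\epsilon))}$ by \cref{clm:hyperwalk-max-degree}: for any fixed hyperwalk $W$, the expected number of recursive $\TMIS$ visits to $W$ when $\TMIS$ is invoked from each hyperwalk once is $O(\tDelta)$. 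Each level-$(j{+}1)$ call $\isInMatching(\cdot, e^*, j{+}1)$ starts $\TMIS$ on the $\le \tDelta$ hyperwalks containing $e^*$ (by \cref{clm:paths-containing-edge}), so aggregating across all executions the expected number of $\isInMIS(\cdot, W, j{+}1)$ visits on a fixed $W$ is at most $O(\tDelta^{2}) \cdot \max_{e^*} M_{j+1}(e^*)$. Since at most $O(\tDelta)$ hyperwalks touch an endpoint of $e$, we obtain the recursion
\[
M_j(e) \;\le\; M_{j+1}(e) \;+\; \frac{\tDelta^{O(1)}}{\epsilon^{7}} \cdot \max_{e^*} M_{j+1}(e^*),
\]
which unrolls over $r = 1/\epsilon^9$ levels to $\Delta^{\poly(1/\epsilon)}$, matching \cref{clm:low-out-query}.

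The main obstacle is applying the $\TMIS$ in-query guarantee in this aggregated form. The bound in item~3 of \cref{clm:tmis} is stated for a fixed graph with one $\TMIS$ start per vertex, while in our setting the level-$(j{+}1)$ starts are random objects whose number and identity depend on all randomness used at levels deeper than $j{+}1$ (the sampled realizations $\mG_i$ and the $\TMIS$ permutations used there). The proposed fix is to condition on the random tapes used above level $j{+}1$, which freezes both the set of level-$(j{+}1)$ $\isInMatching$ invocations and the hyperwalk adjacency structure visible to $\TMIS$ at that level; the fresh randomness at level $j{+}1$ (the permutation used by $\TMIS$ plus the realizations $\mG_i$ sampled for the children calls) is independent of this conditioning, so the YYI-style in-query bound applies unchanged. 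Taking the outer expectation by linearity then produces the $O(\tDelta^2) \cdot \max_{e^*} M_{j+1}(e^*)$ factor cleanly. A secondary concern, namely that $\TMIS$ operates on the implicit hyperwalk graph rather than $G$, is handled by the fact that the YYI guarantee depends only on the max degree of the graph $\TMIS$ runs on, which is $\tDelta$ regardless.
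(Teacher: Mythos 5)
Your high-level plan is the same as the paper's (an induction over the $r = 1/\epsilon^9$ recursion levels, with the YoshidaYI $\TMIS$ in-query bound doing the work in each step), and you correctly identify the central obstacle: the YYI guarantee is for a \emph{fixed} graph with one start per vertex, while both the set of level-$(j{+}1)$ $\TMIS$ starts and the effective ``valid hyperwalk'' subgraph are random. However, the conditioning you propose does not actually resolve that obstacle.

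You condition only on the random tapes used \emph{above} level $j{+}1$ and claim this freezes ``the hyperwalk adjacency structure visible to $\TMIS$ at that level.'' It does not. The effective graph $\TMIS$ runs on at a level-$(j{+}1)$ invocation is the subgraph induced on the \emph{augmenting} (valid) hyperwalks, and whether a hyperwalk is augmenting depends on the matchings $M_0,\ldots,M_\alpha$ produced by the level-$j$ recursion, which in turn depend on (a) the realizations $\mG_1,\ldots,\mG_\alpha$ drawn \emph{at} level $j{+}1$ and (b) the $\TMIS$ permutations and realizations used at levels $j, j{-}1, \ldots, 1$. You explicitly treat (a) as fresh randomness on which you do not condition, and your conditioning does not touch (b) at all; so under your conditioning the validity structure is still random, and the ``YYI-style bound applies unchanged'' step is not justified as written.

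The paper's proof of \cref{clm:low-in-query} conditions differently: it fixes \emph{all} realizations at the outset, then fixes $\pi_{\leq j-1}$ (the permutations at all lower levels) in the inductive step, so that the validity structure at level $j$ is determined and only the level-$j$ permutation $\pi_j$ is fresh when the YYI bound is invoked. The paper also converts the expectation into a pure counting argument over $\Pi_{\leq r}$ (the set of all rank tuples), which neatly sidesteps the issue of whether the number of $\TMIS$ starts per hyperwalk is a fixed quantity or a random variable. Your expectation-based bookkeeping via $M_j(e)$ can likely be made to work, but it requires conditioning on everything except $\pi_{j+1}$ (realizations at all levels \emph{and} $\pi_{\leq j}$), applying YYI under that conditioning, and then lifting the conditioning using the uniformity of the $O(\tDelta)$ bound; as written, your proposal skips the parts of the conditioning that actually fix the graph and hence leaves a genuine gap.
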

\begin{proof}
    The randomness of the algorithm can be divided into two parts:
    the independently drawn realizations,
    and the ranks of the hyperwalks for $\isInMIS$.
    We condition on the realizations.
    That is, for the remainder of the proof we fix all the realizations drawn by the algorithm, and show that the bound holds even when the expectation is taken only over the ranks.
    Note that the number of in-queries $q^-(e)$ can be bounded above by the number of calls made to $\isInMatching(\cdot, e, 0)$ when $\isInMatching(G_p, e', r)$ is computed for every $e'$, where $r = 1/\epsilon^9$. We analyze this number.

    When computing $\isInMatching(G_p, \cdot, r)$, the calls to $\isInMIS(G_p, \cdot, r)$ utilize ranks on hyperwalks of $G$.
    Let $\pi_r$ denote these ranks.
    Note that ranks are defined even for invalid (non-augmenting) hyperwalks.
    On the next level of recursion, there are $\alpha + 1$ independent realizations. $\isInMIS$ runs on them independently, i.e.\ for $i \in \{0, 1, \ldots, \alpha\}$ it has a separate set of ranks which is used to eventually compute $M_i$. We use $\pi_{r-1}$ to denote \emph{all} these ranks.
    Similarly for $1 \leq j \leq r$, we use $\pi_{j}$ to denote the ranks of the hyperwalks for the $(\alpha+1)^{r-j}$ realizations of the corresponding level (there is no $\pi_0$ since on the bottom-most level $\isInMatching(\cdot, \cdot, 0)$ returns false without computing any augmenting hyperwalks), and $\pi_{\leq j}$ to denote the sequence $(\pi_1, \pi_2, \ldots, \pi_j)$.

    To bound $\E_\mB[q^-(e)]$, we use a counting argument while going over all possible random tapes (i.e.\ sets of ranks) as follows. 
    Let $\Pi_j$ (resp.\ $\Pi_{\leq j}$) be the set of possible values for $\pi_j$ (resp.\ $\pi_{\leq j}$).
    For $0 \leq j \leq r$ and $\pi_{\leq j} \in \Pi_{\leq j}$,
    consider going over all the edges $e'$ and the sets of ranks $\pi \in \Pi_{\leq r}$ such that the first $j$ entries are equal to $\pi_{\leq j}$, and computing $\isInMatching(G_p, e, r)$ with the given set of set of ranks $\pi$.
    Let $\qq_j(e, \pi_{\leq j})$ be the number of calls made to $\isInMatching(\cdot, e, j)$ as a result of this process. For brevity, we refer to this as querying $e$ with $\pi_{\leq j}$. For $j = 0$, $\pi_{\leq j}$ is simply a placeholder as there are no ranks for the bottom-most level, and $\qq_0(e, \pi_{\leq 0})$ is the number of queries made to $\isInMIS(\cdot, e, 0)$ while going over \emph{all} the sets of ranks in $\Pi_{\leq r}$.
    That is, $\E_\mB[q^-(e)] = \qq_0(e, \pi_{\leq 0}) / \card{\Pi_{\leq r}}$.

    We show by induction that for $0 \leq j \leq r$, it holds $$\qq_{j}(e, \pi_{\leq j}) \leq \left((\alpha + 1) \frac{2 \tDelta^3}{\epsilon}\right)^{r-j} \card{\Pi_{\leq r}},$$ for all edges $e$ and sets of ranks $\pi_{\leq j} \in \Pi_{\leq j}$.
    This holds by definition for $j = r$,
    i.e.\ for all $e$ and $\pi_{\leq r} \in \Pi_{\leq r}$, we have $\qq(e, \pi_{\leq r}) = 1$.
    Assuming the claim for any $1 \leq j \leq r$, we show it for $j - 1$.

    Fix any $\pi_{\leq j - 1} \in \Pi_{\leq j-1}$, iterate over all $\pi_j \in \Pi_j$, and add it to $\pi_{\leq j - 1}$. This results in some $\pi_{\leq j} \in \Pi_{\leq j}$ which shares the first $j-1$ entries with $\pi_{\leq j-1}$.
    Note that fixing $\pi_{\leq j - 1}$ fully defines the matchings of the bottom $j-1$ levels of recursion,
    in turn determining the validity of the hyperwalks on level $j$.
    Let $t \coloneq \left((\alpha + 1) \frac{2 \tDelta^3}{\epsilon}\right)^{r-j}$.
    By the induction hypothesis, for any edge $e$, it holds $\qq_j(e, \pi_{\leq j}) \leq t$. For simplicity, we assume $\isInMatching(\cdot, e, j)$ has been queried exactly $t$ times with $\pi_{\leq j}$. This can only increase the number queries made to the lower levels.
    
    As a result of the calls to $\isInMatching(\cdot, \cdot, j)$ with $\pi_{\leq j}$, for any hyperwalk $W$, there will be at most $\frac{2}{\epsilon} \cdot t$ direct calls to $\isInMIS(\cdot, W, j)$ with $\pi_{\leq j}$, i.e.\ $t$ calls from each of the $\frac{2}{\epsilon}$ edges in $W$.
    Now we bound the number of (direct or recursive) calls made to $\isInMIS(\cdot, W, j)$ with $\pi_{\leq j - 1}$, by iterating over all possible values of $\pi_j$.
    Recall that $\pi_j$ is the set of ranks used for emulating $\TMIS$ on the valid hyperwalks.
    Due to item \ref{item:tmis-expected-in-query} of \cref{clm:tmis},
    the expected number of queries made to each valid hyperwalk $W$ by $\TMIS$ is at most $\tDelta = \Delta^{O((1/\epsilon)\log(1/\epsilon))}$.
    Taking into account that $\isInMIS$ also has to go over the neighbors of valid hypewalks, we can conclude that the number of queries made to $\isInMIS(\cdot, W, j)$ with $\pi_{\leq j-1}$ is at most 
    $$\tDelta^2\card{\Pi_r} \cdot \frac{2}{\epsilon} \cdot t.$$

    Finally, each of these calls to $\isInMIS(\cdot, W, j)$, recursively calls $\isValid(\cdot, W, j)$, which in turn makes $(\alpha + 1)$ calls to $\isInMatching(\cdot, e, j-1)$ for each $e \in W$.
    Since each edge is contained in at most $\tDelta$ hyperwalks (\cref{clm:paths-containing-edge}),
    the total number of calls made to $\isInMatching(\cdot, e, j-1)$ with $\pi_{\leq j-1}$ (i.e.\ $\qq_{j-1}(e, \pi_{\leq j-1})$) is at most 
    \begin{align*}
    (\alpha + 1)\tDelta \cdot \tDelta^2\card{\Pi_r} \cdot \frac{2}{\epsilon} \cdot t
    &= \left((\alpha + 1) \frac{2 \tDelta^3}{\epsilon}\right) \card{\Pi_r} \card{\Pi_{>r}} \left((\alpha + 1) \frac{2 \tDelta^3}{\epsilon}\right)^{r-j}  \\
    &= \left((\alpha + 1) \frac{2 \tDelta^3}{\epsilon}\right)^{r-(j-1)} \card{\Pi_{>{r-1}}},
    \end{align*}
    which concludes the proof of the induction.

    Putting things together, for any edge $e$, we have
    \begin{align*}
        \E_{\mB}[q^-(e)] &= \qq_0(e, \pi_{\leq 0}) / \card{\Pi_{\leq r}} \\
        &\leq \left((\alpha + 1) \frac{2 \tDelta^3}{\epsilon}\right)^r \\
        &= \Delta^{\poly(1/\epsilon)}. \qedhere
    \end{align*}
\end{proof}

\begin{proof}[Proof of \cref{lem:stochastic-matching-lca}]
    The LCA $\mB$ is described in \Cref{alg:is-in-matching,alg:is-in-MIS,alg:is-valid}.
    Properties \ref{item:vertex-expectation} and \ref{item:matching-expectation} are due to \cref{clm:BDH-expectations}.
    Property \ref{item:correlated-sets} regarding correlated sets follows from combining the bound on the in-queries (\cref{clm:low-in-query}) and the out-queries (\cref{clm:low-out-query}) using $\cref{lem:correlated_ub}$:
    \begin{equation*}
    \E[\psi(v)] \leq Q^+ \cdot Q^- \leq \Delta^{\poly(1/\epsilon)}. \qedhere
    \end{equation*}
\end{proof}

\section*{Acknowledgments}
Part of this work was conducted while the authors were visiting the Simons Institute for the Theory of Computing as participants in the Sublinear Algorithms program.

\bibliographystyle{plainnat}

\bibliography{references}

\clearpage
\appendix
\section{Proof of \cref{lem:f}}
\label{sec:ref}
We restate \cref{lem:f}
\lemf*
\begin{proof}
First, we make some definitions. For every edge $e$, let $t_e$ be the fraction of matchings $\MM(\mG_1), \ldots, \MM(\mG_R)$ that contain $e$. That is,
$$
t_e = \frac{\card{\{1 \leq i \leq R \mid e \in \MM(\mG_i)\}}}{R}.
$$
Then, construct $f'_e$ by capping $t_e$ at $\frac{1}{\sqrt{\epsilon R}}$, and zeroing out the crucial edges.
That is,
$$
f'_e = \begin{cases}
    t_e & \text{if $e \in N$ and $t_e \leq \frac{1}{\sqrt{\epsilon R}}$, and} \\
    0 & \text{otherwise.}
\end{cases}
$$
Finally, construct $f$ by scaling down $f'$ and zeroing out all edges for which one of the endpoints $v$ has $f'_v$  larger than $q^N_v$, i.e.,
$$
f_{(u, v)} = \begin{cases}
    (1 - \epsilon) f'_{(u, v)} & \text{if $(1-\epsilon)f'_u \leq q^N_u$ and $(1-\epsilon)f'_v \leq q^N_v$, and} \\
    0 & \text{otherwise.}
\end{cases}
$$

\paragraph{Property 1:}
Recall that by definition, $t_e$ is the fraction of matchings $\MM(\mG_i)$ that include $e$. Therefore, for $e \notin H$, it holds $t_e = 0$. This implies $f_e = 0$ since $f_e \leq f'_e \leq t_e$.

\paragraph{Property 2:}
This property follows from the fact that for any vertex $v$, if $(1 - \epsilon) f'_v > q^N_v$, we let $f_e = 0$ for all edges adjacent to $v$.
Otherwise, we have $f_v = (1 - \epsilon)f'_v \leq q^N_v$.

\paragraph{Property 3:}
It holds $f'_e \leq \frac{1}{\sqrt{\epsilon R}}$ by definition. Hence, the property follows from $f_e \leq f'_e$.

\paragraph{Property 4:}
We have:
\begin{align}
    \E[t_e] &= \frac{1}{R} \E[\card{\{1 \leq i \leq R \mid e \in \MM(\mG_i)\}}] \notag \\
    &= \frac{1}{R} R \E[\bone_{e \in \MM(\mG_1)}] \label{eq:lem-f-linearity} \\
    &= q_e, \notag
\end{align}
where \eqref{eq:lem-f-linearity} follows from linearity of expectation and that $\MM(\mG_i)$ contains $e$ with probability $q_e$ for every $i$.
$\E[f_e] \leq q_e$ follows from $f_e \leq f'_e \leq t_e$.

\paragraph{Property 5:}
First, we give a lower bound for $\E[f'_e]$ for all edges $e \in N$.
It holds by definition:
\begin{equation}
\E[f'_e] = \E[t_e] - \Pr\left[t_e > \frac{1}{\sqrt{\epsilon R}}\right]
\E\left[t_e \; \Big\lvert \; t_e > \frac{1}{\sqrt{\epsilon R}}\right].
\label{eq:property5-fp}
\end{equation}
Since $t_e$ is always at most $1$, this implies
$$
\E[f'_e] \geq \E[t_e] - \Pr\left[t_e > \frac{1}{\sqrt{\epsilon R}}\right] = q_e - \Pr\left[t_e > \frac{1}{\sqrt{\epsilon R}}\right].
$$

To bound $\Pr\left[t_e > \frac{1}{\sqrt{\epsilon R}}\right]$. We use Chebyshev's inequality. Observe that $R t_e$ is a binomial variable with $R$ trials, and success probability $q_e$.
Therefore, 
$$
    \var(t_e) = \frac{1}{R^2} \var(\text{Binom}(R, q_e)) 
    = \frac{1}{R^2} \left(R  q_e  \left(1 - q_e\right)\right) \leq\frac{q_e}{R}.
$$
Consequently,
\begin{align}
\Pr\left[t_e \geq \frac{1}{\sqrt{\epsilon R}}\right]
&\leq \Pr\left[t_e \geq \frac{1}{\sqrt{2\epsilon R}} + \frac{1}{R} \right] \label{eq:property5-algebra}\\
&\leq \Pr\left[t_e - \E[t_e] \geq \frac{1}{\sqrt{2\epsilon R}} \right] \label{eq:property5-expectation-ub}\\
&\leq \frac{\var(t_e)}{1/2 \epsilon R} 
\notag\\
&\leq 2 \epsilon q_e, \notag
\end{align}
where \eqref{eq:property5-algebra} holds because $\frac{1}{\sqrt{2\epsilon R}} + \frac{1}{R} \leq \frac{1}{\sqrt{\epsilon R}}$, and \eqref{eq:property5-expectation-ub} follows from $q_e \leq \tau^- \leq \frac{1}{R}$ (recall $e \in N$).
Plugging this back into \eqref{eq:property5-fp}, one can obtain:
$$
\E[f'_e] \geq (1-2\epsilon)q_e.
$$

Finally, we bound $\E[\card{f}]$ for any vertex $v \in V$, as follows:
\begin{align}
\E[\card{f}]
&\geq (1-\epsilon)\E\left[ \sum_{e} f'_{e} - \sum_u f'_u \cdot \bone_{(1-\epsilon)f'_u > q^N_u}\right] \label{eq:property5-f-1}\\
& \geq (1-\epsilon)\left( (1-2\epsilon)q_N - \sum_u \Pr\left[(1-\epsilon)f'_u > q^N_u\right]\right). \label{eq:property5-f-2}
\end{align}
Here, \eqref{eq:property5-f-1} is true since for every edge $(u, v)$, $f'_{(u, v)}$ appears once in the first sum, and if $(1 - \epsilon)f'_u > q_u^N$ or $(1 - \epsilon)f'_v > q_u^N$, then it appears at least once in the second sum (hence, accounting for zeroing out $f_{(u, v)}$).
\eqref{eq:property5-f-2} follows from $\E[f'_e] \geq (1-2\epsilon)q_e$ and $f'_u \leq 1$.

To bound $\Pr[(1-\epsilon)f'_u \geq q_u^N]$, we first note
\begin{equation}
\Pr\left[(1-\epsilon)f'_u \geq q_u^N\right]
\leq \Pr\left[(1 - \epsilon) t^N_u \geq  q_u^N\right]
\leq \Pr\left[t^N_u \geq q_u^N + \epsilon\right].
\label{eq:property5-probs}
\end{equation}
Observe that $t^N_u$ is the fraction of matchings $\MM(\mG_i)$ that match $u$ using a non-crucial edge.
For each $i$, this happens with probability $q^N_u$.
Therefore, similar to the analysis for $t_e$, it follows:
$$
\E\left[t^N_u\right] = q^N_u \qquad \text{and} \qquad \var\left(t^N_u\right) \leq \frac{q^N_u}{R}.
$$
Applying Cheybshev's inequality, we get
$$
\Pr\left[t^N_u \geq q_u^N + \epsilon\right]
\leq \frac{q^N_u}{R \epsilon^2} \ll \epsilon^2.
$$
Plugging this back into \eqref{eq:property5-probs}, and then into \eqref{eq:property5-f-2}, we can conclude:
\begin{equation*}
\E[\card{f}] \geq (1 - 3\epsilon) q_N - \epsilon^2 n \geq q_N - 4\epsilon \opt,
\end{equation*}
and Property 5 follows from a rescaling of $\epsilon$.
\end{proof}

\end{document}